  \providecommand\BibTeX{{%
    \normalfont B\kern-0.5em{\scshape i\kern-0.25em b}\kern-0.8em\TeX}}}
\newtheorem{thm}{Theorem}[section]
\newtheorem*{prop*}{Proposition}
\newtheorem{prop}[thm]{Proposition}
\newtheorem{lem}[thm]{Lemma}
\newtheorem*{lem*}{Lemma}
\newtheorem*{thm*}{Theorem}
\newtheorem{cor}[thm]{Corollary}
\newtheorem{cor*}{Corollary}
\theoremstyle{remark}
\newtheorem*{rem*}{Remark}
\theoremstyle{definition}
\newtheorem{defi}[thm]{Definition}
\newtheorem*{defi*}{Definition}
\newtheorem{exa}[thm]{Example}
\definecolor{azure}{rgb}{0.94,1.00,1.00}
\definecolor{brown}{rgb}{.75,.25,.25}
\definecolor{cyan}{rgb}{0.25,0.88,0.82}
\definecolor{chocolate}{rgb}{0.82,0.41,0.12}
\definecolor{darkcyan}{rgb}{0.5,0,1}
\definecolor{darkgreen}{rgb}{0,0.39,0}
\definecolor{darkmagenta}{rgb}{0.5,0,0.5}
\definecolor{darkgoldenrod}{RGB}{184,134,11}
\definecolor{firebrick}{RGB}{175,25,25}
\definecolor{forestgreen}{rgb}{0.13,0.55,0.13}
\definecolor{goldenrod}{RGB}{218,165,32}
\definecolor{grey}{RGB}{128,128,128}
\definecolor{lightcyan}{rgb}{0.88,1.00,1.00}
\definecolor{lightpink}{rgb}{1.00,0.71,0.76}
\definecolor{myyellow}{RGB}{235,235,0}
\definecolor{lightyellow}{rgb}{1.00,1.00,0.88}
\definecolor{lightgoldenrod}{rgb}{0.83,0.97,0.51}
\definecolor{lightgoldenrodyellow}{rgb}{0.98,0.98,0.82}
\definecolor{lightskyblue}{rgb}{0.53,0.81,0.98}
\definecolor{moccasin}{rgb}{1.00,0.89,0.71}
\definecolor{magenta}{rgb}{1,0,1}
\definecolor{navyblue}{rgb}{0,0,0.5}
\definecolor{orange}{rgb}{1.0,0.65,0.0}
\definecolor{orangered}{rgb}{1.0,0.27,0.0}
\definecolor{palegreen}{rgb}{0.60,0.98,0.60}
\definecolor{powderblue}{rgb}{0.69,0.88,0.90}
\definecolor{purple}{rgb}{1,0.5,1}
\definecolor{royalblue}{RGB}{65,105,225}
\definecolor{mediumblue}{RGB}{0,0,205}
\definecolor{cornflowerblue}{RGB}{100,149,237}
\definecolor{springgreen}{rgb}{0.0,1.0,0.5}
\definecolor{turquoise}{rgb}{0.25,0.88,0.82}
\definecolor{snow}{rgb}{1.00,0.98,0.98}
\definecolor{tan}{rgb}{0.82,0.71,0.55}
\definecolor{red}{rgb}{1,0,0}
\definecolor{violetred}{RGB}{208,32,144}
\newcommand{\colorin}[1]{\textcolor{#1}}
\newcommand{\alert}{\colorin{red}}
\newcommand{\black}{\colorin{black}}
\newcommand{\chocolate}{\colorin{chocolate}}
\newcommand{\darkcyan}{\colorin{darkcyan}}
\newcommand{\forestgreen}{\colorin{forestgreen}}
\newcommand{\firebrick}{\colorin{firebrick}}
\newcommand{\magenta}{\colorin{magenta}}
\newcommand{\mediumblue}{\colorin{mediumblue}}
\newcommand{\royalblue}{\colorin{royalblue}}
\newcommand{\nb}{\nobreakdash}
\newcommand{\punc}[1]{\ensuremath{\hspace*{1.5pt}{#1}}}
\newcommand{\nf}{\normalfont}
\newcommand{\textnf}[1]{\text{\nf{#1}}}
\newcommand{\bs}[1]{\boldsymbol{#1}}
\newcommand{\funin}{\mathrel{:}}
\newcommand{\fap}[2]{{#1}(\hspace*{-0.5pt}{#2}\hspace*{-0.5pt})}
\newcommand{\bfap}[3]{{#1}({#2},\hspace*{0.5pt}{#3})}
\newcommand{\iap}[2]{#1_{#2}}
\newcommand{\bap}[2]{#1_{#2}}
\newcommand{\pap}[2]{#1^{#2}}
\newcommand{\bpap}[3]{#1_{#2}^{#3}}
\newcommand{\pbap}[3]{#1_{#3}^{#2}}
\newcommand{\sidrelon}{\iap{\textrm{\nf id}}}
\newcommand{\sproj}{\pi}
\newcommand{\sproji}{\iap{\sproj}}
\newcommand{\proji}[1]{\fap{\sproji{#1}}}
\newcommand{\sdefdby}{{:=}}
\newcommand{\defdby}{\mathrel{\sdefdby}}
\newcommand\tuple[1]{\langle #1 \rangle}
\newcommand\tuplespace{\hspace*{0.5pt}}
\newcommand\pair[2]{\tuple{#1, \tuplespace #2}}
\newcommand\triple[3]{\tuple{#1, \tuplespace #2, \tuplespace #3}}
\newcommand{\nat}{\mathbb{N}} 
\newcommand{\natplus}{\pap{\nat}{+}} 
\newcommand{\sgraphof}{\textit{graph}}
\newcommand{\graphof}{\fap{\sgraphof}}
\newcommand{\BNFor}{\:\mid\:}
\newcommand{\BNFdefdby}{\:{::=}\:}
\newcommand{\family}[2]{\setexp{#1}_{#2}}
\newcommand{\sectionto}[2]{({#2}\,{#1})}
\newcommand{\eqcl}[1]{\iap{\left[{#1}\right]}}
\newcommand{\ssynteq}{{\mediumblue{=}}} 
\newcommand{\synteq}{\mathrel{\ssynteq}} 
\newcommand{\sformeq}{{=}}
\newcommand{\formeq}{\mathrel{\sformeq}}
\newcommand{\sred}{{\to}}
\newcommand{\sredi}[1]{{\iap{\sred}{#1}}}
\newcommand{\seqrel}{{\pmb{=}}}
\newcommand{\sconvred}{{\leftarrow}}
\newcommand{\sredrsci}[1]{{\pbap{\leftrightarrow}{=}{#1}}}
\newcommand{\sredtc}{\sred^{+}}
\newcommand{\sredtci}[1]{{\iap{\sredtc}{#1}}}
\newcommand{\redtci}[1]{\mathrel{\sredtci{#1}}}
\newcommand{\sredrtc}{\sred^{*}}
\newcommand{\sredrtci}[1]{{\iap{\sredrtc}{#1}}}
\newcommand{\sredb}[1]{{\bap{\sred}{#1}}}
\newcommand{\sredbp}[2]{{\bpap{\sred}{#1}{#2}}}
\newcommand{\sconvredb}[1]{{\bap{\sconvred}{#1}}}
\newcommand{\scomprewrels}[2]{{#1}\cdot{#2}}
\newcommand{\comprewrels}[2]{\mathrel{\scomprewrels{#1}{#2}}}
\newcommand{\descsetexpmid}{\mathrel{\vert}}
\newcommand{\descsetexpbigmid}{\mathrel{\big\vert}}
\newcommand{\descsetexp}[2]{\left\{{#1}\descsetexpmid{#2}\right\}}
\newcommand{\descsetexpbig}[2]{\bigl\{{#1}\descsetexpbigmid{#2}\bigr\}}
\newcommand{\sdom}{\text{\nf dom}}
\newcommand{\domof}{\fap{\sdom}}
\newcommand{\sactdom}{\text{\nf dom}_{\text{\nf act}}}
\newcommand{\actdomof}{\fap{\sactdom}}
\newcommand{\sactcodom}{\text{\nf cod}_{\text{\nf act}}}
\newcommand{\actcodomof}{\fap{\sactcodom}}
\newcommand{\sran}{\text{\nf ran}}
\newcommand{\ranof}{\fap{\sran}}
\newcommand{\dashedmapsto}{\begin{tikzpicture}[scale=0.8]\draw[|->,densely dashed,thick] (0,0) to (0.62,0);\end{tikzpicture}}
\newcommand{\picarrowstart}{\raisebox{2pt}{\begin{tikzpicture}%
                                             \draw[<-,very thick,>=latex,chocolate,shorten <=2pt](0,0) -- ++ (180:{12pt});%
                                           \end{tikzpicture}}}
\newcommand{\pictermvert}{\begin{tikzpicture}%
                           \node[draw,chocolate,very thick,circle,minimum width=2.5pt,fill,inner sep=0pt,outer sep=2pt](v){};%
                           \draw[thick,chocolate] (v) circle (0.12cm);%
                         \end{tikzpicture}}
\newcommand{\sphifun}{\phi}
\newcommand{\phifun}{\fap{\sphifun}}
\newcommand{\sphifuni}{\iap{\sphifun}}
\newcommand{\scpfun}{{\textit{cp}}}
\newcommand{\scpfunon}[1]{\bap{\scpfun}{\hspace*{-0.25pt}#1}}
\newcommand{\cpfunon}[1]{\fap{\scpfunon{#1}}}
\newcommand{\sliftltfun}{\pap{\overline{\smash{\sphifun}\rule{0pt}{0.4\baselineskip}}}{{\scriptscriptstyle (\firstfloor)}}\hspace*{-2.5pt}}
\newcommand{\sliftltfunn}[1]{\pap{\overline{\smash{#1}\rule{0pt}{0.4\baselineskip}}}{{\scriptscriptstyle (\firstfloor)}}\hspace*{-2.5pt}}
\newcommand{\scompfuns}[2]{{#1}\circ{#2}}
\newcommand{\compfuns}[2]{\fap{\scompfuns{#1}{#2}}}
\newcommand{\sabinrel}{{R}}
\newcommand{\setexp}[1]{\left\{{#1}\right\}}
\renewcommand{\emptyset}{\varnothing}
\newcommand{\slogand}{\wedge}
\newcommand{\logand}{\mathrel{\slogand}}
\newcommand{\slognot}{\neg}
\newcommand{\lognot}[1]{\slognot{#1}}
\newcommand{\emptyword}{\epsilon}
\let\depth\relax  
\newcommand{\tightfbox}[1]{{\fboxsep=1.5pt\fbox{#1}}}
\newcommand{\backlink}{backlink}
\newcommand{\emptystep}{emp\-ty-step}
\newcommand{\maximalwrt}[1]{${#1}$\nb-max\-i\-mal}
\newcommand{\entrytransition}{en\-try tran\-si\-tion}
\newcommand{\entrytransitions}{\entrytransition{s}}
\newcommand{\reducedwrt}[1]{\mbox{${#1\hspace*{0.5pt}}$}\nb-re\-duced}
\newcommand{\loopentry}{loop-en\-try}
\newcommand{\Loopentry}{Loop-en\-try}
\newcommand{\txtloopsbackto}{loops-back-to}
\newcommand{\loopentrytransition}{loop-entry tran\-si\-tion}
\newcommand{\loopentrytransitions}{\loopentrytransition{s}}
\newcommand{\entrybodylabeling}{en\-try\discretionary{/}{}{/}body-la\-be\-ling}
\newcommand{\LLEEpreserving}{\LLEE\nb-pre\-serving}
\newcommand{\LLEEpreservingly}{\LLEE\nb-pre\-servingly}
\newcommand{\LLEEPreserving}{\LLEE\nb-Pre\-serving}
\newcommand{\LLEEwitness}{{\protect\nf LLEE}-wit\-ness}
\newcommand{\LLEEwitnesses}{{\protect\nf LLEE}-wit\-nesses}
\newcommand{\LLEEchart}{{\nf LLEE}\nb-chart}
\newcommand{\LLEEcharts}{{\nf LLEE}\nb-chart{s}}
\newcommand{\LLEEonechart}{{\nf LLEE}\hspace*{1pt}-\onechart}
\newcommand{\LLEEonecharts}{{\nf LLEE}\hspace*{1pt}-\onechart{s}}
\newcommand{\LLEEoneLTS}{{\nf LLEE}\hspace*{1pt}\nb-\oneLTS}
\newcommand{\LLEEoneLTSs}{{\nf LLEE}\hspace*{1pt}-\oneLTS{s}}
\newcommand{\LLEEonelim}{\text{\nf LLEE--$\sone$-lim}}
\newcommand{\localtransfer}{lo\-cal-trans\-fer}
\newcommand{\nonempty}{non-emp\-ty}
\newcommand{\nontrivial}{non-triv\-i\-al}
\newcommand{\onetransition}{$\sone$\nb-tran\-si\-tion}
\newcommand{\onetransitions}{\oneTransition{s}}
\newcommand{\oneTransition}{$\sone$\nb-tran\-si\-tion}
\newcommand{\onefree}{$\sone$\nb-free}
\newcommand{\stexponefree}{$\stexpone$\nb-free}
\newcommand{\onechart}{$\sone$\nb-chart}
\newcommand{\onecharts}{\onechart{s}}
\newcommand{\onecollapsible}{$\sone$\nb-col\-laps\-ible}
\newcommand{\onebisimulation}{$\sone$\nb-bi\-si\-mu\-la\-tion}
\newcommand{\onebisimulations}{\onebisimulation{s}}
\newcommand{\onebisimilar}{$\sone$\nb-bi\-si\-mi\-lar}
\newcommand{\oneBisimilar}{$\sone$\nb-Bi\-si\-mi\-lar}
\newcommand{\onebisimilarity}{$\sone$\nb-bi\-si\-mi\-la\-ri\-ty}
\newcommand{\onebisimulating}{$\sone$\nb-bi\-si\-mu\-la\-ting}
\newcommand{\onecollapsed}{$\sone$\nb-col\-lapsed}
\newcommand{\perpetual}{per\-pet\-u\-al}
\newcommand{\perpetualloop}{\perpetual-loop}
\newcommand{\precrystalline}{pre\-crys\-tal\-line}
\newcommand{\properaction}{prop\-er-ac\-tion}
\newcommand{\provablein}[1]{{$#1$}\nb-pro\-vable}
\newcommand{\Provablein}[1]{{$#1$}\nb-Pro\-vable}
\newcommand{\provablyin}[1]{{$#1$}\nb-pro\-vably}
\newcommand{\nearcollapse}{near-col\-lapse}
\newcommand{\Nearcollapse}{Near-Col\-lapse}
\newcommand{\nearcollapsed}{\nearcollapse{d}}
\newcommand{\Nearcollapsed}{\Nearcollapse{d}}
\newcommand{\reflexivesymmetric}{re\-flex\-ive-sym\-met\-ric}
\newcommand{\selfinverse}{self-in\-verse}
\newcommand{\subonechart}{sub-$\sone$\nb-chart}
\newcommand{\subonecharts}{\subonechart{s}}
\newcommand{\completewrt}[1]{${#1}$\nb-com\-plete}
\newcommand{\Completewrt}[1]{${#1}$\nb-Com\-plete}
\newcommand{\transitionclosed}{tran\-si\-tion-closed}
\newcommand{\txtconnectthrough}{con\-nect-through}
\newcommand{\transitionact}[1]{{${#1}$}\nb-tran\-si\-tion}
\newcommand{\twincrystal}{twin-crys\-tal}
\newcommand{\twincrystals}{\twincrystal{s}}
\newcommand{\Twincrystal}{Twin-Crys\-tal}
\newcommand{\Twincrystals}{\Twincrystal{s}}
\newcommand{\LTS}{LTS}
\newcommand{\oneLTS}{$\sone$\nb-\LTS}
\newcommand{\oneLTSs}{\oneLTS{s}}
\newcommand{\aLTS}{\mathcal{L}}
\newcommand{\aoneLTS}{\underline{\smash{\aLTS}}\rule{0pt}{0.575\baselineskip}}
\newcommand{\aoneLTSi}{\iap{\aoneLTS}}
\newcommand{\oneLTSof}{\fap{\aoneLTS}}
  \newcommand{\subscriptindtrans}{\mediumblue{\scriptscriptstyle\pmb{\otind{\cdot}}}}
\newcommand{\subLTS}{sub\nb-\LTS}
\newcommand{\suboneLTS}{sub\nb-\oneLTS}
\newcommand{\scc}{scc}
\newcommand{\sccs}{\scc's}
\newcommand{\thickbar}[1]{\mathbf{\bar{\text{$#1$}}}}
\newcommand{\overlinebar}[1]{\mathbf{\overline{\text{$#1$}}}}
\newcommand{\aspec}{\mathcal{S}}
\newcommand{\aspeci}{\iap{\aspec}}
\newcommand{\astexp}{e}
\newcommand{\bstexp}{f}
\newcommand{\astexpi}{\iap{\astexp}}
\newcommand{\astexpacc}{\astexp'}
\newcommand{\astexpacci}{\iap{\astexpacc}}
\newcommand{\StExp}{\textit{StExp}}
\newcommand{\StExpover}{\fap{\StExp}}
\newcommand{\stexpzero}{0}
\newcommand{\stexpone}{1}
\newcommand{\sstexpit}{\sstar}
\newcommand{\stexpit}[1]{{#1^{\sstexpit}}}
\newcommand{\sstexpprod}{{\cdot}}
\newcommand{\stexpprod}[2]{{#1}\mathrel{\sstexpprod}{#2}}
\newcommand{\sstexpsum}{+}
\newcommand{\stexpsum}[2]{{#1}\sstexpsum{#2}}
\newcommand{\sone}{\firebrick{1}}
\newcommand{\sstar}{*}
\renewcommand{\prod}{\mathrel{\cdot}}
\newcommand{\sstexpbit}{\circledast} 
\newcommand{\stexpbit}[2]{{#1}\hspace*{0.35pt}\pap{}{\sstexpbit}\hspace*{-0.6pt}{#2}}
\newcommand{\looplab}[1]{{\darkcyan{[#1]}}}
\newcommand{\loopsteplab}{\looplab}
\newcommand{\loopnsteplab}[1]{\darkcyan{[{#1}]}}
\newcommand{\bodylabcol}{\text{\nf\darkcyan{bo}}}
\newcommand{\bodylab}{\bodylabcol}
\newcommand{\bodytransition}{body tran\-si\-tion}
\newcommand{\sprocsem}{P}
\newcommand{\procsem}[1]{\llbracket{#1}\rrbracket_{\sprocsem}}
\newcommand{\slt}[1]{{\xrightarrow{#1}}}
\newcommand{\lt}[1]{\mathrel{\slt{#1}}}
\newcommand{\sterminates}{{\downarrow}}
\newcommand{\terminates}[1]{{#1}{\sterminates}}
\newcommand{\snotterminates}{\ndownarrow}  
\newcommand{\notterminates}[1]{{#1}{\snotterminates}}
\newcommand{\soneterminates}{{\pap{\downarrow}{\hspace*{-1.5pt}\onebrackscript}}}
\newcommand{\oneterminates}[1]{{#1}{\soneterminates}}
\newcommand{\soneterminatesi}[1]{{\bpap{\downarrow}{#1}{\hspace*{-1.5pt}\onebrackscript}}}
\newcommand{\oneterminatesi}[2]{{#2}{\soneterminatesi{#1}}}
\newcommand{\achart}{\mathcal{C}}
\newcommand{\acharti}{\iap{\achart}}
\newcommand{\aonechart}{\underline{\smash{\mathcal{C}}}\rule{0pt}{0.575\baselineskip}}
\newcommand{\aonecharti}{\iap{\aonechart}}
\newcommand{\bonechart}{\underline{\smash{\mathcal{D}}}\rule{0pt}{0.575\baselineskip}}%
\newcommand{\bonecharti}{\iap{\bonechart}}
\newcommand{\conechart}{\underline{\smash{\mathcal{E}}}\rule{0pt}{0.575\baselineskip}}%
\newcommand{\aonechartacc}{\underline{\smash{\mathcal{C}}\hspace*{-0.4pt}'}}
\newcommand{\bonechartacc}{\underline{\mathcal{D}}\hspace*{0pt}'}
\newcommand{\aonechartacci}{\iap{\aonechartacc}}
\newcommand{\bonechartacci}{\iap{\bonechartacc}}
\newcommand{\aonechartdacc}{\underline{\mathcal{C}}''}
\newcommand{\aonechartdacci}{\iap{\aonechartdacc}}
\newcommand{\aonecharthat}{\hspace*{0.2pt}\Hat{\hspace*{-0.75pt}\aonechart}\hspace*{-0.75pt}} 
\newcommand{\aonecharthati}[1]{\hspace*{0.2pt}\iap{\Hat{\hspace*{-0.75pt}\aonechart}}{#1}\hspace*{-0.75pt}}
\newcommand{\aonecharthatip}[2]{\hspace*{0.2pt}\bpap{\Hat{\hspace*{-0.75pt}\aonechart}}{#1}{#2}\hspace*{-0.75pt}}
\newcommand{\bonecharthati}[1]{\hspace*{0.2pt}\iap{\Hat{\hspace*{-0.75pt}\bonechart}}{#1}\hspace*{-0.75pt}} 
\newcommand{\conecharthati}[1]{\hspace*{0.2pt}\iap{\Hat{\hspace*{-0.75pt}\conechart}}{#1}\hspace*{-0.75pt}} 
\newcommand{\aonecharthatacc}{\hspace*{0.2pt}\aonecharthat\hspace*{0.4pt}'\hspace*{-0.75pt}}
\newcommand{\aonecharthatacci}[1]{\hspace*{0.2pt}\aonecharthat\hspace*{0.4pt}'_{#1}\hspace*{-0.75pt}}
\newcommand{\aonecharthatdacci}[1]{\hspace*{0.2pt}\aonecharthat\hspace*{0.4pt}''_{#1}\hspace*{-0.75pt}}
\newcommand{\aonechartsubscript}{\underline{\smash{\mathcal{C}}}\rule{0pt}{5pt}}
\newcommand{\bonechartsubscript}{\underline{\smash{\mathcal{D}}}\rule{0pt}{5pt}}
\newcommand{\bonechartaccsubscript}{\underline{\smash{\mathcal{D}'}}\rule{0pt}{5pt}}
\newcommand{\aonecharthatsubscript}{\hat{\aonechartsubscript}} 
\newcommand{\bonecharthatsubscript}{\widehat{\bonechartsubscript}} 
\newcommand{\bonecharthataccsubscript}{\widehat{\bonechartaccsubscript}} 
\newcommand{\bonecharthatsubscripti}[1]{\widehat{\bonechartsubscript}_{#1}} 
\newcommand{\bonecharthataccsubscripti}[1]{\widehat{\bonechartsubscript}{}'_{#1}} 
\newcommand{\chartof}{\fap{\achart}}
\newcommand{\indchart}[1]{\iap{{#1}}{\subscriptindtrans}}
\newcommand{\indchartof}{\indchart}
\newcommand{\onechartof}{\fap{\aonechart}}
\newcommand{\selevation}[1]{\iap{\text{\nf\sf E}}{\hspace*{-0.25pt}#1\hspace*{-0.5pt}}}
\newcommand{\elevationofabove}[1]{\fap{\selevation{#1}}}
\newcommand{\selevatechart}[1]{\iap{\text{\nf\sf E}}{\hspace*{-0.25pt}#1\hspace*{-0.5pt}}}
\newcommand{\elevatechart}[1]{\fap{\selevatechart{#1}}}
\newcommand{\groundfloor}{\bs{0}}
\newcommand{\firstfloor}{\bs{1}}
\newcommand{\aloop}{\mathcal{LC}}
\newcommand{\aoneloop}{\underline{\smash{\aloop}}}
\newcommand{\indsubonechartinat}[1]{\fap{\aonecharti{#1}}}
\newcommand{\otind}[1]{({#1}]}
\newcommand{\iact}[1]{\mediumblue{(}\hspace*{-0.5pt}{#1}\hspace*{-0.25pt}\mediumblue{]}}
\newcommand{\silt}[1]{\slt{\iact{#1}}}
\newcommand{\silti}[2]{{\xrightarrow{\iact{#1}}}{_{#2}}}
\newcommand{\ilt}[1]{\mathrel{\silt{#1}}}
\newcommand{\ilti}[2]{\mathrel{\silti{#1}{#2}}}
\newcommand{\indtranss}{{\silt{\cdot}}}
\newcommand{\sasol}{s}
\newcommand{\sasoli}{\iap{\sasol}}
\newcommand{\asol}{\fap{\sasol}}
\newcommand{\asoli}[1]{\fap{\iap{\sasol}{#1}}}
\newcommand{\sasoltildei}[1]{\tilde{\sasol}_{#1}}
\newcommand{\sextrsol}{\sasol}
\newcommand{\sextrsolof}{\iap{\sextrsol}}
\newcommand{\sterminatesconst}{\tau}
\newcommand{\sterminatesconstof}{\iap{\sterminatesconst}}
\newcommand{\terminatesconstof}[1]{\fap{\sterminatesconstof{#1}}}
\newcommand{\actions}{A}
\newcommand{\oneactions}{\underline{\actions}}
\newcommand{\aact}{a}
\newcommand{\bact}{b}
\newcommand{\cact}{c}
\newcommand{\aacti}{\iap{\aact}}
\newcommand{\cacti}{\iap{\cact}}
\newcommand{\aoneact}{\firebrick{\underline{a}}}
\newcommand{\aoneacti}[1]{\firebrick{\iap{\aoneact}{#1}}}
\newcommand{\verts}{V}
\newcommand{\start}{\averti{\hspace*{-0.5pt}\text{\nf s}}}
\newcommand{\transs}{{\sred}}
\newcommand{\exts}{{\sterminates}}
\newcommand{\termexts}{{\sterminates}}
\newcommand{\states}{V}
\newcommand{\alab}{\darkcyan{l}}
\newcommand{\alabi}[1]{\iap{\alab}{\darkcyan{#1}}}
\newcommand{\vertsof}{\fap{\verts\hspace*{-1pt}}}
\newcommand{\vertsi}[1]{\iap{\verts}{\hspace*{-0.25pt}{#1}}}
\newcommand{\statesi}{\iap{\states}}
\newcommand{\starti}[1]{\averti{\text{\nf s},#1}}
\newcommand{\transsi}[1]{\iap{\transs}{\hspace*{-0.25pt}{#1}}}
\newcommand{\termextsi}[1]{\iap{\termexts}{{#1}}}
\newcommand{\asetverts}{W\hspace*{-1pt}}
\newcommand{\asetvertsi}[1]{\iap{\asetverts}{\hspace*{-0.75pt}#1}}
\newcommand{\asettranss}{U}
\newcommand{\csetverts}{P}
\newcommand{\csetvertsi}{\iap{\csetverts}}
\newcommand{\astate}{v}
\newcommand{\bstate}{w}
\newcommand{\astatei}{\iap{\astate}}
\newcommand{\bstatei}{\iap{\bstate}}
\newcommand{\astateacc}{\astate'}
\newcommand{\bstateacc}{\bstate'}
\newcommand{\astateacci}{\iap{\astateacc}}
\newcommand{\bstateacci}{\iap{\bstateacc}}
\newcommand{\avert}{v}
\newcommand{\bvert}{w}
\newcommand{\cvert}{u}
\newcommand{\dvert}{x}
\newcommand{\averti}{\iap{\avert}}
\newcommand{\bverti}{\iap{\bvert}}
\newcommand{\cverti}{\iap{\cvert}}
\newcommand{\dverti}{\iap{\dvert}}
\newcommand{\bvertacc}{\bvert'}
\newcommand{\cvertacc}{\cvert'}
\newcommand{\bvertacci}{\iap{\bvertacc}}
\newcommand{\cvertacci}{\iap{\cvertacc}}
\newcommand{\averttilde}{\tilde{\avert}}
\newcommand{\avertbar}{\overlinebar{\avert}}
\newcommand{\bvertbar}{\overlinebar{\bvert}}
\newcommand{\cvertbar}{\overlinebar{\cvert}}
\newcommand{\dvertbar}{\overlinebar{\dvert}}
\newcommand{\cvertbaracc}{\thickbar{\cvert}'}
\newcommand{\avertbari}{\iap{\avertbar}}
\newcommand{\bvertbari}{\iap{\bvertbar}}
\newcommand{\cvertbari}{\iap{\cvertbar}}
\newcommand{\dvertbari}{\iap{\dvertbar}}
\newcommand{\cvertbaracci}{\iap{\cvertbaracc}}
\newcommand{\atrans}{\tau}
\newcommand{\sstransitions}{T\hspace*{-2.25pt}r}
\newcommand{\transitionsinfrom}[1]{\fap{\iap{\sstransitions}{#1}}}
\newcommand{\gensubchartofby}[2]{{#1}{\downarrow}^{\hspace*{-0.75pt}#2}_{\hspace*{-0.5pt}*}}
\newcommand{\connectthrough}[2]{con\-nect-{$#1$}-through-to-{$#2$}}
\newcommand{\connectthroughonechart}[2]{con\-nect-{$#1$}-through-to-{$#2$} \onechart}
\newcommand{\onetermexts}{\soneterminates}
\newcommand{\sfunbisim}{%
    \setbox0=\hbox{\kern-.1ex{$\rightarrow$}\kern-.1ex}
    \setbox1=\vbox{\hbox{\raise .1ex \box0}\hrule}%
    {\hbox{\kern.05ex\box1\kern.1ex}}
  }
\newcommand{\funbisim}{\hspace*{-0pt}\mathrel{\sfunbisim}}
\newcommand{\sshaftfunbisim}{
    \setbox0=\hbox{\kern-.1ex{{---}}\kern-.1ex}
    \setbox1=\vbox{\hbox{\raise .1ex \box0}\hrule}%
    {\hbox{\kern.05ex\box1\kern.1ex}}
  }
\newcommand{\sconvfunbisim}{%
    \setbox0=\hbox{\kern-.1ex{$\leftarrow$}\kern-.1ex}
    \setbox1=\vbox{\hbox{\raise .1ex \box0}\hrule}%
    {\hbox{\kern.05ex\box1\kern.1ex}}
  }
\newcommand{\sbisim}{%
    \setbox0=\hbox{\kern-.1ex{$\leftrightarrow$}\kern-.1ex}
    \setbox1=\vbox{\hbox{\raise .1ex \box0}\hrule}%
    \hbox{\kern.1ex\box1\kern.1ex}
  }
\newcommand{\bisim}{\mathrel{\sbisim\hspace*{1pt}}}
\newcommand{\sbisimsubscript}{\scalebox{0.75}{$\sbisim$}}
\newcommand{\sfunonebisim}{%
    \setbox0=\hbox{\kern-.1ex{\magenta{$\rightarrow$}}\kern-.1ex}
    \setbox1=\vbox{\hbox{\raise .1ex \box0}\hrule}%
    \ensuremath{\magenta{\hbox{\kern.05ex\box1\kern.1ex}}}
  }
\newcommand{\funonebisim}{\mathrel{\sfunonebisim}}
\newcommand{\sconvfunonebisim}{%
    \setbox0=\hbox{\kern-.1ex{$\leftarrow$}\kern-.1ex}
    \setbox1=\vbox{\hbox{\raise .1ex \box0}\hrule}%
    \ensuremath{\magenta{\hbox{\kern.05ex\box1\kern.1ex}}}
  }
\newcommand{\convfunonebisim}{\mathrel{\sconvfunonebisim}}
\newcommand{\sonebisim}{%
    \setbox0=\hbox{\kern-.1ex{$\leftrightarrow$}\kern-.1ex}
    \setbox1=\vbox{\hbox{\raise .1ex \box0}\hrule}%
    \ensuremath{\magenta{\hbox{\kern.1ex\box1\kern.1ex}}}
  }
\newcommand{\onebisim}{\mathrel{\sonebisim}}
    \newcommand{\sfunbisimos}{%
        \setbox0=\hbox{\kern-.1ex{$\rightarrow$}\kern-.1ex}
        \setbox1=\vbox{\hbox{\raise .1ex \box0}\hrule}%
        {\pap{\hbox{\kern.05ex\box1\kern.1ex}}{\hspace*{0.5pt}\subotr}}
      }
    \newcommand{\sconvfunbisimos}{%
        \setbox0=\hbox{\kern-.1ex{$\leftarrow$}\kern-.1ex}
        \setbox1=\vbox{\hbox{\raise .1ex \box0}\hrule}%
        {\pap{\hbox{\kern.05ex\box1\kern.1ex}}{\hspace*{0.5pt}\subotr}}
      }
    \newcommand{\sbisimos}{%
        \setbox0=\hbox{\kern-.1ex{$\leftrightarrow$}\kern-.1ex}
        \setbox1=\vbox{\hbox{\raise .1ex \box0}\hrule}%
        \ensuremath{\pap{\mathrel{\hbox{\kern.1ex\box1\kern.1ex}}}{\hspace*{0.5pt}\subotr}}
      }
\newcommand{\sfunonebisimvia}[1]{%
    \setbox0=\hbox{\kern-.1ex{$\rightarrow$}\kern-.1ex}
    \setbox1=\vbox{\hbox{\raise .1ex \box0}\hrule}%
    {\bap{\magenta{\hbox{\kern.05ex\box1\kern.1ex}}}{#1}}
  }
\newcommand{\sconvfunonebisimvia}[1]{%
    \setbox0=\hbox{\kern-.1ex{$\leftarrow$}\kern-.1ex}
    \setbox1=\vbox{\hbox{\raise .1ex \box0}\hrule}%
    {\bap{\magenta{\hbox{\kern.05ex\box1\kern.1ex}}}{#1}}
  }
\newcommand{\sonebisimvia}[1]{%
    \setbox0=\hbox{\kern-.1ex{$\leftrightarrow$}\kern-.1ex}
    \setbox1=\vbox{\hbox{\raise .1ex \box0}\hrule}%
    {\bap{\magenta{\hbox{\kern.05ex\box1\kern.1ex}}}{\hspace*{-1.5pt}#1}}
  }
\newcommand{\onebisimvia}[1]{\mathrel{\sonebisimvia{#1}}}  
\newcommand{\sonebisimi}{\sonebisimvia}
\newcommand{\sonebisimon}{\sonebisimi}
\newcommand{\onebisimon}[1]{\mathrel{\sonebisimon{#1}}}
\newcommand{\sisonesubstateof}{{\magenta{\sqsubseteq}}} 
\newcommand{\isonesubstateof}{\mathrel{\sisonesubstateof}}
\newcommand{\sisonesubstateofin}[1]{{\bap{\magenta{\sqsubseteq}}{#1}}}
\newcommand{\isonesubstateofin}[1]{\mathrel{\sisonesubstateofin{#1}}}
\newcommand{\abisim}{B}
\newcommand{\abisimdoublebar}{\abisim^{=}} 
\newcommand{\abisimslice}{B}
\newcommand{\abisimchart}{\mathcal{B}}
\newcommand{\sbehinc}{{\sqsubseteq}}
\newcommand{\sbehinca}[1]{{\prescript{#1}{}{\sbehinc}}}
\newcommand{\behinca}[1]{\mathrel{\sbehinca}}
\newcommand{\sonebehinc}{{\pap{\sbehinc}{\subotr}}}
\newcommand{\sonebehinca}[1]{{{}_{#1}\sonebehinc}}
\newcommand{\onebehinca}[1]{\mathrel{\sonebehinca}}
\newcommand{\sRSP}{\textrm{\nf RSP}}
\newcommand{\RSPstar}{\text{$\sRSP^{*}\hspace*{-1pt}$}}
\newcommand{\sUSP}{\textrm{\nf USP}}
\newcommand{\USP}{\sUSP}
\newcommand{\assocstexpsum}{\fap{\text{\nf assoc}}{\sstexpsum}}
\newcommand{\assocstexpprod}{\fap{\text{\nf assoc}}{\sstexpprod}}
\newcommand{\commstexpsum}{\fap{\text{\nf comm}}{\sstexpsum}}
\newcommand{\neutralstexpsum}{\fap{\text{\nf neutr}}{\sstexpsum}}  
\newcommand{\idempotstexpsum}{\fap{\text{\nf idempot}}{\sstexpsum}}
\newcommand{\rightdistr}{\fap{\text{\nf r-distr}}{\sstexpsum,\sstexpprod}}
\newcommand{\rightidstexpprod}{\fap{\text{\nf id}_{\text{\nf r}}}{\sstexpprod}}
\newcommand{\leftidstexpprod}{\fap{\text{\nf id}_{\text{\nf l}}}{\sstexpprod}}
\newcommand{\recdefstexpit}{\fap{\text{\nf rec}}{{}^{*}}}
\newcommand{\termbodystexpit}{\fap{\text{\nf trm-body}}{{}^{*}}}
\newcommand{\deadlockax}{\text{$\stexpzero$}}
\renewcommand{\prod}{\mathrel{\cdot}}
\renewcommand{\assocstexpsum}{\textnf{(A1)}}
\renewcommand{\neutralstexpsum}{\textnf{(A2)}}
\renewcommand{\commstexpsum}{\textnf{(A3)}}
\renewcommand{\idempotstexpsum}{\textnf{(A4)}}
\renewcommand{\assocstexpprod}{\textnf{(A5)}}
\renewcommand{\rightdistr}{\textnf{(A6)}}
\renewcommand{\leftidstexpprod}{\textnf{(A7)}}
\renewcommand{\rightidstexpprod}{\textnf{(A8)}}
\renewcommand{\deadlockax}{\textnf{(A9)}}
\renewcommand{\recdefstexpit}{\textnf{(A10)}}
\renewcommand{\termbodystexpit}{\textnf{(A11)}}
\newcommand{\wg}{w.g.}
\newcommand{\weaklyguarded}{weak\-ly guard\-ed}
\newcommand{\sloopsbackinloopltortc}[1]{{\pap{}{#1}{\lefttorightarrow^{*}}}}
\newcommand{\sloopsbackto}{{\raisebox{0.8\depth}{\scalebox{1}[-1]{$\lefttorightarrow$}}}}  
\newcommand{\loopsbackto}{\mathrel{\sloopsbackto}}
\newcommand{\sloopsbacktortc}{{\sloopsbackto^{*}}}
\newcommand{\loopsbacktortc}{\mathrel{\sloopsbacktortc{\!}}}
\newcommand{\sconvloopsbackto}{{\righttoleftarrow}} 
\newcommand{\sconvloopsbacktortc}{{\pap{\sconvloopsbackto}{{\!}*}}}
\newcommand{\convloopsbacktortc}{\mathrel{\sconvloopsbacktortc{\!}}}
\newcommand{\onescriptbs}{\scalebox{0.75}{$\scriptstyle\bs{1}$}}
\newcommand{\onebrackscript}{\scalebox{0.75}{$\firebrick{\scriptstyle (1)}$}}
\newcommand{\sdloopsbackto}{{\!}\prescript{}{\textit{d}}{\hspace*{-0.75pt}\sloopsbackto}}  
\newcommand{\dloopsbackto}{\mathrel{\sdloopsbackto}}
\newcommand{\sconvdloopsbackto}{{\!}\prescript{}{\textit{d}}{\hspace*{-0.75pt}{\sconvloopsbackto}}} 
\newcommand{\convdloopsbackto}{\mathrel{\sconvdloopsbackto}}
  \newcommand{\subotr}{\hspace*{-1pt}{\scriptscriptstyle (\sone)}}
\newcommand{\sotelimred}{\sredb{\subotr}}
\newcommand{\otelimred}{\mathrel{\sotelimred}}
\newcommand{\sotelimredtc}{\sredbp{\subotr}{+}}
\newcommand{\sotelimconvred}{\sconvredb{\subotr}}
\newcommand{\Fone}{\text{$\bap{\text{\nf\bf F}}{\hspace*{-1pt}\bs{1}}$}}
\newcommand{\milnersys}{\text{$\text{\sf Mil}$}}
\newcommand{\milnersysmin}{\text{\smash{$\text{\sf Mil}^{\boldsymbol{-}}$}}}
\newcommand{\BBP}{\text{$\text{\sf BBP}$}}
\newcommand{\asys}{\mathcal{S}}
\newcommand{\seqin}[1]{{\iap{=}{#1}\hspace*{1pt}}}
\newcommand{\eqin}[1]{\mathrel{\seqin{#1}}}
\newcommand{\milnersyseq}{\eqin{\milnersys}}
\newcommand{\BBPeq}{\eqin{\BBP}}
\newcommand{\saTSS}{{\mathcal{T}}}
\newcommand{\StExpTSS}{\text{$\saTSS$}}
\newcommand{\StExpTSSover}[1]{\text{$\fap{\StExpTSS}{#1}$}}
\newcommand{\sLEE}{\text{\nf LEE}}
\newcommand{\sLLEE}{\text{\nf LLEE}}
\newcommand{\LEE}{\sLEE}
\newcommand{\LLEE}{\sLLEE}
\newcommand{\pivot}{\textit{piv}} 
\newcommand{\pivottwinc}{\textit{piv}} 
\newcommand{\pivottwincpb}[1]{\pap{\pivottwinc}{({#1})}}
\newcommand{\toptwinc}{\textit{top}} 
\newcommand{\toptwincpb}[1]{\pap{\toptwinc}{({#1})}}
\newcommand{\twinc}{\csetverts}
\newcommand{\twincpivot}{\csetvertsi{1}}
\newcommand{\twinctop}{\csetvertsi{2}}
\newcommand{\twincambig}{\textit{A\hspace*{-1pt}m}}
\newcommand{\twincambigtop}{\iap{\twincambig}{2}}
\newcommand{\twincambigpivot}{\iap{\twincambig}{1}}
\newcommand{\twincunambig}{\textit{U\hspace*{-0.5pt}n}}
\newcommand{\twincunambigtop}{\iap{\twincunambig}{2}}
\newcommand{\twincunambigpivot}{\iap{\twincunambig}{1}}  
\newcommand{\sboundary}{\partial}
\newcommand{\boundaryof}[1]{{\sboundary}\hspace*{-1pt}{#1}}
\newcommand{\sprop}{\scalebox{0.7}{$\textit{prop}$}}
\newcommand{\spropred}{{\iap{\xrightarrow{\hspace*{2pt}{\phantom{\aact}}\hspace*{1pt}}}{\sprop}}}
\newcommand{\spropredrtc}{{\bpap{\xrightarrow{\hspace*{2pt}{}\hspace*{1pt}}}{\sprop}{*}}}
\newcommand{\propredrtc}{\mathrel{\spropredrtc}}
\newcommand*{\StrikeThruDistance}{0.3cm}%
    \newcommand{\optionaldesc}[2]{%
      \phantomsection
      #1\protected@edef\@currentlabel{#1}\label{#2}%
    }
  \let\emptyset\relax
  \let\savebigtimes\bigtimes
  \let\bigtimes\relax
  \let\bigtimes\savebigtimes
  \let\emptyset\saveemptyset
  \let\degree\relax
  \let\savebigtimes\bigtimes
  \let\bigtimes\relax
  \let\bigtimes\savedegree
\tikzset{
  funbisim/.style={
    decoration={funbisim, amplitude=0.25ex},
    decorate,
    funbisim options/.style={#1}    
  }}
\tikzset{
  funbisimright/.style={
    decoration={funbisimright, amplitude=0.25ex},
    decorate,
    funbisimright options/.style={#1}    
  }}
\tikzset{
  funbisimleft/.style={
    decoration={funbisimleft, amplitude=0.25ex},
    decorate,
    funbisimleft options/.style={#1}    
  }}
\tikzset{
  bisim/.style={
    decoration={bisim, amplitude=0.25ex},
    decorate,
    bisim options/.style={#1}    
  }}
\tikzset{
  funonebisim/.style={
    decoration={funonebisim, amplitude=0.25ex},
    decorate,
    funonebisim options/.style={#1}    
  }}
\tikzset{
  funonebisimdashed/.style={
    decoration={funonebisimdashed, amplitude=0.25ex},
    decorate,
    funonebisimdashed options/.style={#1}    
  }}
\tikzset{
  funonebisimright/.style={
    decoration={funonebisimright, amplitude=0.25ex},
    decorate,
    funonebisimright options/.style={#1}    
  }}
\tikzset{
  funonebisimleft/.style={
    decoration={funonebisimleft, amplitude=0.25ex},
    decorate,
    funonebisimleft options/.style={#1}    
  }}
\tikzset{
  onebisim/.style={
    decoration={onebisim, amplitude=0.25ex},
    decorate,
    onebisim options/.style={#1}    
  }}
\def\calcLength(#1,#2)#3{%
\pgfpointdiff{\pgfpointanchor{#1}{center}}%
             {\pgfpointanchor{#2}{center}}%
\pgf@xa=\pgf@x%
\pgf@ya=\pgf@y%
\FPeval\@temp@a{\pgfmath@tonumber{\pgf@xa}}%
\FPeval\@temp@b{\pgfmath@tonumber{\pgf@ya}}%
\FPeval\@temp@sum{(\@temp@a*\@temp@a+\@temp@b*\@temp@b)}%
\FProot{\FPMathLen}{\@temp@sum}{2}%
\FPround\FPMathLen\FPMathLen5\relax
\global\expandafter\edef\csname #3\endcsname{\FPMathLen}
}
\DeclareRobustCommand*{\mycirc}[1]{%
  \tikz[baseline=(C.base)]
    \node[draw,circle,inner sep=1pt](C){#1};}
\tikzset{
  my dash/.style={dash pattern=on 5pt off 2pt}
         }
\newcommand\ellipsebyfoci[4]{
  \path[#1] let \p1=(#2), \p2=(#3), \p3=($(\p1)!.5!(\p2)$)
  in \pgfextra{
    \pgfmathsetmacro{\angle}{atan2(\y2-\y1,\x2-\x1)}
    \pgfmathsetmacro{\focal}{veclen(\x2-\x1,\y2-\y1)/2/1cm}
    \pgfmathsetmacro{\lentotcm}{\focal*2*#4}
    \pgfmathsetmacro{\axeone}{(\lentotcm - 2 * \focal)/2+\focal}
    \pgfmathsetmacro{\axetwo}{sqrt((\lentotcm/2)*(\lentotcm/2)-\focal*\focal}
  }
  (\p3) ellipse[x radius=\axeone cm,y radius=\axetwo cm, rotate=\angle];
} 
\begin{document}

\title[Milner's Proof System for Regular Expressions Mod.\ Bisimilarity is Complete]
      {Milner's Proof System for Regular Expressions Modulo Bisimilarity is Complete}
\subtitle{Crystallization: Near-Collapsing Process Graph Interpretations of Regular Expressions}

\author{Clemens Grabmayer}
\email{clemens.grabmayer@gssi.it}
\orcid{0000-0002-2414-1073}
\affiliation{%
  \institution{Gran Sasso Science Institute}
  \streetaddress{P.O. Box 1212}
  \city{L'Aquila}
  \state{Abruzzo}
  \country{Italy}
  \postcode{67100 AQ}
}








\renewcommand{\shortauthors}{C. Grabmayer}

\begin{abstract}
  Milner (1984) defined a process semantics for regular expressions.
  He formulated a sound proof system for bisimilarity of process interpretations of regular expressions,
    and asked whether this system is complete.
    
We report conceptually on a proof that shows that Milner's system is complete,
  by motivating and 
                    describing all of its main steps.
We substantially refine the completeness proof by Grabmayer and Fokkink (2020) for the restriction of Milner's system to `\stexponefree' regular expressions.
As a crucial complication we recognize that 
  process graphs with emp\-ty-step transitions that satisfy the layered loop-existence and elimination property \LLEE\
  are not closed under bisimulation collapse (unlike process graphs with \LLEE\ that only have proper-step transitions). 
We circumnavigate this obstacle by defining a \LLEEpreserving\ `crystallization procedure' for such process graphs.
By that we obtain `near-col\-lapsed' process graphs with \LLEE\
  whose strongly connected components are either collapsed or of `\twincrystal' shape.
Such \nearcollapsed\ process graphs guarantee provable solutions 
 for bisimulation collapses of process interpretations of regular expressions.

\end{abstract}

\begin{CCSXML}
<ccs2012>
<concept>
<concept_id>10003752.10003753.10003761.10003764</concept_id>
<concept_desc>Theory of computation~Process calculi</concept_desc>
<concept_significance>500</concept_significance>
</concept>
<concept>
<concept_id>10003752.10003766.10003776</concept_id>
<concept_desc>Theory of computation~Regular languages</concept_desc>
<concept_significance>100</concept_significance>
</concept>
</ccs2012>
\end{CCSXML}

\ccsdesc[500]{Theory of computation~Process calculi}
\ccsdesc[100]{Theory of computation~Regular languages}
%


\maketitle

\raggedbottom


\section{Introduction}%
  \label{intro}

Kleene \cite{klee:1951} (1951) introduced regular expressions, which are widely studied in formal language theory.
In a typical formulation, they are constructed from constants 0, 1, letters $a$ from some alphabet
  (interpreted as the formal languages $\emptyset$, $\setexp{\emptyword}$, and $\setexp{a}$,  
   where $\emptyword$ is the empty word)
and binary operators $+$ and $\cdot$, and the unary Kleene star ${}^{\sstexpit}$
  (which are interpreted as language union, concatentation, and iteration).  

Milner \cite{miln:1984} (1984) introduced a process semantics for regular expressions.
He defined an interpretation $\chartof{\astexp}$ of regular expressions $\astexp$ as charts (finite process graphs):
the interpretation 
                   of $\stexpzero$ is deadlock, of $\stexpone$ is successful termination, letters $a$ are atomic actions,
the operators $\sstexpsum$ and $\sstexpprod$ stand for choice and concatenation of processes,
and (unary) Kleene star $\stexpit{(\cdot)}$ represents iteration with the option to terminate successfully 
  before each execution of the iteration body.
He then defined the process semantics of `star expressions' (regular expressions in this context) $\astexp$
  as `star behaviors' $\procsem{\astexp} \defdby \eqcl{\chartof{\astexp}}{\sbisimsubscript}$,
  that is, as equivalence classes of chart interpretations with respect to bisimilarity $\sbisim$.   
Milner was interested in an axiomatization of equality of `star behaviors'.
  For this purpose he adapted Salomaa's complete proof system \cite{salo:1966} for language equivalence on regular expressions
  to a system 
              $\milnersys$ (see Def.~\ref{def:milnersys}) that is sound for equality of denoted star behaviors.
Recognizing that Salomaa's proof strategy cannot be followed directly, he left completeness as an open question.  

Over the past 38 years, completeness results have been obtained for restrictions of Milner's system to the following subclasses of star expressions: 
  (a)~without $\stexpzero$ and $\stexpone$, but with binary star iteration $\stexpbit{\astexpi{1}}{\astexpi{2}}$ 
      instead of unary star~\cite{fokk:zant:1994},
  (b)~with~$\stexpzero$, with iterations restricted to exit-less ones $\stexpprod{\stexpit{(\cdot)}}{\stexpzero}$,
    without~$\stexpone$ \cite{fokk:1997:pl:ICALP} and with $\stexpone$~\cite{fokk:1996:term:cycle:LGPS},
  (c)~without $\stexpzero$, and with only restricted occurrences of~$\stexpone$~\cite{corr:nico:labe:2002}, and
  (d)~`\onefree' expressions formed with~$\stexpzero$, without~$\stexpone$, but with binary instead of unary iteration~\cite{grab:fokk:2020:lics}.
By refining concepts developed in \cite{grab:fokk:2020:lics} for the proof of (d) 
  we can finally establish completeness~of~$\milnersys$.

\smallskip
\noindent{\bf The aim of this article.} 
  We provide an outline of the completeness proof for $\milnersys$.
    Hereby our focus is on the main new concepts and results.
  While details are sometimes only hinted at in this article,
    we think that the crystallization technique we present
      opens up a wide space for other applications (we suggest one in Sect.~\ref{conclusion}).
  We want to communicate this technique in summarized form to the community
    in order to stimulate its further development. 
\section{Motivation for the chosen proof strategy}   
  \label{motivation:proof:strategy}  

We explain the main obstacle we encountered for developing our proof strategy
  through explaining shortcomings of existing approaches. 
Finally we describe crucial new concepts that we use for adapting
  the collapse strategy from \cite{grab:fokk:2020:lics,grab:fokk:2020:lics:arxiv}.
  
\smallskip
\noindent{\bf Obstacle for the `bisimulation chart' proof strategy.}\label{bisim:chart:strategy:start}
  Milner \cite{miln:1984} recognized that completeness of the proof system $\milnersys$
    cannot be established along the lines of Salomaa's completeness proof
      for his   proof system $\Fone$ of language equivalence of regular expressions \cite{salo:1966}.
  The reason is as follows.          
  Adopting Salomaa's proof strategy would mean
    (i)~to link given bisimilar chart interpretations $\chartof{\astexpi{1}}$ and $\chartof{\astexpi{2}}$ 
      of star expressions $\astexpi{1}$ and $\astexpi{2}$ 
        via a chart $\abisimchart$ that represents a bisimulation between $\chartof{\astexpi{1}}$ and $\chartof{\astexpi{2}}$,
   (ii)~to use this link via functional bisimulations from $\abisimchart$ to $\chartof{\astexpi{1}}$ and $\chartof{\astexpi{2}}$ 
          to prove equal in $\milnersys$ 
             the provable solutions $\astexpi{1}$ of $\chartof{\astexpi{1}}$, and $\astexpi{2}$ of $\chartof{\astexpi{2}}$,
   (iii)~to extract from $\abisimchart$ a star expression $\astexp$ that provably solves $\abisimchart$,
         and then is provably equal to $\astexpi{1}$ and $\astexpi{2}$.          
  Here a `provable solution' of a chart~$\achart$ is a function $\sasol$ from the set of vertices of $\achart$ to star expressions
    such that the value $\asol{\avert}$ at a vertex $\avert$ can be reconstructed, provably in $\milnersys$,
      from the transitions to, and the values of $\sasol$ at, the immediate successor vertices of $\avert$ in $\achart$,
      and (non-)termination at $\avert$. 
  By the `principal value' of a provable solution we mean its value at the start vertex.
  In 
     pictures we write `$\astexp$ is solution' for `$\astexp$ is the principal value of~a~solution'.
  \begin{center}
  \vspace*{-0.5ex}
\begin{tikzpicture}
  \matrix[anchor=center,row sep=0.5cm,column sep=0.9cm] {
      & \node(bisimchartpos){};
    \\
    \node(chartofeonepos){};
      & & \node(chartofetwopos){};
    \\
    };  
  \path(bisimchartpos) ++ (0cm,0cm) node(bisimchart){$\abisimchart$}; 
  \path(chartofeonepos) ++ (0cm,0cm) node(chartofeone){$\chartof{\astexpi{1}}$};
  \path(chartofetwopos) ++ (0cm,0cm) node(chartofetwo){$\chartof{\astexpi{2}}$};  
  %
    
  %
  %
  \draw[funbisimleft,thick,shorten >=-1pt] (bisimchart) to (chartofeone); 
  \draw[funbisimright,thick,shorten >=-1pt](bisimchart) to (chartofetwo);  
  \draw[bisim,thick] (chartofeone) to node[below]{\small (assm)} (chartofetwo);

  \path (chartofetwo) ++ (0.5cm,0cm) node[right]{$\astexpi{i}$ is solution of $\chartof{\astexpi{i}}$ ($i\in\setexp{1,2}$)};
  \path (chartofetwo) ++ (1.75cm,0.4cm) node{$\Uparrow$};
  \path (bisimchart) ++ (0.9cm,0.25cm) node[right]{$\left.
                                                   \parbox{\widthof{$\astexpi{2}$ is solution of $\abisimchart$}}
                                                          {$\astexpi{1}$ is solution of $\abisimchart$%
                                                           \\[-0.35ex]
                                                           $\astexpi{2}$ is solution of $\abisimchart$}
                                                 \right\}
                                                   \overset{\mbox{\bf\large\alert{?}}}{\Longrightarrow}
                                                     \astexpi{1} \mathrel{\alert{\milnersyseq}} \astexpi{2}$};  
  
\end{tikzpicture}  
  \vspace*{-1ex}
\end{center}
  First by (i) star expressions $\astexpi{1}$ and $\astexpi{2}$ 
    can be shown to be the principal values of provable solutions of their chart interpretations $\chartof{\astexpi{1}}$ and $\chartof{\astexpi{2}}$, respectively.
  These solutions can be transferred backwards by (ii) 
    over the functional bisimulations from the bisimulation chart $\abisimchart$ to $\chartof{\astexpi{1}}$ and $\chartof{\astexpi{2}}$, respectively.
  It follows that $\astexpi{1}$ and $\astexpi{2}$ are the principal values of two provable solutions of $\abisimchart$. 
  However, now the obstacle appears, because the extraction procedure in (iii) of a proof of $\astexpi{1} \formeq \astexpi{2}$ in $\milnersys$
    cannot work, like Salomaa's, for all charts $\abisimchart$ irrespective of the actions of its transitions.
  An example that demonstrates that is the chart $\acharti{12}$ in Ex.~4.1 in \cite{grab:fokk:2020:lics,grab:fokk:2020:lics:arxiv}.  
  This is because some charts are unsolvable
    (``[In] contrast with the case for languages---an arbitrary system of guarded equations 
      in [star]-behaviours cannot in general be solved in star expressions'' \cite{miln:1984}\label{quotation:unsolvable:specs}),
  but turn into a solvable one if all actions in it are replaced by a single one. 
  The reason for the failure of Salomaa's extraction procedure is then that 
    the absence in~$\milnersys$ of 
    the \emph{left}-distributivity law  $x \cdot (y + z) = x \cdot y + x \cdot z$ (it is not sound under bisimilarity) 
    frequently prevents applications of the fixed-point rule $\RSPstar$ in $\milnersys$ unlike for the system $\Fone$ that Salomaa proved complete.
  We conclude that such a bisimulation-chart proof strategy, inspired by Salomaa 
                                                                         \cite{salo:1966},
    is not expedient for showing completeness of $\milnersys$.\label{bisim:chart:strategy:end}
    
  However, if the fixed-point rule $\RSPstar$ in Milner's system is replaced in $\milnersys$
    by a general unique-solvability rule scheme \USP\ for guarded systems of equations,
  then a proof system arises to which the bisimulation-chart proof strategy is applicable.
    That system can therefore be shown to be complete comparatively easily (as noted in \cite{grab:2021:calco}).

\smallskip
\noindent{\bf Loop existence and elimination.}
  A sufficient structural condition for solvability of a chart,
    and correspondingly of a linear system of recursion equations, 
    by a regular expression modulo bisimilarity
    was given by Grabmayer and Fokkink in \cite{grab:fokk:2020:lics}: 
  the `loop existence and elimination' condition \LEE,
      and its `layered' specialization \LLEE,
    which is independent of the specific actions in a chart.    
  These properties are refinements for graphs 
    of `well-behaved specifications' due to Baeten and Corradini in \cite{baet:corr:2005}
      that single out a class of `palm trees' (trees with back-links) that specify star expressions under the process interpretation. 
  For showing that the tailored restriction \BBP\ of Milner's system $\milnersys$
    to `\stexponefree' star expressions (without $\stexpone$, but with binary instead of unary star iteration)      
      is complete, the following properties were established in~\cite{grab:fokk:2020:lics}:  
  \crtcrossreflabel{{\bf (I$_{\hspace*{0.5pt}\text{\nf \st{1}}}$)}}[Ionefree]
    Chart \textbf{\textit{i}}nterpretations of \stexponefree\ star expressions are \LLEEcharts.
  \crtcrossreflabel{{\bf (S$_{\hspace*{0.5pt}\text{\nf \st{1}}}$)}}[Sonefree]
    Every \stexponefree\ star expression $\astexp$ is the principal value of a provable \textbf{\textit{s}}olution 
      of its chart interpretation $\chartof{\astexp}$. 
  \crtcrossreflabel{{\bf (E$_{\hspace*{0.5pt}\text{\nf \st{1}}}$)}}[Eonefree]
    From every \LLEEchart\ $\achart$ a provable solution of $\achart$ (by \stexponefree\ star expressions) can be \textbf{\textit{e}}xtracted.
  \crtcrossreflabel{{\bf (SE$_{\hspace*{0.5pt}\text{\nf \st{1}}}$)}}[SEonefree]
    All provable \textbf{\textit{s}}olutions of a \LLEEchart\ are provably \textbf{\textit{e}}qual.
  \crtcrossreflabel{{\bf (T$_{\hspace*{-1pt}\text{\nf \st{1}}}$)}}[Tonefree]
    Every provable solution can be \textbf{\textit{t}}ransferred 
                                   from the target to the source chart of a functional bisimulation
    to obtain a provable solution of the source chart.
  \crtcrossreflabel{{\bf (C$_{\hspace*{-1pt}\text{\nf \st{1}}}$)}}[Conefree]
    The bisimulation \textbf{\textit{c}}ollapse of a \LLEEchart\ is again a \LLEEchart.   
    
  As a consequence of these properties, a finite chart $\achart$ is expressible by a \stexponefree\ star expression modulo bisimilarity
    if and only if
  the bisimulation collapse of $\achart$ satisfies \LLEE.

\smallskip
\noindent{\bf The `bisimulation collapse' proof strategy for \BBP\ (\cite{grab:fokk:2020:lics}).}
  For the completeness proof of the tailored restriction $\BBP$ of Milner's system $\milnersys$ 
    to \stexponefree\ star expressions,
  Grabmayer and Fokkink in \cite{grab:fokk:2020:lics} linked bisimilar chart interpretations $\chartof{\astexpi{1}}$ and $\chartof{\astexpi{2}}$ 
    of \stexponefree\ star expressions $\astexpi{1}$ and $\astexpi{2}$
    via the joint bisimulation collapse $\acharti{0}$. 
  That argument, which we recapitulate below, can be illustrated as follows:
  \begin{center}
  \vspace*{-0.5ex}
\begin{tikzpicture}
  \matrix[anchor=center,row sep=0.5cm,column sep=0.85cm] {
    \node(chartofeonepos){};
      & & \node(chartofetwopos){};
    \\  
      & \node(bisimcollpos){};
    \\
    };  
  \path(chartofeonepos) ++ (0cm,0cm) node(chartofeone){$\chartof{\astexpi{1}}$};  
    \path (chartofeone) ++ (0cm,0.4cm) node{\forestgreen{\LLEE}}; 
  \path(chartofetwopos) ++ (0cm,0cm) node(chartofetwo){$\chartof{\astexpi{2}}$};   
    \path (chartofetwo) ++ (0cm,0.4cm) node{\forestgreen{\LLEE}}; 
  \path(bisimcollpos) ++ (0cm,0cm) node(bisimcoll){$\acharti{0}$}; 
    \path (bisimcoll) ++ (0cm,-0.4cm) node{\forestgreen{\LLEE}};   
  %
    
  %
  %
  \draw[bisim,thick] (chartofeone) to node[above]{\small (assm)} (chartofetwo); 
  \draw[funbisimleft,thick]  (chartofeone) to (bisimcoll); 
  \draw[funbisimright,thick] (chartofetwo) to (bisimcoll);  
  
  \draw[-implies,double equal sign distance] ($(bisimcoll) + (1.25cm,0cm)$) to ($(bisimcoll) + (2.15cm,0.5cm)$);
  \draw[-implies,double equal sign distance,out=225,in=-45,distance=1.1cm] ($(bisimcoll) + (0.6cm,-0.5cm)$) to ($(bisimcoll) + (-1.85cm,0.45cm)$);
  \draw[-implies,double equal sign distance,out=270,in=180,distance=0.6cm] ($(bisimcoll) + (-2.9cm,-0.6cm)$) to ($(bisimcoll) + (2.25cm,-1cm)$);
  
  %
  %
  \path (chartofeone) ++ (-0.75cm,-0.425cm) node[left]{\parbox{\widthof{$\astexpi{1}$ is solution}}
                                                              {$\astexpi{1}$ is solution%
                                                               \\[-0.4ex]
                                                               $\underbrace{\text{$\astexpi{0}$ is solution}}_{{}}$
                                                               \\[-1.5ex]
                                                               \hspace*{\fill}$\Downarrow$\hspace*{\fill}
                                                               \\[-0.4ex]
                                                               \hspace*{\fill}$\astexpi{1} \mathrel{\mediumblue{\BBPeq}} \astexpi{0}$\hspace*{\fill}%
                                                               }};
                                                            
  \path (chartofetwo) ++ (1.1cm,
                                -0.7cm) node[right]{\parbox{\widthof{$\astexpi{2}$ is solution}}
                                                            {$\astexpi{2}$ is solution%
                                                              \\[-0.4ex]
                                                              $\underbrace{\text{$\astexpi{0}$ is solution}}_{{}}$
                                                              \\[-1.5ex]
                                                              \hspace*{\fill}$\Downarrow$\hspace*{\fill}
                                                              \\[-0.4ex]
                                                              \hspace*{\fill}$\astexpi{0} \mathrel{\mediumblue{\BBPeq}} \astexpi{2}$\hspace*{\fill}%
                                                              \\[-0.4ex]
                                                              \hspace*{\fill}$\Downarrow$\hspace*{\fill}
                                                              \\[-0.4ex]
                                                              \hspace*{\fill}$\astexpi{1} \mathrel{\mediumblue{\BBPeq}} \astexpi{2}$\hspace*{\fill}%
                                                              }};
  \path (bisimcoll) ++ (0.5cm,-0.2cm) node[right]{\parbox{\widthof{solution $\astexpi{0}$}}
                                                          {has%
                                                           \\[-0.6ex]
                                                           solution $\astexpi{0}$}};

  
\end{tikzpicture}  
  \vspace*{-1ex}
\end{center}    
  By \ref{Sonefree}, 
  the star expressions $\astexpi{1}$ and $\astexpi{2}$ are the principal values of provable solutions $\sasoli{1}$ and $\sasoli{2}$ 
    of their chart interpretations $\chartof{\astexpi{1}}$ and $\chartof{\astexpi{2}}$, respectively.
  Furthermore, by \ref{Ionefree} the chart interpretations of the \stexponefree\ star expressions  $\chartof{\astexpi{1}}$ and $\chartof{\astexpi{2}}$ have the property \LLEE.
  Since \LLEE\ is preserved under the operation of bisimulation collapse due to \ref{Conefree}, 
    the joint bisimulation collapse $\acharti{0}$ of $\chartof{\astexpi{1}}$ and $\chartof{\astexpi{2}}$ is again a \LLEEchart.
  Therefore a provable solution $\sasoli{0}$ can be extracted from $\acharti{0}$ due to \ref{Eonefree}. Let $\astexpi{0}$ be its principal value. 
  The solution $\sasoli{0}$ can be transferred from $\acharti{0}$ backwards over the functional bisimulations 
  from $\chartof{\astexpi{1}}$ and $\chartof{\astexpi{2}}$ to $\acharti{0}$ due to \ref{Tonefree},
  and thereby defines provable solutions $\sasoltildei{1}$ of $\chartof{\astexpi{1}}$ and $\sasoltildei{2}$ of $\chartof{\astexpi{2}}$,
  both with $\astexpi{0}$ as the principal value. 
  Now having provable solutions $\sasoli{1}$ and $\sasoltildei{1}$ of the \LLEEchart~$\chartof{\astexpi{1}}$,
    these solutions are \provablyin{\BBP} equal by \ref{SEonefree}, and hence also their principal values $\astexpi{1}$ and $\astexpi{0}$,
    that is $\astexpi{1} \BBPeq \astexpi{0}$.
  Analogously $\astexpi{1} \BBPeq \astexpi{0}$ can be established.
  Then $\astexpi{1} \BBPeq \astexpi{2}$ follows by applying symmetry and transitivity proof rules of equational logic.

\smallskip  
\noindent{\bf Obstacles for a `bisimulation collapse' 
                                                      strategy for $\milnersys$.}  
  A generalization of this argument for arbitrary star expressions runs into two problems
    that can be illustrated as:
  \begin{center}
\begin{tikzpicture}
  \matrix[anchor=center,row sep=0.4cm,column sep=1.75cm] {
    \node(chartintnotLLEEpos){}; & & \node(aonechartpos){};
    \\
    \\
    & & \node(aonechartcollpos){};
    \\
    };
  \path (chartintnotLLEEpos) ++ (0cm,0.15cm) node (chartintnotLLEE){$ \chartof{\astexp} $};
    \path (chartintnotLLEE) ++ (0.25cm,0.175cm) node[right]{\alert{\st{\LLEE}}};  
    \path (chartintnotLLEE) ++ (-0.45cm,0cm) node[left]{\text{\crtcrossreflabel{\st{\bf (I)}}[Ionefail]:}};
  \path (aonechartpos) ++ (0cm,0cm) node(aonechart){$ \aonechart $};
    \path (aonechart) ++ (0.1cm,0.3cm) node[right]{\forestgreen{\LLEE}};   
    \path (aonechart) ++ (-0.45cm,0.135cm) node[left]{\text{\crtcrossreflabel{\st{\bf (C)}}[Conefail]:}}; 
  \path (aonechartcollpos) ++ (0cm,0cm) node(aonechartcoll){$ \aonecharti{0} $}; 
    \path (aonechartcoll) ++ (0.1cm,0.3cm) node[right]{\alert{\st{\LLEE}}};     
    \path (aonechartcoll) ++ (-0.25cm,0.1cm) node[left]{any \onebisimulation\ collapse}; 
    \path (aonechartcoll) ++ (1cm,0.1cm) node[right]{of $\aonechart$}; 
   \draw[funonebisimleft,shorten >=4pt] (aonechart) to (aonechartcoll);

\end{tikzpicture}  
\end{center}    
  First, see \ref{Ionefail}, there are star expressions $\astexp$ whose chart interpretation~$\chartof{\astexp}$ satisfies neither \LEE\ nor \LLEE,
    as was noted in \cite{grab:2021:TERMGRAPH2020-postproceedings}.
  In order to still be able to utilize \LLEE, in \cite{grab:2021:TERMGRAPH2020-postproceedings} a variant chart interpretation $\onechartof{\astexp}$ 
    was defined for star expressions $\astexp$
    such that $\onechartof{\astexp}$ is a `\LLEEonechart', that is, a chart with `\onetransitions' (explicit \emptystep\ transitions)
        that satisfies \LLEE, and $\onechartof{\astexp}$ is `\onebisimilar' to the chart interpretation $\chartof{\astexp}$. 
  Hereby `\onebisimulations' and `\onebisimilarity'
    are adaptations of bisimulations and bisimilarity to \onecharts.   
        
  However, use of the variant chart interpretation encounters 
  the second obstacle \ref{Conefail} as illustrated above.
  A part of it that was also observed in \cite{grab:2021:TERMGRAPH2020-postproceedings}
    is that \LLEEonecharts\ are not closed under bisimulation collapse, unlike \LLEEcharts.
  While this renders the bisimulation collapse proof strategy unusable,
    we here show that an adaptation to a `\onebisimulation\ collapse' strategy is not possible, either,
    if it is based on `\onebisimulation\ collapsed' \onecharts\ 
    in which none of its vertices are \onebisimilar.
  In particular we show \ref{Conefail} as the first of our key observations and concepts as listed below:
  \begin{enumerate}[label=\protect\mycirc{\arabic{*}},align=left,leftmargin=*,itemsep=0.5ex]%
    \item{}\label{contrib:concept:1}
      \emph{\LLEEonecharts\ are not in general collapsible to (collapsed) \LLEEonecharts.}
      Nor do \onebisimilar\ \LLEEonecharts\ always have a joint (\onebisimilarity) minimization.   
      We demonstrate this by an example (see Fig.~\ref{fig:countex:collapse}). 
  \end{enumerate}        
  The second part of \ref{contrib:concept:1} prevents 
    a change from the `\onebisimilarity\ collapse strategy'
      to a `joint minimization strategy' that would use the weaker statement that any two \onebisimilar\ \LLEEonecharts\
        were always jointly minimizable under \onebisimilarity\ (which is also wrong due to \ref{contrib:concept:1}).
  We want to credit 
  Schmid, Rot, and Silva's careful coinductive analysis in \cite{schm:rot:silv:2021}
  of the proof in~\cite{grab:fokk:2020:lics},
  which helped us to realize that this possibility needs to be excluded as well.
  %

\smallskip
\noindent{\bf How we recover the collapse proof strategy for $\milnersys$.}
  We define `crystallized' approximations with \LLEE\ of collapsed \LLEEonecharts\
    in order to show that bisimulation collapses of \LLEEonecharts\ have provable solutions.
  For this purpose we combine the following concepts and their properties:

\begin{enumerate}[label=\protect\mycirc{\arabic{*}},leftmargin=*,align=left,itemsep=0.5ex,start=2]%
    %
  \item{}\label{contrib:concept:2}
    \emph{\Twincrystals{}}: These are \onecharts\ with a single strongly connected component (\scc) 
                     that exhibit a \selfinverse\ symmetry function that links \onebisimilar\ vertices.
      \Twincrystals\ abstract our example that demonstrates~\ref{contrib:concept:1}.
  \item{}\label{contrib:concept:3}  
    \emph{\Nearcollapsed} \onecharts: These are \onecharts\ in which \onebisimilar\ vertices
      appear as pairs that are linked by a \selfinverse\ function that induces a `grounded \onebisimulation\ slice'. 
      \Twincrystals\ are \nearcollapsed\ \LLEEonecharts.
    
  \item{}\label{contrib:concept:4}    
    \emph{Crystallization:}
      By this we understand a process of step-wise minimization of \LLEEonecharts\ under \onebisimilarity\
      that produces \onebisimilar\ `crystallized' \LLEEonecharts\ 
        in which all strongly connected components are collapsed or of \twincrystal\ shape.
      This process uses the \txtconnectthrough\ operation from \cite{grab:fokk:2020:lics}
        for \onebisimilar\ vertices. 
      We show that crystallized \onecharts\ are \nearcollapsed.  
    
  \item{}\label{contrib:concept:5}
    \emph{Complete \provablein{\milnersys} solution} of a \onechart~$\aonechart$:
      This is a \provablein{\milnersys} solution of $\aonechart$
        with the property that its values for 
                                              \onebisimilar\ vertices of $\aonechart$
          are \provablyin{\milnersys} equal.   
      Any complete \provablein{\milnersys} solution of a \onechart~$\aonechart$
        yields a \provablein{\milnersys} solution of the bisimulation collapse~of~$\aonechart$.

  \item{}\label{contrib:concept:6}  
    \emph{Elevation} of vertex sets above \onecharts:
      This is a concept of partially unfolding \onecharts\ that facilitates us to show that 
    \nearcollapsed\ \weaklyguarded\ \LLEEonecharts\ 
    have complete \provablein{\milnersys} solutions. 
   
\end{enumerate}

With these conceptual tools we will be able to recover
  the collapse proof strategy for \LLEEonecharts. 
The idea is to establish, for given \onebisimilar\ \LLEEonecharts~$\aonecharti{1}$ and $\aonecharti{2}$,
  a link via which solutions of $\aonecharti{1}$ and $\aonecharti{2}$ with the same principal value can be obtained.
  We create such a link via the crystallized \LLEEonechart\ $\aonecharti{10}$ of (one of them, say) $\aonecharti{1}$
  and the joint bisimulation collapse $\acharti{0}$ of $\aonecharti{1}$ and $\aonecharti{2}$. 
  Then a solution of $\aonecharti{10}$ can obtained,
     transferred first to $\acharti{0}$, and then to $\aonecharti{1}$ and to $\aonecharti{2}$.  
%
%
More precisely, the central part of our completeness proof for $\milnersys$
  can be illustrated as follows:
  \begin{center}
    \begin{tikzpicture} 
  \matrix[anchor=center,row sep=0.8cm,column sep=1.25cm,
          every node/.style={minimum width=1em}] {
    \node(aonechartone)[text height=0.675em]{$\aonecharti{1}$};
      & &[-0.85cm] \node(aonecharttwo)[text height=0.675em]{$\aonecharti{2}$};
    \\
    \node(aonechartcryst)[text height=0.675em]{$\aonecharti{10}$}; 
    \\[-0.9cm]
       & \node(bisimcoll)[text height=0.675em]{$\aonecharti{0}$};
    \\
    };  
 
  \path (aonechartone) ++ (-0.3cm,0.05cm) node[left]{$\astexpi{10}$ is solution}; 
  \path (aonechartone) ++ (-0.25cm,0.4cm) node[left]{\forestgreen{\LLEE}};

  \path (aonechartcryst) ++ (-0.25cm,0.4cm) node[left]{\forestgreen{\LLEE}}; 
  \path (aonechartcryst) ++ (-0.25cm,-0.575cm) node{\parbox{\widthof{has complete}}
                                                     {has complete%
                                                      \\[-0.5ex]
                                                      \hspace*{\fill}solution $\astexpi{10}$}}; 
  \path (aonechartcryst) ++ (-0.275cm,0cm) node[left]{\chocolate{\emph{crystallized}}};

  \path (bisimcoll) ++ (-0.5cm,-0.4cm) node[right]{$\astexpi{10}$ is solution}; 
  \path (bisimcoll) ++ (0.35cm,0.2cm) node[right]{\parbox{\widthof{bisimulation}}
                                                        {\emph{bisimulation}%
                                                         \\[-0.5ex]
                                                         \emph{collapse}}};
    
  \path (aonecharttwo) ++ (0.3cm,0.05cm) node[right]{$\astexpi{10}$ is solution};

  %
  \draw[onebisim] (aonechartone) to  (aonecharttwo);
  \draw[onebisim] (aonechartcryst) to 
                                      (aonechartone);
  \draw[funonebisim,shorten <=0pt,shorten >=4pt] (aonechartone) to (bisimcoll);
  \draw[funonebisim,shorten <=0pt,shorten >=0pt] (aonechartcryst) to  (bisimcoll);
  \draw[funonebisimleft,shorten <=9pt,shorten >=6pt] (aonecharttwo.center) to (bisimcoll.center);

  %
  %
  \draw[-implies,double equal sign distance,shorten <=3pt,shorten >=6pt] 
    ($(aonechartcryst) + (0.6cm,-0.6cm)$) to ($(bisimcoll) + (-0.2cm,-0.4cm)$);
    
  \draw[-implies,double equal sign distance,out=0,in=-90,distance=1cm,shorten <=0pt,shorten >=0pt] 
    ($(bisimcoll) + (1.7cm,-0.425cm)$) to ($(aonecharttwo) + (1.6cm,-0.15cm)$);
  
  \draw[-implies,double equal sign distance,out=190,in=-95,distance=2.65cm,shorten <=0pt,shorten >=0pt] 
    ($(bisimcoll) + (-0.4cm,-0.55cm)$) to ($(aonechartone) + (-2.2cm,-0.2cm)$);

\end{tikzpicture}
  \end{center}
Here we start from the assumption of \onebisimilar\ \onecharts\ $\aonecharti{1}$ and $\aonecharti{2}$
  of which at least $\aonecharti{1}$ satisfies \LLEE. 
Then the \LLEEonechart~$\aonecharti{1}$ can be crystallized with as result a \onebisimilar\ \LLEEonechart~$\aonecharti{10}$.
Due to \LLEE, a provable solution $\sasoli{10}$ can be extracted from $\aonecharti{10}$. Let $\astexpi{10}$ be the principal value of $\sasoli{10}$.
Since crystallized \onecharts\ are \nearcollapsed\ \LLEEonecharts,
  the solution $\sasoli{10}$ of $\aonecharti{10}$ is complete. 
As such it defines a provable solution with principal value $\astexpi{10}$
  on the joint bisimulation collapse $\aonecharti{0}$ of $\aonecharti{10}$, $\aonecharti{1}$, and $\aonecharti{2}$.       
This solution can be pulled back conversely over the functional \onebisimulations\ from $\aonecharti{0}$ to $\aonecharti{1}$ and to $\aonecharti{2}$. 
In this way we obtain provable solutions $\sasoli{1}$ of $\aonecharti{1}$ and $\sasoli{2}$ of $\aonecharti{2}$,
  both with the same principal value $\astexpi{10}$.  

This is the central argument of our completeness proof for $\milnersys$
  that we give in Sect.~\ref{completeness:proof} based on lemmas. 
Its illustration appears in the illustration of the completeness proof in Fig.~\ref{fig:proof:structure}.

\section{Preliminaries}%
  \label{prelims}

  Let $\actions$ be a set whose members we call \emph{actions}.
  The set $\StExpover{\actions}$ of \emph{star expressions over actions in $\actions$} 
  is defined by the following grammar, where $\aact\in\actions\,$:
  \begin{center}
    $
    \astexp, \astexpi{1}, \astexpi{2}
      \:\BNFdefdby\:
    \stexpzero
      \BNFor
    \stexpone
      \BNFor  
    \aact
      \BNFor
    \stexpsum{\astexpi{1}}{\astexpi{2}}
      \BNFor
    \stexpprod{\astexpi{1}}{\astexpi{2}}
      \BNFor
    \stexpit{\astexp} 
    $
  \end{center} 

\begin{defi}[\onecharts\ and \oneLTSs]\label{def:onechart:oneLTS}
  A \emph{\onechart}
  is a 6\nb-tuple $\tuple{\verts,\actions,\sone,\start,\transs,\sterminates}$ 
    with $\verts$ a \underline{\smash{finite}} 
                        set of \emph{vertices},
    $\actions$ a \underline{\smash{finite}} set of \emph{(proper)} $\emph{action labels}$,
    $\sone\notin\actions$ the specified \emph{empty step label},
    $\start\in\verts$ the \emph{start vertex} (hence $\verts \neq \emptyset$),
    $\transs \subseteq \verts\times\oneactions\times\verts$ the \emph{labeled transition relation},
    where $\oneactions \defdby \actions \cup \setexp{\sone}$ is the set of action labels including $\sone$, 
    and $\sterminates \subseteq \verts$ a set of \emph{vertices with immediate termination} (or \emph{terminating} vertices). 
  In such a \onechart, 
  we call a transition in $\transs\cap(\verts\times\actions\times\verts)$ (labeled by a \emph{proper action} in $\actions$) 
          a \emph{proper transition},
  and a transition in $\transs\cap(\verts\times\setexp{\sone}\times\verts)$ (labeled by $\sone$)
          a \emph{\onetransition}.  
  Reserving non-underlined action labels like $\aact,\bact,\ldots$ for proper actions,
  we use highlighted underlined action label symbols like $\aoneact$ for actions labels in the set $\oneactions$ 
  that includes~the~label~$\sone$. 
  
  A \emph{\oneLTS} (
                    a \emph{labeled transition system with \onetransitions\ and immediate termination})
    is a 5\nb-tuple $\tuple{\verts,\actions,\sone,\transs,\sterminates}$ 
      with concepts as explained above 
                                       but without a 
                                                     start vertex.     
  For every \onechart~$\aonechart = \tuple{\verts,\actions,\sone,\start,\transs,\termexts}$ 
    we denote by $\oneLTSof{\aonechart} = \tuple{\verts,\actions,\sone,\transs,\termexts}$ 
      the \oneLTS\ \emph{underlying $\aonechart$} (or \emph{that underlies $\aonechart$}).
  
  We say that a \onechart\ $\aonechart$ (a \oneLTS\ $\aoneLTS$) is \emph{weakly guarded}~{(\emph{\wg})} 
    if $\aonechart$ (resp. $\aoneLTS$) does not have an infinite path of \onetransitions.
    
  By an \emph{induced \transitionact{\aact}} $\avert \ilt{\aact} \bvert$, for a proper action $\aact\in\actions$,\vspace*{-1.5pt} in a \onechart~$\aonechart$
    (in a \oneLTS~$\aoneLTS$) we mean 
    a path of the form $\avert \lt{\sone} \cdots \lt{\sone} \cdot \lt{\aact} \bvert$ in $\aonechart$ (in $\aoneLTS$)
    that consists of a finite number of \onetransitions\ that ends with a proper \transitionact{\aact}.
  By\vspace*{-1.5pt} \emph{induced termination} $\oneterminates{\avert}$, for $\avert\in\verts$ we mean that there is a path
    $\avert \lt{\sone} \cdots \lt{\sone} \averttilde$ with $\terminates{\averttilde}$ in~$\aonechart$ (in $\aoneLTS$).
  
  By a \emph{chart} (a \emph{LTS}) we mean a \onetransition\ free \onechart\ (a \onetransition\ free LTS).   
  Let $\aonechart = \tuple{\verts,\actions,\sone,\start,\transs,\termexts}$ be a \onechart.\vspace*{-1mm} 
  We define by $\indchartof{\aonechart} \defdby \tuple{\verts,\actions,\sone,\start,\indtranss,\onetermexts}$
    \emph{the induced chart of $\aonechart$} 
        whose transitions are the induced transitions of $\aonechart$,
          and whose terminating vertices are the vertices of $\aonechart$ with induced termination.
  Note that $\indchartof{\aonechart}$ is \onetransition\ free.
  Also, for every vertex $\bvert\in\verts$
  we denote by   
    $\gensubchartofby{\aonechart}{\bvert} \defdby \tuple{\verts,\actions,\sone,\bvert,\transs,\termexts}$        
  the \emph{generated sub(-$\sone$)-chart of $\aonechart$ at $\bvert$}.
\end{defi}

\begin{defi}[\onebisimulating\ slices, \onebisimulations\ for \oneLTSs]\label{def:onebisim:slices:betw:oneLTSs}
  Let $\aoneLTSi{i} = \tuple{\statesi{i},\actions,\sone,\transsi{i},\termextsi{i}}$ for $i\in\setexp{1,2}$
    be \oneLTSs. 
   
  A \emph{\onebisimulating\ slice between $\aoneLTSi{1}$ and $\aoneLTSi{2}$}
  is a binary relation $\abisim \subseteq \statesi{1}\times\statesi{2}$, 
  with active domain $\asetvertsi{1} \defdby \actdomof{\abisim} = \proji{1}{\abisim}$,
    and 
  active codomain $\asetvertsi{2} \defdby \actcodomof{\abisim} 
                                        = \proji{2}{\abisim}$, 
    where $\sproji{i} \funin \statesi{1}\times\statesi{2} \to \statesi{i}, \proji{i}{\pair{\astatei{1}}{\astatei{2}}} = \astatei{i}$, 
          for 
              $i\in\setexp{1,2}$, 
  such that $\abisim\neq\emptyset$, 
  and for all $\pair{\astatei{1}}{\astatei{2}}\in\abisim$ the three conditions hold:
  \begin{itemize}[labelindent=0em,leftmargin=*,align=left,itemsep=0.25ex,labelsep=0.25em]
    \item[(forth)$_\text{\nf s}$]
      $ \begin{aligned}[t]
          &
          \forall \aact\in\actions\,
            \forall \astateacci{1}\in\statesi{1} 
              \bigl(\, 
                \astatei{1} \ilti{\aact}{1} \astateacci{1}
                  \: 
                \alert{\underline{\black{\logand\:\astateacci{1}\in\asetvertsi{1}}}}  
                \\[-0.75ex]
                & \hspace*{6ex}
                  \;\Longrightarrow\; 
                    \exists \astateacci{2}\in\statesi{2}
                      \bigl(\, \astatei{2} \ilti{\aact}{2} \astateacci{2} 
                                 \logand
                               \pair{\astateacci{1}}{\astateacci{2}}\in\abisim \,)
            \,\bigr) \punc{,}
        \end{aligned} $
    \item[(back)$_\text{\nf s}$]
      $ \begin{aligned}[t]
          &
          \forall \aact\in\actions\,
            \forall \astateacci{2}\in\statesi{2} 
              \bigl(\, 
                \astatei{2} \ilti{\aact}{2} \astateacci{2}
                  \: 
                \alert{\underline{\black{\logand\:\astateacci{2}\in\asetvertsi{2}}}}  
                \\[-0.75ex]
                & \hspace*{6ex}
                  \;\Longrightarrow\; 
                    \exists \astateacci{1}\in\statesi{1}
                      \bigl(\, \astatei{1} \ilti{\aact}{1} \astateacci{1} 
                                 \logand
                               \pair{\astateacci{2}}{\astateacci{1}}\in\abisim \,)
            \,\bigr) \punc{,}
        \end{aligned} $
    \item[(termination)]
      $ \oneterminatesi{1}{\astatei{1}}
          \;\;\Longleftrightarrow\;\;
            \oneterminatesi{2}{\astatei{2}} \punc{.}$
  \end{itemize}
  Here (forth)$_\text{\nf s}$ entails $\astateacci{2}\in\asetvertsi{2}$,
    and (back)$_\text{\nf s}$ entails $\astateacci{1}\in\asetvertsi{1}$.
    
  A \emph{\onebisimulation\ between $\aoneLTSi{1}$ and $\aoneLTSi{2}$}
    is a \onebisimulation\ slice $\abisim$ between $\aoneLTSi{1}$ and $\aoneLTSi{2}$
      such that the active domain of $\abisim$, and the active codomain of $\abisim$, are \transitionclosed\
      (that is, closed under $\transsi{1}$ and $\transsi{2}$, respectively),
      or equivalently, a \nonempty\ relation $\abisim \subseteq \statesi{1}\times\statesi{2}$
        such that for every $\pair{\bverti{1}}{\bverti{2}}\in\abisim$ 
          the conditions (forth), (back), (termination) hold,
          where (forth), and (back) result from (forth)$_\text{\nf s}$ and (back)$_\text{\nf s}$
            by dropping the \alert{\underline{\black{underlined}}} conjuncts. 
        
  By a \emph{\onebisimulation\ slice between $\aoneLTSi{1}$ and $\aoneLTSi{2}$}
    we mean a \onebisimulating\ slice between $\aoneLTSi{1}$ and $\aoneLTSi{2}$
    that is contained in a \onebisimulation\ between $\aoneLTSi{1}$ and $\aoneLTSi{2}$.
  
  A \onebisimulating\ slice (a \onebisimulation\ slice, a \onebisimulation) \emph{on} a \oneLTS~$\aoneLTS$
  is a \onebisimulating\ slice (and resp., a \onebisimulation\ slice, a \onebisimulation) between $\aoneLTS$ and $\aoneLTS$. 
\end{defi}

\begin{defi}[(funct.) \onebisimulation\ between \onecharts]\label{def:onebisim:betw:onecharts}
  We consider \onecharts~$\aonecharti{i} = \tuple{\vertsi{i},\actions,\sone,\starti{i},\transsi{i},\termextsi{i}}$
    for~\mbox{$i\in\setexp{1,2}$}. 
    
  A \emph{\onebisimulation} between \onecharts~$\aonecharti{1}$ and $\aonecharti{2}$
    is a \onebisimulation\ $\abisim \subseteq \vertsi{1}\times\vertsi{2}$ 
           between the \oneLTSs~$\oneLTSof{\aonecharti{1}}$ and $\oneLTSof{\aonecharti{2}}$ 
                          underlying $\aonecharti{1}$ and $\aonecharti{2}$, respectively,
    such that additionally: 
  \begin{itemize}[labelindent=0em,leftmargin=*,align=left,itemsep=0.25ex,labelsep=0.4em]
    \item[(start)]
      $\pair{\starti{1}}{\starti{2}}\in\abisim$ \hspace*{0.15em} ($\abisim$ relates start vertices of $\aonecharti{1}$ and $\aonecharti{2}$)
  \end{itemize}    
  holds; 
  thus $\abisim$ must satisfy (start), and, for all $\pair{\bverti{1}}{\bverti{2}}\in\abisim$, 
    the conditions (forth), (back), (termination) from Def.~\ref{def:onebisim:slices:betw:oneLTSs}.  
    
  By a \emph{functional} \onebisimulation\ \emph{from $\aonecharti{1}$ to $\aonecharti{2}$}
    we mean a \onebisimulation\ between $\aonecharti{1}$ and $\aonecharti{2}$ that is the graph of a partial function from $\vertsi{1}$ to $\vertsi{2}$. 
  By $\aonecharti{1} \onebisim \aonecharti{2}$
    (by $\aonecharti{1} \funonebisim \aonecharti{2}$)
    we denote
    that there is a \onebisimulation\ between $\aonecharti{1}$ and $\aonecharti{2}$
      (respectively, a functional \onebisimulation\ from $\aonecharti{1}$ to $\aonecharti{2}$).
\end{defi}

\begin{defi}\label{def:substate}\label{def:onebisim-collapse}\nf
  Let $\aonechart = \tuple{\verts,\actions,\sone,\start,\transs,\termexts}$ be a \onechart.
  
  By $\sonebisimon{\aonechart}$ we denote \emph{\onebisimilarity\ on $\aonechart$},
  the largest \onebisimulation\ (which is the union of all \onebisimulations) between $\aonechart$ and $\aonechart$ itself.
  If $\bverti{1} \onebisimon{\aonechart} \bverti{2}$ holds for vertices $\bverti{1},\bverti{2}\in\verts$,
  then we say that \emph{$\bverti{1}$ and $\bverti{2}$ are \onebisimilar~in~$\aonechart$}.
  \\\indent
  We call $\aonechart$ 
  \emph{\onecollapsed}, and a \emph{\onebisimulation\ collapse}, 
  if $\sonebisimon{\aonechart} = \sidrelon{\verts}$ holds, that is, if \onebisimilar\ vertices of $\aonechart$ are identical.
  If, additionally, $\aonechart$ does not contain any \onetransitions,
  then we call $\aonechart$  \emph{collapsed}, and a \emph{bisimulation collapse}. 
  
  Let $\bverti{1},\bverti{2}\in\verts$.
  We say that \emph{$\bverti{1}$ is a substate of $\bverti{2}$}, denoted by $\bverti{1} \isonesubstateofin{\aonechart} \bverti{2}$,
  if the pair $\pair{\bverti{1}}{\bverti{2}}$ forth-progresses to \onebisimilarity\ on $\aonechart$ in the sense of the following conditions:
  \begin{itemize}[labelindent=0.5em,leftmargin=*,align=left,itemsep=0.25ex,labelsep=0.75em]
      
    \item[(prog-forth)]
      $ 
      \begin{aligned}[t]
        &
        \forall \bvertacci{1}\in\vertsi{1}
            \forall \aact\in\actions
                \bigl(\,
                  \bverti{1} \ilt{\aact} \bvertacci{1} 
                  \\[-0.65ex]
                  & \hspace*{-3ex}
                    \;\;\Longrightarrow\;\;
                      \exists \bvertacci{2}\in\vertsi{2}
                        \bigl(\, \bverti{2} \ilt{\aact} \bvertacci{2} 
                                   \logand
                                 \bvertacci{1} \onebisimon{\aonechart} \bvertacci{2}  \,)
              \,\bigr) \punc{,}
       \end{aligned}    
       $
      
    \item[(prog-termination)]
      $ \oneterminates{\bverti{1}}
          \;\;\Longrightarrow\;\;
            \oneterminates{\bverti{2}} \punc{.}$
  \end{itemize}
\end{defi}


\begin{defi}\label{def:chartof}
  The \emph{chart interpretation of} 
  a star expression~$\astexp\in\StExpover{\actions}$
  is the ($\sone$-tr.\ free) chart 
  $\chartof{\astexp} = \tuple{\vertsof{\astexp},\actions,\sone,\astexp,\\\transs\cap(\vertsof{\astexp}{\times}\actions{\times}\vertsof{\astexp}),\termexts{\cap}\vertsof{\astexp}}$
  where $\vertsof{\astexp}$ consists of all star expressions that are reachable from $\astexp$
  via transitions of the labeled transition relation $ \transs \subseteq \StExpover{\actions}\times\actions\times\StExpover{\actions}$,
  which is defined, together with the imm.-termination relation $\sterminates \subseteq \StExpover{\actions}$,
  by derivability in the transition system specification (TSS)~$\StExpTSSover{\actions}$,
  where $\aact\in\actions$, $\astexp,\astexpi{1},\astexpi{2},\astexpacc\in\StExpover{\actions}$:      
  \begin{gather*}
   \begin{aligned}
     &
     \AxiomC{\phantom{$\terminates{\stexpone}$}}
     \UnaryInfC{$\terminates{\stexpone}$}
     \DisplayProof
     & \hspace*{-1.5ex} &
     \AxiomC{$ \terminates{\astexpi{1}} $}
     \UnaryInfC{$ \terminates{(\stexpsum{\astexpi{1}}{\astexpi{2}})} $}
     & \hspace*{2ex} &
     \AxiomC{$ \terminates{\astexpi{i}} $}
     \UnaryInfC{$ \terminates{(\stexpsum{\astexpi{1}}{\astexpi{2}})} $}
     \DisplayProof
     & \hspace*{2ex} &
     \AxiomC{$\terminates{\astexpi{1}}$}
     \AxiomC{$\terminates{\astexpi{2}}$}
     \BinaryInfC{$\terminates{(\stexpprod{\astexpi{1}}{\astexpi{2}})}$}
     \DisplayProof
     & \hspace*{2ex} &
     \AxiomC{$\phantom{\terminates{\stexpit{\astexp}}}$}
     \UnaryInfC{$\terminates{(\stexpit{\astexp})}$}
     \DisplayProof
   \end{aligned} 
   \displaybreak[0]\\
   \begin{aligned}
     & 
     \AxiomC{$\phantom{\astexpacci{i} a \:\lt{a}\: \stexpone\astexpacci{i}}$}
     \UnaryInfC{$a \:\lt{a}\: \stexpone$}
     \DisplayProof
     & \hspace*{1ex} &
     \AxiomC{$ \astexpi{i} \:\lt{a}\: \astexpacci{i} $}
     \UnaryInfC{$ \stexpsum{\astexpi{1}}{\astexpi{2}} \:\lt{a}\: \astexpacci{i} $}
     \DisplayProof 
     & \hspace*{1ex} &
     \AxiomC{$ \phantom{\astexpacci{i}}  \astexp \:\lt{a}\: \astexpacc     \phantom{\astexpacci{i}} $}
     \UnaryInfC{$\stexpit{\astexp} \:\lt{a}\: \stexpprod{\astexpacc}{\stexpit{\astexp}}$}
     \DisplayProof
   \end{aligned}
   \displaybreak[0]\\
   \begin{aligned}
     &
     \AxiomC{$ \astexpi{1} \:\lt{a}\: \astexpacci{1} $}
     \UnaryInfC{$ \stexpprod{\astexpi{1}}{\astexpi{2}} \:\lt{a}\: \stexpprod{\astexpacci{1}}{\astexpi{2}} $}
     \DisplayProof
     & \hspace*{1ex} &
     \AxiomC{$\terminates{\astexpi{1}}$}
     \AxiomC{$ \astexpi{2} \:\lt{a}\: \astexpacci{2} $}
     \BinaryInfC{$ \stexpprod{\astexpi{1}}{\astexpi{2}} \:\lt{a}\: \astexpacci{2} $}
     \DisplayProof
   \end{aligned}
  \end{gather*} 
\end{defi}

\renewcommand{\prod}{\mathrel{\cdot}}
\renewcommand{\assocstexpsum}{\textnf{(A1)}}
\renewcommand{\neutralstexpsum}{\textnf{(A2)}}
\renewcommand{\commstexpsum}{\textnf{(A3)}}
\renewcommand{\idempotstexpsum}{\textnf{(A4)}}
\renewcommand{\assocstexpprod}{\textnf{(A5)}}
\renewcommand{\rightdistr}{\textnf{(A6)}}
\renewcommand{\leftidstexpprod}{\textnf{(A7)}}
\renewcommand{\rightidstexpprod}{\textnf{(A8)}}
\renewcommand{\deadlockax}{\textnf{(A9)}}
\renewcommand{\recdefstexpit}{\textnf{(A10)}}
\renewcommand{\termbodystexpit}{\textnf{(A11)}}
\begin{defi}\label{def:milnersys}
  Milner's proof system \milnersys\ on star expressions has the following axioms
  (here numbered differently):
\begin{alignat*}{4}
  \assocstexpsum & \;\, & 
    e + (f + g) 
      & {} \formeq 
    (e + f) + g
    & \quad\;\;
  \leftidstexpprod & \;\, & 
    e
      & {} \formeq 
    1 \prod e
    \\[-0.25ex]
  \neutralstexpsum & &
    e + 0 
      & {} \formeq 
    e
    &  
  \rightidstexpprod & &
    e
      & {} \formeq
    e \prod 1
    \displaybreak[0]\\[-0.25ex]
  \commstexpsum & &
    e + f 
      & {} \formeq 
    f + e
    &
  \deadlockax & &
    0
      & {} \formeq
    0 \prod e
    \displaybreak[0]\\[-0.25ex]
  \idempotstexpsum & &
    e + e 
      & {} \formeq 
    e   
    & \quad
  \recdefstexpit & & 
    e^*
      & {} \formeq
    1 + e \prod e^*
    \displaybreak[0]\\[-0.25ex]
  \assocstexpprod & &
    e \prod (f \prod g) 
      & {} \formeq 
    (e \prod f) \prod g 
    &
  \termbodystexpit & \; &
      e^*
        & {} \formeq
      (1 + e)^*
    \displaybreak[0]\\[-0.25ex]
  \rightdistr & &
    (e + f) \prod g 
      & {} \formeq
    e \prod g  +  f \prod g
\end{alignat*}
  The rules of $\milnersys$ are the basic inference rules of equational logic
  (reflexivity, symmetry, transitivity of $\formeq$,
   compatibility of $\formeq$ with $+$, $\prod$, $(\cdot)^*$)
  as well as the fixed-point rule $\RSPstar$: 
  \begin{center}\renewcommand{\fCenter}{\formeq}
   \Axiom$ e \fCenter f \prod e + g $
   \RightLabel{\RSPstar\ {\small (if $\notterminates{f}$)}}
   \UnaryInf$ e \fCenter f^* \prod g $
   \DisplayProof
  \end{center}
  By $\astexpi{1} \milnersyseq \astexpi{2}$ we denote that $\astexpi{1} = \astexpi{2}$ is derivable in \milnersys.
  
  By $\milnersysmin$ we denote the purely equational part of $\milnersys$
    that results by dropping the rule scheme \RSPstar\ from $\milnersys$.
\end{defi}


\begin{defi}\label{def:provable:solution} 
  While we formulate the stipulations below for \oneLTSs, we will use them also for \onecharts. 
  So we let $\aoneLTS = \tuple{\verts,\actions,\sone,\transs,\exts}$ be a \oneLTS,
  and we let $\asys \in \setexp{\milnersys,\milnersysmin}$. 
  
  By a \emph{star expression function on $\aoneLTS$} we mean a function $\sasol \funin \verts \to \StExpover{\actions}$
    on the vertices of $\aoneLTS$.   
  Now we let $\avert\in\verts$.   
  We say that such a star expression function $\sasol$ on $\aoneLTS$ 
    is an \emph{\provablein{\asys} solution of $\aoneLTS$ at $\avert$} if it holds that:
    \begin{center}   
      $
      \asol{\avert}
        \,\eqin{\asys}\,
          \stexpsum{\terminatesconstof{\aoneLTS}{\avert}}
                   {\displaystyle\sum_{i=1}^{n} \stexpprod{\aoneacti{i}}{\asol{\averti{i}}}} 
      $
    \end{center}  
    given that 
    $\transitionsinfrom{\aoneLTS}{\avert}
       =
     \descsetexpbig{ \avert \lt{\aoneacti{i}} \averti{i} }{ i \in\setexp{1,\ldots,n} }$
    is a (possibly redundant) list representation
    of transitions from $\avert$ in~$\aoneLTS$, 
    and where $\terminatesconstof{\aoneLTS}{\avert}$
    is the \emph{termination constant $\terminatesconstof{\aoneLTS}{\avert}$ of $\aoneLTS$ at $\avert$}
      defined as $\stexpzero$ if $\notterminates{\avert}$,
                      and as $\stexpone$ if $\terminates{\avert}$.
    This definition does not depend on the specifically chosen list representation of $\transitionsinfrom{\aoneLTS}{\avert}$, 
    because $\asys$ 
                    contains the associativity, commutativity, and idempotency axioms~for~$\sstexpsum$.
  
  By an \emph{\provablein{\asys} solution of $\aoneLTS$} 
    (with \emph{principal value $\asol{\start}$} at the start vertex $\start$)
    we mean a star expression function $\sasol$ on $\aoneLTS$ that is an \provablein{\asys} solution of $\aoneLTS$ at every vertex~of~$\aoneLTS$.
    
  We say that an \provablein{\asys} solution $\sasol$ of $\aoneLTS$ is~\emph{\completewrt{\asys}}~if\hspace*{0.25pt}:
  \begin{equation}\tag{concept \ref{contrib:concept:5}}
    \bverti{1} \onebisimvia{\aoneLTS} \bverti{2}
      \;\;\Longrightarrow\;\;
    \asol{\bverti{1}} \eqin{\asys} \asol{\bverti{2}} \punc{,} 
  \end{equation}    
  holds for all $\bverti{1},\bverti{2}\in\verts$, 
    that is, if values of the solution $\sasol$ at \onebisimilar\ vertices of $\aoneLTS$ are \provablyin{\asys} equal. 
\end{defi}


The following lemma gathers preservation statements of (complete) provable solutions
  under (functional) \onebisimilarity\ that are crucial for the completeness proof.  

\begin{lem}\label{lem:preservation:sols}
  On \weaklyguarded\ \onecharts, 
    the following statements hold for all star expressions $\astexp\in\StExp\,$: 
    %
  \begin{enumerate}[label={{\rm (\roman*)}},leftmargin=1.9em,align=right,itemsep=0.25ex]
    \item{}\label{it:conv:funbisims:lem:preservation:sols}
      \Provablein{\milnersys} solvability with principal value $\astexp$
        is preserved under converse functional \onebisimilarity.  
    
    \item{}\label{it:collapse:lem:preservation:sols} 
      \Completewrt{\milnersys} \provablein{\milnersys} solvability with principal value $\astexp$ of a \wg\ \onechart~$\aonechart$ 
        implies \provablein{\milnersys} solvability with principal value $\astexp$ of the bisimulation collapse~of~$\aonechart$.~(See~\ref{contrib:concept:5}.)
            
    \item{}\label{it:onebisims:lem:preservation:sols}
      \Completewrt{\milnersys} \provablein{\milnersys} solvability with principal value $\astexp$
        is preserved under \onebisimilarity.    
  \end{enumerate}
\end{lem}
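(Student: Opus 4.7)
The plan is to treat the three parts in order, using \ref{it:conv:funbisims:lem:preservation:sols} and \ref{it:collapse:lem:preservation:sols} to derive \ref{it:onebisims:lem:preservation:sols}. For \ref{it:conv:funbisims:lem:preservation:sols}, I would pull the solution back along the functional $\sone$-bisimulation. Given $\safun \funin \aonecharti{1} \funonebisim \aonecharti{2}$ and an $\milnersys$-provable solution $\sasoli{2}$ of $\aonecharti{2}$ with principal value $\astexp$, I set $\asoli{1}{\avert} \defdby \asoli{2}{\afun{\avert}}$. Since $\safun$ maps the start vertex of $\aonecharti{1}$ to that of $\aonecharti{2}$, the principal value is preserved. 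To verify the solution equation at an arbitrary vertex $\avert$ of $\aonecharti{1}$, I would pull back the list of direct outgoing transitions and termination behaviour at $\afun{\avert}$ using the forth, back, and termination conditions for $\safun$; possible redundancy (several outgoing transitions at $\avert$ collapsing to the same transition at $\afun{\avert}$) is absorbed by \assocstexpsum, \commstexpsum, \idempotstexpsum.

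For \ref{it:collapse:lem:preservation:sols}, let $\aonechart$ be weakly guarded with a complete $\milnersys$-provable solution $\sasol$, let $\aonecharti{0}$ denote its bisimulation collapse, and let $\sahom \funin \aonechart \funonebisim \aonecharti{0}$ be the canonical quotient (so $\aonecharti{0}$ is $\sone$-transition-free). Completeness of $\sasol$ yields $\asol{\bverti{1}} \milnersyseq \asol{\bverti{2}}$ whenever $\ahom{\bverti{1}} = \ahom{\bverti{2}}$, so the assignment $\asoli{0}{\ahom{\avert}} \defdby \asol{\avert}$ on $\aonecharti{0}$ is well defined up to $\milnersyseq$. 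The core task is to translate the defining equation for $\asol{\avert}$, which may contain summands $\stexpprod{\stexpone}{\asol{\bvert}}$ arising from $\sone$-transitions $\avert \lt{\sone} \bvert$, into an equation whose summands correspond precisely to the proper transitions and termination at $\ahom{\avert}$ in $\aonecharti{0}$. I plan to do this by iteratively substituting the defining equation of $\asol{\bvert}$ into each summand $\stexpprod{\stexpone}{\asol{\bvert}}$ and simplifying using \leftidstexpprod, \assocstexpprod, and \rightdistr; weak guardedness guarantees that this rewriting terminates. The resulting sum lists exactly the induced proper transitions and induced termination at $\avert$, which by construction of the collapse coincide with the proper transitions and termination at $\ahom{\avert}$ in $\aonecharti{0}$, with repeated summands absorbed via \idempotstexpsum.

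Part \ref{it:onebisims:lem:preservation:sols} then follows by combining \ref{it:collapse:lem:preservation:sols} and \ref{it:conv:funbisims:lem:preservation:sols}. Weakly guarded $\sone$-bisimilar $\sone$-charts share a joint bisimulation collapse $\aonecharti{0}$ with canonical functional $\sone$-bisimulations $\sahomi{i} \funin \aonecharti{i} \funonebisim \aonecharti{0}$ for $i \in \setexp{1,2}$, so \ref{it:collapse:lem:preservation:sols} applied to $\aonecharti{1}$ yields an $\milnersys$-provable solution $\sasoli{0}$ of $\aonecharti{0}$ with principal value $\astexp$, and \ref{it:conv:funbisims:lem:preservation:sols} applied along $\sahomi{2}$ yields a solution $\sasoli{2}$ on $\aonecharti{2}$ defined by $\asoli{2}{\avert} \defdby \asoli{0}{\ahomi{2}{\avert}}$, with the same principal value. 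Completeness of $\sasoli{2}$ is automatic: any two $\sone$-bisimilar vertices of $\aonecharti{2}$ have the same image under $\sahomi{2}$ and are assigned literally equal star expressions.

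The main obstacle I expect lies in the rewriting argument in \ref{it:collapse:lem:preservation:sols}, where one must contract arbitrary finite chains of $\sone$-transitions preceding each proper transition or termination into a single summand while staying within $\milnersys$-provable equality. Weak guardedness rules out infinite $\sone$-paths, but one must also carefully track which induced transitions are produced, merge duplicates via \idempotstexpsum, and match the result against the list representation of outgoing transitions of $\aonecharti{0}$ allowed by Def.~\ref{def:provable:solution}, which tolerates redundancy but still requires the sum to equal the one arising from the collapse.
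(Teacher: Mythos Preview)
The paper states this lemma without proof, so there is no paper argument to compare against; your overall strategy---especially deriving \ref{it:onebisims:lem:preservation:sols} from \ref{it:conv:funbisims:lem:preservation:sols} and \ref{it:collapse:lem:preservation:sols} via the shared bisimulation collapse, and the unfolding argument in \ref{it:collapse:lem:preservation:sols}---is the right one.

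There is, however, a gap in your argument for \ref{it:conv:funbisims:lem:preservation:sols}. The forth, back, and termination conditions of a \onebisimulation\ (Def.~\ref{def:onebisim:slices:betw:oneLTSs} and Def.~\ref{def:onebisim:betw:onecharts}) are formulated in terms of \emph{induced} transitions $\avert \ilt{\aact} \avertacc$ and \emph{induced} termination, not in terms of the direct transitions (which may include \onetransitions) that appear in the solution equation of Def.~\ref{def:provable:solution}. So you cannot ``pull back the list of direct outgoing transitions'' at $\afun{\avert}$ using forth/back: $\avert$ may have a \onetransition\ $\avert \lt{\sone} \avertacc$ while $\afun{\avert}$ has none, and nothing in the \onebisimulation\ conditions produces a matching summand. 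The fix is to apply in \ref{it:conv:funbisims:lem:preservation:sols} exactly the flattening you already carry out in \ref{it:collapse:lem:preservation:sols}: establish as a separate lemma that for a \weaklyguarded\ \onechart~$\aonechart$ a star-expression function is a \provablein{\milnersys} solution of $\aonechart$ if and only if it is one of the induced (\onetransition-free) chart $\indchartof{\aonechart}$; both directions use only \leftidstexpprod, \rightdistr, \neutralstexpsum, \idempotstexpsum\ and terminate by weak guardedness. Then $\safun$ is an ordinary functional bisimulation between $\indchartof{\aonecharti{1}}$ and $\indchartof{\aonecharti{2}}$, your direct pullback argument goes through verbatim there, and you return to $\aonecharti{1}$ via the other direction of the flattening lemma.
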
 

\section{LLEE-$\protect\sone$-Charts} 
  \label{LLEEonecharts}
We use the adaptation of the `loop existence and elimination property' \LEE\ from \cite{grab:fokk:2020:lics} 
  to \onecharts\ as described in \cite{grab:2020:TERMGRAPH2020-postproceedings:arxiv,grab:2021:calco}.
Here we only briefly explain the concept by examples,
  and refer to \cite{grab:2020:TERMGRAPH2020-postproceedings:arxiv,grab:2021:calco} and to the appendix for the definitions. 
Crucially, we gather statements from \cite{grab:2020:TERMGRAPH2020-postproceedings:arxiv,grab:2021:calco} that we need
  for~the~proof.

\LEE\ is defined by a stepwise elimination procedure of `loop \subonecharts' from a given \onechart~$\aonechart$.
A run of this procedure is illustrated in Fig.~\ref{fig:ex:LEE}.
A \onechart~$\aoneloop = \tuple{\verts,\actions,\sone,\start,\transs,\termexts}$ is hereby called a \emph{loop \onechart} if
  it satisfies three 
               conditions:
  \begin{enumerate}[label={(L\arabic{*})},align=right,leftmargin=*,itemsep=0ex]
    \item{}\label{loop:1}
      There is an infinite path from the start vertex $\start$.
    \item{}\label{loop:2}  
      Every infinite path from $\start$ returns to $\start$ after a positive number of transitions.
    \item{}\label{loop:3}
      Immediate termination is only permitted at the start vertex, that is, $\termexts\subseteq\setexp{\start}$.
  \end{enumerate}
A \emph{loop \subonechart\ of} a \onechart~$\aonechart$
    is a loop \onechart~$\aoneloop$
    that is a \subonechart\ of $\aonechart$ 
      with some vertex $\avert\in\verts$ of $\aonechart$ as start vertex,
    such that $\aoneloop$ is constructed, for a nonempty set $\asettranss$ of transitions of $\aonechart$ from $\avert$,
    by all paths that start with a transition in $\asettranss$ and continue onward until $\avert$ is reached again
  (so the transitions in $\asettranss$ are the \loopentrytransitions~of~$\aoneloop$).
\emph{Eliminating a loop \subonechart\ $\aoneloop$ from a \onechart\ $\aonechart$}
  consists of removing all \loopentrytransitions\ of $\aoneloop$ from $\aonechart$, 
  and then also removing all vertices and transitions that become unreachable. 
Fig.~\ref{fig:ex:LEE} shows a successful three-step run of the loop elimination procedure.   
A \onechart\ $\aonechart$ has the \emph{loop existence and elimination property} (\LEE)
  if the procedure, started on~$\aonechart$, of repeated eliminations of loop \subonecharts\
  results in a \onechart\ without an infinite path.
If, in a successful elimination process from a \onechart~$\aonechart$,
  \loopentrytransitions\ are never removed from the body of a previously eliminated loop \subonechart,
  then we say that $\aonechart$ satisfies \emph{layered \LEE} (\LLEE),
  and is a \emph{\LLEEonechart}.  
\LLEEoneLTSs\ are \oneLTSs\ that are defined analogously.  
While the property \LLEE\ leads to a formally easier concept of `witness', it is equivalent~to~\LEE. 
Since the resulting \onechart~$\aonechart'''$ in Fig.~\ref{fig:ex:LEE} does not have 
an infinite path, and no \loopentry\ transitions have been removed from a previously eliminated loop \subonechart,
we conclude that the initial \onechart~$\aonechart$ satisfies \LLEE\ as well as \LEE.%
  \begin{figure}[t!]
\begin{center}
\begin{tikzpicture}
 
\matrix[anchor=center,row sep=1cm,column sep=0.55cm,
        every node/.style={draw,very thick,circle,minimum width=2.5pt,fill,inner sep=0pt,outer sep=2pt}] at (0,0) {
    & \node[chocolate](v--1){};
  \\          
  \node(v1){};
    & & \node(v2){};
  \\
  \node(v11){};
    & & \node(v21){};
  \\   
  };   
\path (v--1) ++ (-0.65cm,0.35cm) node(label){\LARGE $\aonechart$};

\draw[<-,very thick,>=latex,chocolate,shorten <= 2pt](v--1) -- ++ (90:0.55cm);   
\draw[thick,chocolate] (v--1) circle (0.12cm);
\path (v--1) ++ (0.25cm,0.3cm) node{\small $\avert$};
\draw[->,shorten >= 0.175cm,shorten <= 2pt] 
  (v--1) to 
         (v11);
\draw[->,shorten >= 0.175cm,shorten <= 2pt] 
  (v--1) to 
         (v21);

\path (v1) ++ (-0.225cm,0.25cm) node{\small $\averti{1}$};
\draw[->,very thick,shorten >= 0pt]
  (v1) to 
          (v11);
\draw[->,thick,densely dotted,out=90,in=180,distance=0.5cm,shorten >=2pt](v1) to (v--1);
\draw[->,shorten <= 0pt,shorten >= 0pt] (v1) to 
                                                (v21); 

\path (v11) ++ (0cm,-0.25cm) node{\small $\averti{11}$};
\draw[->,thick,densely dotted,out=180,in=210,distance=0.45cm](v11) to (v1);

\path (v2) ++ (0.25cm,0.25cm) node{\small $\averti{2}$};
\draw[->,shorten >= 0pt]
  (v2) to 
          (v21);
\draw[->,thick,densely dotted,out=90,in=0,distance=0.5cm,shorten >= 2pt](v2) to (v--1);

\path (v21) ++ (0cm,-0.25cm) node{\small $\averti{21}$};
\draw[->,thick,densely dotted,out=-0,in=-30,distance=0.45cm](v21) to (v2);

\matrix[anchor=center,row sep=1cm,column sep=0.55cm,
        every node/.style={draw,very thick,circle,minimum width=2.5pt,fill,inner sep=0pt,outer sep=2pt}] at (2.6,0) {
    & \node[chocolate](v--2){};
  \\           
  \node(v1){};
    & & \node(v2){};
  \\
  \node(v11){};
    & & \node(v21){};
  \\   
  };   
  %

\draw[<-,very thick,>=latex,chocolate,shorten <= 2pt](v--2) -- ++ (90:0.55cm);   
\draw[thick,chocolate] (v--2) circle (0.12cm);
\path (v--2) ++ (0.25cm,0.3cm) node{\small $\avert$};
\draw[->,shorten <= 2pt] 
  (v--2) to 
         (v11);
\draw[->,shorten >= 0.175cm,shorten <= 2pt] 
  (v--2) to 
         (v21);

\path (v1) ++ (-0.225cm,0.25cm) node{\small $\averti{1}$};
\draw[->,thick,densely dotted,out=90,in=180,distance=0.5cm,shorten >=2pt](v1) to (v--2);
\draw[->,shorten <= 0pt,shorten >= 0pt] (v1) to 
                                                (v21); 

\path (v11) ++ (0cm,-0.25cm) node{\small $\averti{11}$};
\draw[->,thick,densely dotted,out=180,in=210,distance=0.45cm](v11) to (v1);

\path (v2) ++ (0.25cm,0.25cm) node{\small $\averti{2}$};
\draw[->,very thick,shorten >= 0pt]
  (v2) to 
          (v21);
\draw[->,thick,densely dotted,out=90,in=0,distance=0.5cm,shorten >= 2pt](v2) to (v--2);

\path (v21) ++ (0cm,-0.25cm) node{\small $\averti{21}$};
\draw[->,thick,densely dotted,out=-0,in=-30,distance=0.45cm](v21) to (v2);

\draw[-implies,thick,double equal sign distance, bend left,distance=1.1cm,
               shorten <= 0.5cm,shorten >= 0.4cm
               ] (v--1) to node[below,pos=0.7]{\scriptsize elim} (v--2);

\matrix[anchor=center,row sep=1cm,column sep=0.55cm,
        every node/.style={draw,very thick,circle,minimum width=2.5pt,fill,inner sep=0pt,outer sep=2pt}] at (5.2,0) {
    & \node[chocolate](v--3){};
  \\           
  \node(v1){};
    & & \node(v2){};
  \\
  \node(v11){};
    & & \node(v21){};
  \\   
  };   
  %

\draw[<-,very thick,>=latex,chocolate,shorten <= 2pt](v--3) -- ++ (90:0.55cm);   
\draw[thick,chocolate] (v--3) circle (0.12cm);
\path (v--3) ++ (0.25cm,0.3cm) node{\small $\avert$};
\draw[->,very thick,shorten <= 2pt] 
  (v--3) to 
         (v11);
\draw[->,very thick,shorten >= 0.1cm,shorten <= 2pt] 
  (v--3) to 
         (v21);

\path (v1) ++ (-0.225cm,0.25cm) node{\small $\averti{1}$};
\draw[->,thick,densely dotted,out=90,in=180,distance=0.5cm,shorten >=2pt](v1) to (v--3);
\draw[->,shorten <= 0pt,shorten >= 0pt] (v1) to 
                                                (v21); 

\path (v11) ++ (0cm,-0.25cm) node{\small $\averti{11}$};
\draw[->,thick,densely dotted,out=180,in=210,distance=0.45cm](v11) to (v1);

\path (v2) ++ (0.25cm,0.25cm) node{\small $\averti{2}$};
\draw[->,thick,densely dotted,out=90,in=0,distance=0.5cm,shorten >= 2pt](v2) to (v--3);

\path (v21) ++ (0cm,-0.25cm) node{\small $\averti{21}$};
\draw[->,thick,densely dotted,out=-0,in=-30,distance=0.45cm](v21) to (v2);

\draw[-implies,thick,double equal sign distance, bend left,distance=1.1cm,
               shorten <= 0.5cm,shorten >= 0.4cm
               ] (v--2) to node[below,pos=0.7]{\scriptsize elim} (v--3);

\matrix[anchor=center,row sep=1cm,column sep=0.55cm,
        every node/.style={draw,very thick,circle,minimum width=2.5pt,fill,inner sep=0pt,outer sep=2pt}] at (7,0) {
    & \node[chocolate](v--4){};
  \\           
  \node[draw=none,fill=none](v1){};
    & & \node[draw=none,fill=none](v2){};
  \\
  \node[draw=none,fill=none](v11){};
    & & \node[draw=none,fill=none](v21){};
  \\   
  };   
\path (v--4) ++ (0.15cm,-0.6cm) node(label){\LARGE $\aonechart'''$};

\draw[<-,very thick,>=latex,chocolate,shorten <= 2pt](v--4) -- ++ (90:0.55cm);   
\draw[thick,chocolate] (v--4) circle (0.12cm);
\path (v--4) ++ (0.25cm,0.3cm) node{\small $\avert$};

\draw[-implies,thick,double equal sign distance, bend left,distance=0.75cm,
               shorten <= 0.45cm,shorten >= 0.3cm
               ] (v--3) to node[below,pos=0.7]{\scriptsize elim} (v--4);

\end{tikzpicture}
\end{center}
  \vspace*{-2.5ex}
  \caption{\label{fig:ex:LEE}%
           A successful run of the loop elimination procedure.
           The start vertex is indicated by \protect\picarrowstart,
           immediate termination by a boldface ring.
           \protect\Loopentry\ transitions of loop \protect\subonecharts\ eliminated in the next step are marked in bold. 
           Action labels are neglected, however dotted arrows indicate \protect\onetransitions. 
           }
\end{figure}    %
  \begin{figure}
  \vspace*{-2ex}
\begin{center}
\begin{tikzpicture}
\matrix[anchor=center,row sep=1cm,column sep=0.55cm,,
        every node/.style={draw,very thick,circle,minimum width=2.5pt,fill,inner sep=0pt,outer sep=2pt}] at (0,0) {
    & \node[chocolate](v){};
  \\[-0.25ex]              
  \node(v1){};
    & & \node(v2){};
  \\[0.25cm]
  \node(v11){};
    & & \node(v21){};
  \\   
  };   
\path (v) ++ (-0.6cm,0.5cm) node(label){\LARGE $\aonecharthati{1}$};

\draw[<-,very thick,>=latex,chocolate,shorten <= 2pt](v) -- ++ (90:0.55cm);   
\draw[thick,chocolate] (v) circle (0.12cm);
\path (v) ++ (0.25cm,-0.3cm) node{\small $\avert$};
\draw[->,thick,darkcyan,shorten >= 0.175cm,shorten <= 2pt] 
  (v) to 
         node[right,pos=0.4,xshift=-0.075cm,yshift=1pt]{\small $\loopnsteplab{3}$}  (v11);
\draw[->,thick,darkcyan,shorten >= 0.175cm,shorten <= 2pt] 
  (v) to 
         node[left,pos=0.6,xshift=0.075cm,yshift=1pt]{\small $\loopnsteplab{3}$} (v21);

\path (v1) ++ (-0.225cm,0.25cm) node{\small $\averti{1}$};
\draw[->,thick,darkcyan,shorten >= 0pt]
  (v1) to 
          node[left,pos=0.6,xshift=0.075cm]{\small $\loopnsteplab{1}$} (v11);
\draw[->,thick,densely dotted,out=90,in=180,distance=0.5cm,shorten >=2pt](v1) to (v);
\draw[->,shorten <= 0pt,shorten >= 0pt] (v1) to 
                                                (v21); 

\path (v11) ++ (0cm,-0.25cm) node{\small $\averti{11}$};
\draw[->,thick,densely dotted,out=180,in=210,distance=0.7cm](v11) to (v1);

\path (v2) ++ (0.25cm,0.25cm) node{\small $\averti{2}$};
\draw[->,thick,darkcyan,shorten >= 0pt]
  (v2) to 
          node[right,pos=0.6,xshift=-0.075cm]{\small $\loopnsteplab{2}$} (v21);
\draw[->,thick,densely dotted,out=90,in=0,distance=0.5cm,shorten >= 2pt](v2) to (v);

\path (v21) ++ (0cm,-0.25cm) node{\small $\averti{21}$};
\draw[->,thick,densely dotted,out=0,in=-30,distance=0.7cm](v21) to (v2);

\matrix[anchor=center,row sep=1cm,column sep=0.55cm,,
        every node/.style={draw,very thick,circle,minimum width=2.5pt,fill,inner sep=0pt,outer sep=2pt}] at (2.9,0) {
    & \node[chocolate](v){};
  \\[-0.25ex]              
  \node(v1){};
    & & \node(v2){};
  \\[0.25cm]
  \node(v11){};
    & & \node(v21){};
  \\   
  };   
\path (v) ++ (-0.6cm,0.5cm) node(label){\LARGE $\aonecharthati{2}$};

\draw[<-,very thick,>=latex,chocolate,shorten <= 2pt](v) -- ++ (90:0.55cm);   
\draw[thick,chocolate] (v) circle (0.12cm);
\path (v) ++ (0.25cm,-0.3cm) node{\small $\avert$};
\draw[->,thick,darkcyan,shorten >= 0.175cm,shorten <= 2pt] 
  (v) to 
         node[right,pos=0.4,xshift=-0.075cm,yshift=1pt]{\small $\loopnsteplab{4}$}  (v11);
\draw[->,thick,darkcyan,shorten >= 0.175cm,shorten <= 2pt] 
  (v) to 
         node[left,pos=0.6,xshift=0.075cm,yshift=1pt]{\small $\loopnsteplab{3}$} (v21);

\path (v1) ++ (-0.225cm,0.25cm) node{\small $\averti{1}$};
\draw[->,thick,darkcyan,shorten >= 0pt]
  (v1) to 
          node[left,pos=0.6,xshift=0.075cm]{\small $\loopnsteplab{2}$} (v11);
\draw[->,thick,densely dotted,out=90,in=180,distance=0.5cm,shorten >=2pt](v1) to (v);
\draw[->,shorten <= 0pt,shorten >= 0pt] (v1) to 
                                                (v21); 

\path (v11) ++ (0cm,-0.25cm) node{\small $\averti{11}$};
\draw[->,thick,densely dotted,out=180,in=210,distance=0.7cm](v11) to (v1);

\path (v2) ++ (0.25cm,0.25cm) node{\small $\averti{2}$};
\draw[->,thick,darkcyan,shorten >= 0pt]
  (v2) to 
          node[right,pos=0.6,xshift=-0.075cm]{\small $\loopnsteplab{1}$} (v21);
\draw[->,thick,densely dotted,out=90,in=0,distance=0.5cm,shorten >= 2pt](v2) to (v);

\path (v21) ++ (0cm,-0.25cm) node{\small $\averti{21}$};
\draw[->,thick,densely dotted,out=0,in=-30,distance=0.7cm](v21) to (v2);

\matrix[anchor=center,row sep=1cm,column sep=0.55cm,,
        every node/.style={draw,very thick,circle,minimum width=2.5pt,fill,inner sep=0pt,outer sep=2pt}] at (5.8,0) {
    & \node[chocolate](v){};
  \\[-0.25ex]              
  \node(v1){};
    & & \node(v2){};
  \\[0.25cm]
  \node(v11){};
    & & \node(v21){};
  \\   
  };   
\path (v) ++ (-0.6cm,0.5cm) node(label){\LARGE $\aonecharthati{3}$};

\draw[<-,very thick,>=latex,chocolate,shorten <= 2pt](v) -- ++ (90:0.55cm);   
\draw[thick,chocolate] (v) circle (0.12cm);
\path (v) ++ (0.25cm,-0.3cm) node{\small $\avert$};
\draw[->,thick,darkcyan,shorten >= 0.175cm,shorten <= 2pt] 
  (v) to 
         node[right,pos=0.4,xshift=-0.075cm,yshift=1pt]{\small $\loopnsteplab{2}$}  (v11);
\draw[->,thick,darkcyan,shorten >= 0.175cm,shorten <= 2pt] 
  (v) to 
         node[left,pos=0.6,xshift=0.075cm,yshift=1pt]{\small $\loopnsteplab{2}$} (v21);

\path (v1) ++ (-0.225cm,0.25cm) node{\small $\averti{1}$};
\draw[->,thick,darkcyan,shorten >= 0pt]
  (v1) to 
          node[left,pos=0.6,xshift=0.075cm]{\small $\loopnsteplab{1}$} (v11);
\draw[->,thick,densely dotted,out=90,in=180,distance=0.5cm,shorten >=2pt](v1) to (v);
\draw[->,shorten <= 0pt,shorten >= 0pt] (v1) to 
                                                (v21); 

\path (v11) ++ (0cm,-0.25cm) node{\small $\averti{11}$};
\draw[->,thick,densely dotted,out=180,in=210,distance=0.7cm](v11) to (v1);

\path (v2) ++ (0.25cm,0.25cm) node{\small $\averti{2}$};
\draw[->,thick,darkcyan,shorten >= 0pt]
  (v2) to 
          node[right,pos=0.6,xshift=-0.075cm]{\small $\loopnsteplab{1}$} (v21);
\draw[->,thick,densely dotted,out=90,in=0,distance=0.5cm,shorten >= 2pt](v2) to (v);

\path (v21) ++ (0cm,-0.25cm) node{\small $\averti{21}$};
\draw[->,thick,densely dotted,out=0,in=-30,distance=0.7cm](v21) to (v2);

\end{tikzpicture}
\end{center}  
  \vspace*{-2.5ex}
  \caption{\label{fig:ex:LLEEw-1-2-3}%
           Three \protect\LLEEwitnesses\ of the \protect\onechart~$\aonechart$ in Fig.~\protect\ref{fig:ex:LEE}.
           $\aonecharthati{1}$~is a recording of the successful procedure run in Fig.~\ref{fig:ex:LEE} 
           of the order in which \loopentry\ transitions have been removed.
           }
\end{figure}%
  
A \emph{\LLEEwitness\ $\aonecharthat$ of} a \onechart~$\aonechart$
is the recording of a successful run of the loop elimination procedure
by attaching to a transition $\atrans$ of $\aonechart$ the marking label $n$ for $n\in\natplus$ 
 (in pictures indicated as $\looplab{n}$, in steps as $\sredi{\looplab{n}}$) 
 forming a \emph{\loopentry\ transition}
if $\atrans$ is eliminated in the $n$\nb-th step,
and by attaching marking label $0$ to all other transitions of $\aonechart$
 (in pictures neglected, in steps indicated as $\sredi{\bodylab}$)
 forming a \emph{body transition}. 
Formally, \LLEEwitnesses\ arise as \emph{\entrybodylabeling{s}} from \onecharts,
  and are charts in which the transition labels are pairs of action labels over $\actions$,
  and marking labels~in~$\nat$.
  %
  
The \LLEEwitness~$\aonecharthati{1}$ in Fig.~\ref{fig:ex:LLEEw-1-2-3}\vspace*{-1.5pt}  
  arises from the run of the loop elimination procedure in Fig.~\ref{fig:ex:LEE}.\vspace*{-1.5pt}
The \LLEEwitnesses~$\aonecharthati{2}$ and $\aonecharthati{3}$ of $\aonechart$ in Fig.~\ref{fig:ex:LLEEw-1-2-3} 
  record two other successful runs of the loop elimination procedure of length 4 and 2, respectively, 
  where for $\aonecharthati{3}$ we have permitted to eliminate two loop subcharts at different vertices together in the first step.

\begin{defi}[\protect\onecharts\ with \protect\LLEEonelim]
            \label{def:LLEEonelim}
  Let $\aonechart$ be a \onechart. 
  
  Let $\aonecharthat$ be a \LLEEwitness\ of $\aonechart$.  
  We say that $\aonecharthat$ \emph{is \onetransition\ limited}
    if every \onetransition\ of $\aonechart$ lifts to a \emph{\backlink} (a transition from the body of a loop \subonechart\ back to its start) in $\aonecharthat$. 
  Fixing also a weaker property, we say that
     $\aonecharthat$ is \emph{guarded} if all of its \loopentry\ transitions are proper~transitions. 
    
  We say that $\aonechart$ \emph{satisfies \LLEEonelim}, and 
                                                            \emph{is \onetransition\ limited},
    if $\aonechart$ has a \onetransition\ limited \LLEEwitness.
  We say that $\aonechart$ \emph{is guarded} if $\aonechart$ has a guarded \LLEEwitness.
\end{defi}





We note that \onetransition\ limited \LLEEonecharts\ are guarded
  (since \loopentry\ transition are not backlinks),
and guarded \LLEE-\onecharts\ are weakly guarded
(since in a guarded \LLEEwitness\ every \onetransition\ path
 is a \bodytransition\ path, which as in every \LLEEwitness\ is guaranteed to be finite).

\begin{lem}\label{lem:LLEEonechart:2:LLEEonelimonechart}
  Every 
        \weaklyguarded\ \LLEEonechart\
  is \onebisimilar\ to a 
                         (guarded)
                         \onechart\ with \LLEEonelim.                        
\end{lem}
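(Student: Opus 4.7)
The plan is to construct, for a given weakly guarded \LLEEonechart~$\aonechart$ with a \LLEEwitness~$\aonecharthat$, an \onebisimilar\ \LLEEonechart\ whose \LLEEwitness\ is \onetransition\ limited, that is, one in which every \onetransition\ lifts to a \backlink. Since a \onetransition\ limited \LLEEwitness\ is automatically guarded, this will deliver both claimed properties at once. The strategy is to transform $\aonechart$ by iteratively eliminating the two kinds of ``problematic'' \onetransitions\ with respect to $\aonecharthat$---namely, those marked as \loopentry\ transitions, and those body transitions that are not \backlinks---via a \txtconnectthrough\nobreakdash-style operation in the spirit of \cite{grab:fokk:2020:lics}, while preserving \onebisimilarity\ and the layered loop structure.

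First I would exploit weak guardedness: since $\aonechart$ has no infinite path of \onetransitions, the directed graph of \onetransitions\ on the vertices of $\aonechart$ is well-founded, so I can induct on the ``\onetransition-depth'' of a vertex. At each stage I pick a problematic \onetransition~$\avert\lt{\sone}\avertacc$ whose target $\avertacc$ has minimal \onetransition-depth, ensuring that every outgoing \onetransition\ at $\avertacc$ is already a \backlink. The behaviour of $\avertacc$ is then described by a finite list of proper-action transitions $\avertacc\lt{\aact}\bvert$, possibly immediate termination, and outgoing \backlink\ \onetransitions\ into start vertices of enclosing loop \subonecharts. I replace $\avert\lt{\sone}\avertacc$ by \txtconnectthrough-copies of these transitions emanating from $\avert$, inheriting the marking of $\avert\lt{\sone}\avertacc$ whenever it was a \loopentry\ transition (so that the new copies become proper-action \loopentry\ transitions at the same layer) and marking the copies as body transitions otherwise. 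Iterating, every problematic \onetransition\ is eliminated; \onebisimilarity\ with $\aonechart$ is immediate at each step, because the induced-transition behaviour at $\avert$ (its induced action-targets and induced termination) is left unchanged and the start vertex is never touched.

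The main obstacle will be verifying that every step preserves a valid \LLEEwitness, and in particular its layering. When a new transition $\avert\lt{\aact}\bvert$ is created to replace part of $\avert\lt{\sone}\avertacc$, it must fit coherently into the nesting of loop \subonecharts, so that the loop entered at $\avert$ still returns to $\avert$ through body transitions, and so that no \loopentry\ transition is removed from the body of a previously eliminated loop \subonechart. The key point is that a \loopentry~$\avert\lt{\sone}\avertacc$ of layer $n$ opens a loop \subonechart\ whose body starting from $\avertacc$ stays intact after the replacement, with the new copies serving as alternative first steps of the same loop; an analogous statement is needed for body \onetransitions\ that get rerouted into paths that either close the enclosing loop (yielding \backlinks) or exit it. Combined with the termination and correctness properties for \txtconnectthrough-style operations from \cite{grab:fokk:2020:lics} and the adaptations of \LEE\ to \onecharts\ cited in this section, this bookkeeping would complete the argument.
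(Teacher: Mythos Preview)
The paper does not supply a proof of this lemma in the main text or in the appendix; it is one of the statements imported from the companion papers \cite{grab:2020:TERMGRAPH2020-postproceedings:arxiv,grab:2021:calco} (see the opening paragraph of Section~\ref{LLEEonecharts}). So there is no proof in the paper against which to compare your attempt directly.

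That said, your overall strategy---iteratively replacing each problematic \onetransition\ $v\lt{\sone}v'$ by copies of the outgoing transitions of $v'$, using weak guardedness to ensure termination---is the natural approach and is along the right lines. Your minimality-of-\onetransition-depth argument for termination is clean, and your observation that the induced transitions and induced termination at $v$ are unchanged gives \onebisimilarity\ step by step.

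There is one point your marking rule glosses over. You let the copies ``inherit the marking of $v\lt{\sone}v'$'', ignoring the markings that the outgoing transitions of $v'$ carried originally. When $v\lt{\sone}v'$ is a body (non-backlink) \onetransition\ and $v'$ happens to be a loop vertex with a \loopentry\ transition $v'\redi{\looplab{m}}w$, your rule produces a \emph{body} transition $v\redi{\bodylab}w$ that jumps from outside the inner loop at $v'$ directly into its body. This is in fact harmless for layeredness---the inner loop at $v'$ and the outer loop containing $v,v'$ both remain valid, and the layering condition only forbids removing a \loopentry\ transition that lies \emph{inside} an already-eliminated loop body, which does not happen here---but you should argue this explicitly rather than file it under ``bookkeeping''. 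Dually, in the \loopentry\ case $v\lt{\sone}_{\looplab{n}}v'$ you should note that weak guardedness together with the loop condition~\ref{loop:2} force $v'$ to have no outgoing \onetransition\ backlinks at all (a backlink to $v$ would close a \onetransition\ cycle; a backlink to an outer loop start would let an infinite path escape the loop at $v$), so only proper-action copies are produced and the termination-copying also needs to be made explicit. With these refinements your outline becomes a complete argument.
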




Two crucial properties of \LLEEonecharts\ that motivate
  their use, like for \LLEEcharts\ earlier in \cite{grab:fokk:2020:lics,grab:fokk:2020:lics:arxiv},
  are their provable solvability and unique solvability modulo provability~in~$\milnersys$.
The two lemmas below that express these properties
  are generalizations to \LLEEonecharts\ of Prop.~5.5 and Prop.~5.8 in \cite{grab:fokk:2020:lics,grab:fokk:2020:lics:arxiv}),
  and have been proved in \cite{grab:2021:calco:arxiv,grab:2021:calco}.  

\begin{lem}\label{lem:LLEEonecharts:extraction}
  From every guarded \LLEEwitness~$\aonecharthat$
    of a (\weaklyguarded) \LLEEonechart\ $\aonechart$
      a \provablein{\milnersysmin} solution $\sextrsolof{\aonecharthatsubscript}$ 
                                     of $\aonechart$ can be extracted effectively.
\end{lem}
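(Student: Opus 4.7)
The plan is to proceed by induction on the number of loop-elimination steps recorded in the guarded \LLEEwitness~$\aonecharthat$, simultaneously defining the extracted star expression function $\sextrsolof{\aonecharthatsubscript}$ and verifying its solution equations in $\milnersysmin$. This mirrors the strategy of Prop.~5.5 in \cite{grab:fokk:2020:lics,grab:fokk:2020:lics:arxiv} for \LLEEcharts, adapted to cope with \onetransitions\ that may appear as body transitions---though, crucially, by guardedness of $\aonecharthat$, never as \loopentry\ transitions.

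For the base case, $\aonecharthat$ records zero eliminations, so the underlying \onechart~$\aonechart$ has no infinite path; since every body transition in a \LLEEwitness\ belongs to a \wellfounded\ hierarchy, $\aonechart$ admits a bottom-up traversal. For each $\avert \in \vertsof{\aonechart}$, set $\extrsolof{\aonecharthatsubscript}{\avert} \defdby \stexpsum{\terminatesconstof{\aonechart}{\avert}}{\sum_i \stexpprod{\aoneacti{i}}{\extrsolof{\aonecharthatsubscript}{\averti{i}}}}$, where the sum ranges over the outgoing transitions $\avert \lt{\aoneacti{i}} \averti{i}$. Each solution equation then holds by reflexivity, hence \provablyin{\milnersysmin}.

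For the inductive step, pick a loop subchart $\aoneloop$ at some vertex $\bvert$ whose \loopentry\ transitions $\bvert \lt{\aoneacti{i}} \averti{i}$, for $i = 1,\ldots,k$, are scheduled first for elimination in $\aonecharthat$. By guardedness all labels $\aoneacti{i}$ are proper actions. Let $\aonecharthat'$ denote the witness obtained by removing exactly these $k$ entry transitions (together with any vertices and transitions rendered unreachable); it is a guarded \LLEEwitness\ of a residual \onechart~$\aonechart'$ with one fewer elimination step. By the induction hypothesis we obtain an \provablein{\milnersysmin} solution $\sbsol$ of $\aonechart'$. Now define the loop body expression $\bstexp \defdby \sum_{i=1}^k \stexpprod{\aoneacti{i}}{\bsol{\averti{i}}}$, set $\cstexp \defdby \bsol{\bvert}$ (the principal value at $\bvert$ in the residual, encoding all exit and termination behaviour), and put $\extrsolof{\aonecharthatsubscript}{\bvert} \defdby \stexpprod{\stexpit{\bstexp}}{\cstexp}$, while keeping $\extrsolof{\aonecharthatsubscript}{\avert} \defdby \bsol{\avert}$ at every other vertex.

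The main technical obstacle---and the step I expect to consume most of the work---is verifying the solution equation at $\bvert$ inside $\milnersysmin$, that is, without any appeal to \RSPstar. Unfolding $\stexpit{\bstexp}$ via axiom \recdefstexpit\ and applying \rightdistr, \leftidstexpprod, associativity of $\sstexpprod$, and repeated distribution yields $\stexpprod{\stexpit{\bstexp}}{\cstexp} \eqin{\milnersysmin} \stexpsum{\cstexp}{\sum_{i=1}^k \stexpprod{\aoneacti{i}}{\stexpprod{\bsol{\averti{i}}}{\stexpprod{\stexpit{\bstexp}}{\cstexp}}}}$. One then substitutes the solution equation for $\bsol{\averti{i}}$ in $\aonechart'$ and reassociates; the crucial accounting step is that the outgoing transitions of $\bvert$ in $\aonechart$ split cleanly into the $k$ \loopentry\ transitions (contributing the $\bstexp$-summands that get starred) and the remainder transitions from $\bvert$ in $\aonechart'$ (which $\cstexp$ already solves), so that the reassembled right-hand side matches precisely the solution equation for $\aonechart$ at $\bvert$. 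For every other vertex the equation transfers verbatim from $\aonechart'$ to $\aonechart$, since its local transitions are unaltered. The delicacy, familiar from the \LLEEchart\ case but aggravated here by the possible presence of body \onetransitions\ and of backlinks, lies in making this bookkeeping precise and confirming that the \LLEE\ layering in $\aonecharthat$ guarantees well-foundedness of the recursion throughout.
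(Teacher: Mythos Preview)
The paper does not prove this lemma itself but cites \cite{grab:2021:calco:arxiv,grab:2021:calco}, generalizing Prop.~5.5 of \cite{grab:fokk:2020:lics}. The construction there hinges on a \emph{two-argument} relative extraction $t(\avert,\bvert)$ that captures the behaviour from a body vertex $\avert$ until it first returns to the loop vertex $\bvert$ it loops back to; the full solution is then assembled from these relative pieces along the loops-back-to hierarchy of the \LLEEwitness.

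Your one-argument induction has a genuine gap in the inductive step. You set the body expression $\bstexp = \sum_i \stexpprod{\aoneacti{i}}{\bsol{\averti{i}}}$ using the \emph{full} solution values $\bsol{\averti{i}}$ from the residual chart $\aonechart'$, but after the loop-entry transitions from $\bvert$ are deleted, the targets $\averti{i}$ (and the other body vertices of the loop) are typically no longer reachable and are removed from $\aonechart'$, so $\bsol{\averti{i}}$ is not even defined. Even when some $\averti{i}$ survives, $\bsol{\averti{i}}$ is not the behaviour of one iteration back to $\bvert$: it already continues through $\bvert$ along $\bvert$'s remaining transitions in $\aonechart'$ and onward, so starring $\bstexp$ does not model the loop. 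The same issue breaks your claim that equations at $\avert\neq\bvert$ transfer verbatim: any body vertex with a body transition back to $\bvert$ has a solution equation that mentions the value at $\bvert$, which you changed from $\bsol{\bvert}$ to $\stexpprod{\stexpit{\bstexp}}{\bsol{\bvert}}$, so keeping the old value at $\avert$ cannot satisfy the new equation. What is missing is precisely the relative extraction: for each body vertex $\avert$ one first computes an expression for the behaviour from $\avert$ until $\bvert$, and then sets the solution at $\avert$ to that expression post-multiplied by the solution at $\bvert$.
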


\begin{lem}\label{lem:LLEEonecharts:uniquely:solvable}
  For every guarded \LLEEonechart\ $\aonechart$ it holds that
    any two \provablein{\milnersys} solutions of $\aonechart$
    are \provablyin{\milnersys} equal.
\end{lem}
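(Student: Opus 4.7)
My plan is to reduce the statement to the already-extracted solution from Lemma~\ref{lem:LLEEonecharts:extraction}: fix a guarded \LLEEwitness\ $\aonecharthat$ of $\aonechart$, let $\sextrsolof{\aonecharthatsubscript}$ be the \provablein{\milnersysmin} solution of $\aonechart$ extracted from $\aonecharthat$, and show that every \provablein{\milnersys} solution $\sasol$ of $\aonechart$ satisfies $\asol{\avert} \milnersyseq \extrsolof{\aonecharthatsubscript}{\avert}$ for every vertex $\avert$. Since $\milnersysmin \subseteq \milnersys$, this yields uniqueness modulo \milnersys\ by symmetry and transitivity.

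The argument proceeds by induction on the number $n$ of loop-elimination steps recorded by the guarded \LLEEwitness\ $\aonecharthat$, together with an auxiliary induction on the number of vertices that remain after all loop \subonecharts\ have been eliminated. In the base case $n=0$ the witness is acyclic as an \entrybodylabeling, so $\aonechart$ is a \onetransition-limited DAG and both $\sasol$ and $\sextrsolof{\aonecharthatsubscript}$ can be computed from the leaves upward using only the equational axioms of $\milnersysmin$, yielding their \provablein{\milnersys} equality by a second, inner induction on the reverse topological order. For the inductive step I would isolate one \loopentry\ transition-labeled as $\looplab{1}$ (that is, eliminated first) with source $\avert$, and consider the loop \subonechart\ $\aoneloop$ that is about to be removed. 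By guardedness of $\aonecharthat$ every \loopentry\ transition of $\aoneloop$ at $\avert$ is labeled by a proper action, and every infinite body-path of $\aoneloop$ returns to $\avert$ (property~\ref{loop:2}); using the provable-solution equations of $\sasol$ along the finitely many body-paths in $\aoneloop$ from $\avert$ back to $\avert$ (these are finite because $\aonecharthat$ has no infinite body-path by \LLEE) one can, by the right-distributivity axiom \rightdistr\ and associativity, derive a guarded fixed-point equation
\[
  \asol{\avert}
    \,\milnersyseq\,
  \stexpsum{\stexpprod{\bstexp}{\asol{\avert}}}{\cstexp}
  \punc{,}
\]
in which $\bstexp$ assembles the body cycles of $\aoneloop$ and $\notterminates{\bstexp}$ holds because the \loopentry\ transitions are proper. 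Applying the fixed-point rule $\RSPstar$ then gives $\asol{\avert} \milnersyseq \stexpprod{\stexpit{\bstexp}}{\cstexp}$. Since by construction $\extrsolof{\aonecharthatsubscript}{\avert}$ is obtained from $\aonecharthat$ in exactly the same form $\stexpprod{\stexpit{\bstexp'}}{\cstexp'}$ with $\bstexp' \milnersysmineq \bstexp$ and $\cstexp'$ built from solution values at successor vertices of the body of $\aoneloop$ and its exits, the provable equality $\asol{\avert} \milnersyseq \extrsolof{\aonecharthatsubscript}{\avert}$ reduces to a collection of provable equalities of $\sasol$ and $\sextrsolof{\aonecharthatsubscript}$ at vertices strictly closer (in the body-transition order) to the exits of $\aoneloop$, and at the targets of those exits; these are supplied by the induction hypothesis applied to the \entrybodylabeling\ obtained from $\aonecharthat$ by peeling off $\aoneloop$ and decrementing all loop-elimination indices by one, which is again a guarded \LLEEwitness\ with one fewer loop-elimination step.

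The main obstacle I expect is the bookkeeping required for the peeling step: one must verify that every vertex of $\aonechart$ is either internal to some loop \subonechart\ eventually eliminated (where the ``$\RSPstar$ at the head'' argument applies after descending into the innermost enclosing loop) or belongs to the terminal acyclic residue (handled by the base case), and that the star expression $\bstexp$ assembled from the body-paths of $\aoneloop$ really is \provablyin{\milnersysmin} equal to the corresponding iteration body used in the extraction procedure of Lemma~\ref{lem:LLEEonecharts:extraction}, which in turn rests on the associativity-commutativity-distributivity-idempotency axioms together with \recdefstexpit\ and \termbodystexpit. The \onetransitions\ permitted inside body-parts of loops by \LLEEonelim\ add a further layer: along a body-path any \onetransition\ $\avert \lt{\sone} \bvert$ forces the solution equation $\asol{\avert} \milnersyseq \stexpsum{\terminatesconstof{\aonechart}{\avert}}{\stexpprod{\sone}{\asol{\bvert}}}$ to collapse via \leftidstexpprod\ and \neutralstexpsum\ to $\asol{\avert} \milnersyseq \asol{\bvert}$ when $\notterminates{\avert}$ (and similarly in the terminating case); one has to track these collapses so that $\bstexp$ ends up of the required guarded form. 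Once these manipulations are discharged, the induction closes and the lemma follows.
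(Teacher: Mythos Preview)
Your plan has the right ingredients—deriving a guarded fixed-point equation at each loop vertex and discharging it with \RSPstar, with guardedness supplied by the \LLEEwitness—but the specific induction you set up does not close. You propose to apply the induction hypothesis to the \onechart\ obtained by peeling off the innermost loop \subonechart~$\aoneloop$ at $\avert$ and decrementing the remaining marking labels. The difficulty is that once the \loopentrytransitions\ of $\aoneloop$ are removed from $\avert$, the function $\sasol$ is \emph{no longer a \provablein{\milnersys} solution of the peeled chart}: at $\avert$ the peeled chart has strictly fewer outgoing transitions, so the solution condition there is a different equation, and $\asol{\avert}$ (which balances \emph{all} original transitions from $\avert$) will in general fail it. Consequently you cannot invoke the induction hypothesis—which speaks about solutions of the smaller chart—to obtain the equalities you need at the exit targets of $\aoneloop$. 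The same obstruction applies to $\sextrsolof{\aonecharthatsubscript}$, which at $\avert$ already contains the star contributed by $\aoneloop$ and hence is not the extracted solution of the peeled chart either.

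The paper itself does not prove this lemma here but defers to \cite{grab:2021:calco:arxiv,grab:2021:calco} (generalising Prop.~5.8 of \cite{grab:fokk:2020:lics}). There the argument never modifies the chart. One first proves a \emph{relative} identity: for every $\bvert$ with $\bvert \loopsbackto \avert$ in $\aonecharthat$ one shows $\asol{\bvert} \milnersyseq \stexpprod{\extrsoluntilof{\aonecharthatsubscript}{\bvert}{\avert}}{\asol{\avert}}$, where $\extrsoluntilof{\aonecharthatsubscript}{\bvert}{\avert}$ is the extracted expression for the behaviour from $\bvert$ until the first return to $\avert$; this inner induction runs along body-transition paths inside the \txtloopsbackto\ part of $\avert$. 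Substituting these relative identities into the solution equation at $\avert$ then yields the guarded fixed-point form to which \RSPstar\ applies, giving $\asol{\avert} \milnersyseq \stexpprod{\stexpit{\bstexp}}{\cstexp}$ with $\cstexp$ depending only on values $\asol{\cvert}$ at exit targets $\cvert$ outside $\sectionto{\avert}{\sloopsbacktortc}$. The outer induction runs over the vertices of $\aonechart$ itself, ordered by a well-founded measure built from the body-step norm and loop nesting depth, so that at each such $\cvert$ the equality $\asol{\cvert} \milnersyseq \extrsolof{\aonecharthatsubscript}{\cvert}$ is already available. Working inside the fixed chart in this way sidesteps the problem of $\sasol$ ceasing to be a solution after peeling.
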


\begin{figure*}[t!]
\begin{tikzpicture} 
  \matrix[anchor=center,row sep=1cm,column sep=0.7cm,ampersand replacement=\&] {
    \node(onechartints){};
        \& \node(onechartofeonepos){};
          \& 
            \& 
              \& 
                \& \node(onechartofetwopos){};
    \\[0.35cm]
    \node(chartints){};
      \&
        \&[0cm] \node(chartofeonepos){};
          \& 
            \&[0cm] \node(chartofetwopos){};
    \\[-0.65cm]
    \node(crystonechart){};
      \& \node(onechartofeonecrystpos){};
    \\[0.25cm]
    \node(bisimcollapse){};
     \& 
       \&
         \& \node(bisimcollpos){};
    \\
    };  
  %
  \path(chartofeonepos) ++ (-0.1cm,0cm) node(chartofeone){\large $\chartof{e_1}$};  
  \path(chartofetwopos) ++ (0.1cm,0cm) node(chartofetwo){\large $\chartof{e_2}$}; 
  \path(chartints) ++ (-6.25cm,0cm) node[right]{\em chart interpretations};
  \path(onechartofeonepos) ++ (0cm,0cm) node(onechartofeone){\large $\onechartof{e_1}$};  ;
      \path(onechartofeonepos) ++ (-0.75cm,-0.445cm) node[left]{$ e_1 \milnersyseq e_{10} \underset{\text{\nf\ref{SEone}}}{\Longleftarrow} 
                                                                 \left\{\,
                                                                 \parbox{\widthof{(by \ref{Tone}) $e_{10}$ is solution}}
                                                                        {\hspace*{\fill}
                                                                         guarded, \forestgreen{\LLEE}
                                                                         \\
                                                                         \hspace*{\fill}
                                                                         $e_1$ is solution%
                                                                         \\
                                                                         (by \ref{Tone}) $e_{10}$ is solution}\right.
                                                                                                      $}; 
  \path(onechartofetwopos) ++ (0cm,0cm) node(onechartofetwo){\large $\onechartof{e_2}$};
      \path(onechartofetwopos) ++ (0cm,-0.435cm) node[right]{$\left.
                                                           \parbox{\widthof{(by \ref{Tone}) \forestgreen{\LLEE}, guarded}}
                                                                  {\phantom{(by \ref{Tone})} \forestgreen{\LLEE}, guarded
                                                                   \\
                                                                   \phantom{(by \ref{Tone})} $e_2$ is solution%
                                                                   \\
                                                                   (by \ref{Tone}) $e_{10}$ is solution}\,\right\} \underset{\text{\nf\ref{SEone}}}{\Longrightarrow} \; e_{10} \milnersyseq e_2 $}; 
      \path(onechartofetwopos) ++ (1.25cm,-2cm) node[right]{Finally:};
      \path(onechartofetwopos) ++ (1.25cm,-2.8cm) node[right]{$\left.
                                                           \parbox{\widthof{$e_1 \milnersyseq e_{10}$}}
                                                                  {$e_1 \milnersyseq e_{10}$
                                                                   \\
                                                                   $e_{10} \milnersyseq e_2$} \,\right\} \; \Longrightarrow \;\; e_{1} \milnersyseq e_2 \punc{.}$};                                                              
  \path(onechartints) ++ (-6.25cm,0.45cm) node[right]{\em \onechart\ interpretations};    
  \path(onechartofeonecrystpos) ++ (0cm,-0.5cm) node(onechartofeonecryst)[yshift=0cm]{\large $\aonecharti{10}$};
      \path(onechartofeonecryst) ++ (-0.55cm,-0.075cm) node[left]{\parbox{\widthof{(by \ref{NCone}, \ref{CNone}) $e_{10}$ is complete solution}}
                                                                       {\hspace*{\fill}%
                                                                        guarded, \forestgreen{\LLEE}%
                                                                        \\[-0.2ex]
                                                                        (by \ref{NCone}, \ref{CNone}) $e_{10}$ is complete solution }}; 
  \path(crystonechart) ++ (-6.25cm,-0.2cm) node[right]{\em crystallized \onechart}; 
  \path(bisimcollpos) ++ (0cm,0cm) node(bisimcoll){\large $\acharti{0}$};
      \path(bisimcollpos) ++ (0.2cm,-0.25cm) node[right]{$e_{10}$ is solution (by \ref{CCone})};  
  \path(bisimcollapse) ++ (-6.25cm,-0.1cm) node[right]{\em bisimulation collapse};    
  %
  %
  \draw[bisim](chartofeone) to node[above]{\small (assm)} (chartofetwo);  
  \draw[funonebisim] (onechartofeone) to node[right,pos=0.3,xshift=0.025cm]{\small\ref{IVone}} (chartofeone);
  \draw[funonebisim] (onechartofetwo) to node[left,pos=0.3,xshift=-0.025cm]{\small\ref{IVone}} (chartofetwo);
  \draw[onebisim] (onechartofeone) to  (onechartofetwo);
  \draw[onebisim] (onechartofeonecryst) to node[left,xshift=-0.025cm,pos=0.4]{\small\ref{CRone}} (onechartofeone);
    \draw[-,magenta,out=275,in=140,shorten <=18pt,shorten >=20pt,distance=1.8cm] ($(onechartofeone) + (-0.5ex,0ex)$) to ($(bisimcoll) + (-0.5ex,0ex)$);
    \draw[->,magenta,out=275,in=140,shorten <=18pt,shorten >=24pt,distance=1.8cm] ($(onechartofeone) + (0.5ex,0ex)$) to ($(bisimcoll) + (1ex,0ex)$);

  \draw[funonebisim,shorten <=-3pt,shorten >=-2pt] (onechartofeonecryst) to  (bisimcoll);
  %
  %
    \draw[->,magenta,out=275,in=45,shorten <=8pt,shorten >=8pt,distance=1.75cm] ($(onechartofetwo) + (-0.5ex,0ex)$) to ($(bisimcoll) + (-0.5ex,0ex)$);
    \draw[-,magenta,out=275,in=45,shorten <=8pt,shorten >=4pt,distance=1.75cm] ($(onechartofetwo) + (0.5ex,0ex)$) to ($(bisimcoll) + (1ex,0ex)$);

  \draw[funbisimright,shorten >=4pt] (chartofeone) to (bisimcoll);
  \draw[funbisimleft, shorten >=4pt] (chartofetwo) to (bisimcoll);

\end{tikzpicture}
  \vspace*{-1ex}
  \caption{\label{fig:proof:structure}%
           Structure of the completeness proof (see proof of Thm.~\ref{thm:milnersys:complete}):
           The argument starts from the assumption $\protect\chartof{\astexpi{1}} \bisim \protect\chartof{\astexpi{2}}$, that the 
             chart interpretations of $\protect\astexpi{1}$ and $\protect\astexpi{2}$ are bisimilar.
           It uses \protect\onebisimilarity\ with the \protect\onechart\ interpretations~$\protect\onechartof{\astexpi{1}}$ and $\protect\onechartof{\astexpi{2}}$ 
             of $\astexpi{1}$ and $\astexpi{2}$
             (which expand the chart interpretations), 
             the crystallization $\protect\aonecharti{10}$ of $\protect\onechartof{\astexpi{1}}$
             (which arises by \protect\LLEEpreservingly\ minimizing $\protect\onechartof{\astexpi{1}}$ and crystallization operations),
             and the joint bisimulation collapse $\acharti{0}$ of all of these ($\sone$-)charts.
           The conclusion is $\astexpi{1} \milnersyseq \astexpi{2}$, that $\astexpi{1}$ and $\astexpi{2}$ are provably equal in Milner's system $\milnersys$. 
           By ``$\bstexp$ is (complete) solution of $\aonechart$'' we here mean that ``$\bstexp$ is the principal value of a (complete) provable solution of $\aonechart$''.    
           The indicated lemmas are explained in Sect.~\ref{completeness:proof}. 
    }
\end{figure*}


In Sect.~\ref{nearcollapsed:LLEEonecharts:complete:solvable} we will need
  a consequence of Lem.~\ref{lem:LLEEonecharts:uniquely:solvable},
    namely provable invariance of provable solutions under `transfer functions',
      which define functional \onebisimulations. 
While this statement holds also for \LLEEonecharts,
  we have to formulate it for \LLEEoneLTSs\ for use later in Sect.~\ref{nearcollapsed:LLEEonecharts:complete:solvable}.        

\begin{defi}\label{def:transfer:function}
  A \emph{transfer (partial) function} between \oneLTSs\ $\aoneLTSi{1}$ and $\aoneLTSi{2}$,
      for $\aoneLTSi{i} = \tuple{\statesi{i},\actions,\sone,\transsi{i},\termextsi{i}}$ where $i\in\setexp{1,2}$,
    is a partial function $\sphifun \funin \statesi{1} \rightharpoonup \statesi{2}$ 
      whose graph $\descsetexp{ \pair{\astate}{\phifun{\astate}} }{ \astate \in \statesi{1} }$
        is a \onebisimulation\ between $\aoneLTSi{1}$ and $\aoneLTSi{2}$.
\end{defi}

%
%


\begin{lem}\label{lem:sol:inv:under:tfuns}
  Let $\sphifun \funin \vertsi{1} \rightharpoonup \vertsi{2}$ 
    be a transfer function between \LLEEoneLTS~$\aoneLTSi{i} = \tuple{\vertsi{i},\actions,\sone,\transsi{i},\termextsi{i}}$, for $i\in\setexp{1,2}$. 
  Then for all \provablein{\milnersys} solutions $\sasoli{1}$ of $\aoneLTSi{1}$, and $\sasoli{2}$ of $\aoneLTSi{2}$ it holds
  that $\sasoli{1}$ coincides \provablyin{\milnersys} with the precomposition $\scompfuns{\sasoli{2}}{\sphifun}$~of~$\sasoli{2}$~and~$\sphifun$:
  \begin{equation*}
    \asoli{1}{\bvert}
      \eqin{\milnersys}
    \asoli{2}{\phifun{\bvert}} \punc{,}
      \quad
        \text{for all $\bvert\in\domof{\sphifun}\punc{.}$} 
  \end{equation*} 
\end{lem}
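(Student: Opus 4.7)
The plan is, for each vertex $\bvert \in \domof{\sphifun}$, to exhibit both $\sasoli{1}$ and the precomposition $\scompfuns{\sasoli{2}}{\sphifun}$ as \provablein{\milnersys} solutions of a common generated sub\nb-\onechart, and then to invoke the uniqueness of \provablein{\milnersys} solutions from Lem.~\ref{lem:LLEEonecharts:uniquely:solvable} to conclude $\asoli{1}{\bvert} \eqin{\milnersys} \asoli{2}{\phifun{\bvert}}$. First I would observe that $\domof{\sphifun}$ is \transitionclosed\ in $\aoneLTSi{1}$: by the forth condition of the \onebisimulation\ $\graphof{\sphifun}$, any transition $\cvert \ilti{\aoneact}{1} \cvertacc$ from a vertex $\cvert \in \domof{\sphifun}$ forces $\cvertacc \in \domof{\sphifun}$ together with $\phifun{\cvert} \ilti{\aoneact}{2} \phifun{\cvertacc}$ in $\aoneLTSi{2}$. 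Hence for any $\bvert \in \domof{\sphifun}$, the generated sub\nb-\onechart\ $\gensubchartofby{\aoneLTSi{1}}{\bvert}$ has its vertex set contained in $\domof{\sphifun}$ and inherits \LLEE\ from $\aoneLTSi{1}$; should the uniqueness lemma require guardedness, I would first reduce to a guarded \LLEEonechart\ via Lem.~\ref{lem:LLEEonechart:2:LLEEonelimonechart}, taking care that the restriction of $\sphifun$ transports cleanly.

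Next I would verify that $\scompfuns{\sasoli{2}}{\sphifun}$, restricted to the vertex set of $\gensubchartofby{\aoneLTSi{1}}{\bvert}$, is a \provablein{\milnersys} solution of that sub\nb-\onechart. Fix a vertex $\cvert \in \domof{\sphifun}$ reachable from $\bvert$, and let $\setexp{\cvert \ilti{\aoneacti{i}}{1} \cverti{i}}_{i \in I}$ be a list representation of the transitions from $\cvert$ in $\aoneLTSi{1}$. By the forth condition, each $\cverti{i} \in \domof{\sphifun}$ and $\phifun{\cvert} \ilti{\aoneacti{i}}{2} \phifun{\cverti{i}}$ holds in $\aoneLTSi{2}$; by the back condition, every transition from $\phifun{\cvert}$ in $\aoneLTSi{2}$ is hit by some $i \in I$, possibly with different multiplicity. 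Together with the termination condition, which ensures that the termination constants at $\cvert$ in $\aoneLTSi{1}$ and at $\phifun{\cvert}$ in $\aoneLTSi{2}$ coincide, and using that $\sasoli{2}$ is a \provablein{\milnersys} solution of $\aoneLTSi{2}$ at $\phifun{\cvert}$, the associativity, commutativity, and idempotency of $\sstexpsum$ available in $\milnersys$ yield
\begin{equation*}
  \asoli{2}{\phifun{\cvert}}
    \eqin{\milnersys}
      \terminatesconstof{\aoneLTSi{1}}{\cvert} + \sum_{i \in I} \aoneacti{i} \cdot \asoli{2}{\phifun{\cverti{i}}} \punc{.}
\end{equation*}
This is precisely the defining equation at $\cvert$ for $\scompfuns{\sasoli{2}}{\sphifun}$ to be a \provablein{\milnersys} solution of $\gensubchartofby{\aoneLTSi{1}}{\bvert}$ at $\cvert$.

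Finally, applying Lem.~\ref{lem:LLEEonecharts:uniquely:solvable} to the two \provablein{\milnersys} solutions $\sasoli{1}$ and $\scompfuns{\sasoli{2}}{\sphifun}$ of $\gensubchartofby{\aoneLTSi{1}}{\bvert}$ delivers their \provablein{\milnersys} equality at the start vertex $\bvert$, which is the desired conclusion. The main obstacle I anticipate is the bookkeeping around transition multiplicities: the forth and back conditions identify transitions from $\cvert$ and from $\phifun{\cvert}$ only modulo repetition, so the matching of defining equations must be argued by explicit ACI reasoning inside $\milnersys$. A secondary technical point is that the guardedness hypothesis of Lem.~\ref{lem:LLEEonecharts:uniquely:solvable} may force an intermediate passage via Lem.~\ref{lem:LLEEonechart:2:LLEEonelimonechart} to a \onetransition\ limited \LLEEonechart, and one must check that this reduction preserves the functional \onebisimulation\ induced by $\sphifun$ that is needed to carry $\sasoli{2}$ back.
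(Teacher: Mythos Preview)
Your high-level plan---pull $\sasoli{2}$ back along $\sphifun$ to obtain a second provable solution on (a sub-\onechart\ of) $\aoneLTSi{1}$, then invoke Lem.~\ref{lem:LLEEonecharts:uniquely:solvable}---is exactly the route the paper indicates; the lemma is introduced there only as ``a consequence of Lem.~\ref{lem:LLEEonecharts:uniquely:solvable}'' with no further proof given.

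One step in your verification that $\scompfuns{\sasoli{2}}{\sphifun}$ is a solution needs correcting. The solution condition in Def.~\ref{def:provable:solution} is phrased in terms of the \emph{direct} transitions $\cvert \lt{\aoneacti{i}} \cverti{i}$ of the \oneLTS\ (including \onetransitions) and direct termination, whereas the (forth), (back), and (termination) clauses of a \onebisimulation\ (Def.~\ref{def:onebisim:slices:betw:oneLTSs}) relate only \emph{induced} transitions $\cvert \ilt{\aact} \cvertacc$ with proper labels $\aact\in\actions$, and compare only induced termination. So from a direct transition $\cvert \lt{\aoneact} \cvertacc$ you cannot conclude $\phifun{\cvert} \lt{\aoneact} \phifun{\cvertacc}$; the transition profiles of $\cvert$ and $\phifun{\cvert}$ need not match at the level of individual \onetransitions\ or direct termination at all. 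Your displayed equation is therefore not what the solution property of $\sasoli{2}$ at $\phifun{\cvert}$ delivers. The cleanest fix is to cite Lem.~\ref{lem:preservation:sols},~\ref{it:conv:funbisims:lem:preservation:sols} (this is property~\ref{Tone}) for backward transfer of solutions along functional \onebisimilarity, which already packages the needed reformulation of the solution condition via induced transitions; alternatively one proves that reformulation directly by induction along \onetransition\ paths, which terminates under weak guardedness. Either route confirms your instinct that a (weak) guardedness hypothesis is needed---both for this step and for applying Lem.~\ref{lem:LLEEonecharts:uniquely:solvable}---and indeed the paper only ever invokes the present lemma in weakly guarded settings.
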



A substantial obstacle for the use of \LLEEonecharts\ was already recognized in \cite{grab:fokk:2020:lics}:
  the chart interpretation of star expressions does not in general define \LLEEcharts. 
This obstacle can, however, be navigated successfully by using the result from \cite{grab:2020:TERMGRAPH2020-postproceedings:arxiv,grab:2021:TERMGRAPH2020-postproceedings}
  that a variant chart interpretation can be defined 
    that produces \onebisimilar\ \LLEEonecharts\ instead.

\begin{lem}\label{lem:onechart:int}
  For every star expression $\astexp$, 
    there is a \emph{\onechart\ inter\-pre\-ta\-tion $\onechartof{\astexp}$ of $\astexp$}
    that has the following properties:
    \begin{enumerate}[label=(\roman{*}),align=right,labelsep=0.75ex,itemsep=0.2ex]
     \item{}\label{it:1:lem:onechart:int}\label{IVone.i}
       $\onechartof{\astexp}$ is a \onetransition\ limited (guarded) \LLEEonechart,
     \item{}\label{it:2:lem:onechart:int}\label{IVone.ii} 
       $\onechartof{\astexp} \funonebisim \chartof{\astexp}$, and hence also $\onechartof{\astexp} \onebisim \chartof{\astexp}$. 
         
     \item{}\label{it:3:lem:onechart:int}\label{IVone.iii} 
       $\astexp$ is the principal value of a \provablein{\milnersys} solution of~$\onechartof{\astexp}$. 
    \end{enumerate}\smallskip
\end{lem}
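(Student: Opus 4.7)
The plan is to construct $\onechartof{e}$ via a modified transition system specification that refines $\StExpTSSover{\actions}$ by introducing explicit $\sone$-transitions at points where the original TSS implicitly uses internal termination of a subterm. Concretely, for each iteration $\stexpit{e}$ I would add a $\sone$-transition $\stexpit{e} \lt{\sone} \stexpprod{e}{\stexpit{e}}$, reflecting the equational law $\stexpit{e} = \stexpsum{\stexpone}{\stexpprod{e}{\stexpit{e}}}$, and for each product $\stexpprod{e_1}{e_2}$ with $\terminates{e_1}$ I would add $\stexpprod{e_1}{e_2} \lt{\sone} e_2$, reflecting $\stexpprod{\stexpone}{e_2} = e_2$. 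Immediate termination $\terminates{\cdot}$ is kept as in $\StExpTSSover{\actions}$. Finiteness of $\vertsof{\onechartof{e}}$ follows by the same argument as for $\chartof{e}$, since reachable vertices remain among a finite set of syntactic combinations of subterms of $e$.

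For \ref{it:1:lem:onechart:int} I would build a guarded LLEE-witness $\widehat{\onechartof{e}}$ by induction on $e$, composing inductive witnesses through the operator cases. The decisive case is iteration: the inserted $\sone$-edge $\stexpit{e} \lt{\sone} \stexpprod{e}{\stexpit{e}}$ is declared the return edge (a backlink) of an outer loop subchart whose loop-entry transitions are the proper $a$-transitions leaving $\stexpprod{e}{\stexpit{e}}$, and whose body is obtained from the IH witness of $e$ by composing with the tail $\stexpit{e}$. The layer rank assigned to this outer loop strictly exceeds the ranks of the loops inherited from inner iterations, securing the layered property. Consequently all inserted $\sone$-transitions lift to backlinks, giving $\sone$-transition limitedness, while all loop-entries are proper transitions, giving guardedness.

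Claim \ref{it:2:lem:onechart:int} is then witnessed by the syntactic identity $\sahom \funin \vertsof{\onechartof{e}} \to \vertsof{\chartof{e}}$, sending each reachable subterm to the same subterm viewed as a vertex of $\chartof{e}$. By design of the refined TSS, induced $a$-transitions $v \ilt{a} v'$ in $\onechartof{e}$ correspond bijectively to the $a$-transitions $\ahom{v} \lt{a} \ahom{v'}$ of $\chartof{e}$, and induced termination $\oneterminates{v}$ matches termination $\terminates{\ahom{v}}$; hence $\sahom$ is a functional $\sone$-bisimulation, yielding $\onechartof{e} \funonebisim \chartof{e}$, and a fortiori $\onechartof{e} \onebisim \chartof{e}$.

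For \ref{it:3:lem:onechart:int}, each vertex $v$ of $\onechartof{e}$ is itself a star expression, and the outgoing transitions the TSS assigns to $v$ mirror a one-step unfolding of $v$ that is provable in $\milnersysmin$ using the axioms \leftidstexpprod, \rightidstexpprod, \deadlockax, \recdefstexpit, and \termbodystexpit\ together with \rightdistr\ and \assocstexpprod. Thus the self-assignment $\asol{v} \defdby v$ yields a $\milnersysmin$-provable (hence $\milnersys$-provable) solution of $\onechartof{e}$ whose principal value is $\asol{e} = e$. The main obstacle I anticipate is calibrating the insertion of $\sone$-transitions so that all three requirements hold simultaneously: too few inserted $\sone$-edges leave the witness unlayered (the original motivation for moving from $\chartof{e}$ to $\onechartof{e}$), while too many would either break the functional $\sone$-bisimulation with $\chartof{e}$ or obstruct the identity assignment from being verifiable as a Milner-solution at each vertex.
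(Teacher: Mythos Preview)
The paper does not prove this lemma; it imports it from \cite{grab:2020:TERMGRAPH2020-postproceedings:arxiv,grab:2021:TERMGRAPH2020-postproceedings}. Your sketch, however, has a concrete gap that is not just a matter of ``calibration''. Your two $\sone$-rules together produce $\sone$-cycles whenever the body of an iteration can terminate: if $\terminates{e}$ (e.g.\ $e=\stexpone$ or $e=\stexpsum{\stexpone}{a}$), then your rules give $\stexpit{e}\lt{\sone}\stexpprod{e}{\stexpit{e}}\lt{\sone}\stexpit{e}$, an infinite $\sone$-path. This kills weak guardedness, and with it any hope of a guarded or $\sone$-transition limited \LLEEwitness; it also breaks part~\ref{it:3:lem:onechart:int}, since the solution condition at $\stexpit{e}$ then involves $\stexpprod{\stexpone}{\asol{\stexpit{e}}}$ as a summand and degenerates. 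A second problem is your map for~\ref{it:2:lem:onechart:int}: the vertex $\stexpprod{e}{\stexpit{e}}$ is in general not reachable in $\chartof{e}$ (the TSS $\StExpTSS$ sends $\stexpit{e}$ directly to $\stexpprod{\astexpacc}{\stexpit{e}}$), so the ``syntactic identity'' is not a function into $\vertsof{\chartof{e}}$; and even if you redirect it to $\stexpit{e}$, the termination clause fails because $\terminates{(\stexpit{e})}$ always holds while $\terminates{(\stexpprod{e}{\stexpit{e}})}$ need not.

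The construction in the cited work avoids these issues by changing the vertex set: it works over \emph{stacked} star expressions, built with an auxiliary stacking product $\sstexpstackprod$ that records the pending iteration context but is not itself a product in $\StExpover{\actions}$. In that TSS the $\sone$-transitions arise only as backlinks from the final states of a completed iteration body back to the iteration head (so the $\sone$-edge points \emph{toward} $\stexpit{e}$, not away from it), which structurally excludes $\sone$-cycles and makes the layered witness fall out directly. The functional $\sone$-bisimulation to $\chartof{e}$ is then a projection that erases the stacking, not the identity.
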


\section{Completeness proof based on lemmas}%
  \label{completeness:proof}

We anticipate the completeness proof for Milner's system by basing 
  it on the following lemmas, which are faithful abbreviations of statements
    as formulated in other sections.    
The chosen acronyms for these lemmas stem from the letters that are typeset in boldface italics
  in their statements: 
  %
  %
\begin{enumerate}[label={{\bf (SE)$_{\protect\onescriptbs}$}},leftmargin=*,align=right,labelsep=0.75ex,itemsep=0.45ex] 
      %
  \item[{\crtcrossreflabel{{\bf (IV)}}[IVone]}]
    For every star expression $\astexp$, 
    there is a \emph{\onechart\ \textbf{\textit{i}}nter\-pre\-ta\-tion (\textbf{\textit{v}}ariant) $\onechartof{\astexp}$ of $\astexp$}
      with the properties \ref{it:1:lem:onechart:int}, \ref{it:2:lem:onechart:int}, and \ref{it:3:lem:onechart:int}
        in Lem.~\ref{lem:onechart:int} (see above).

  \item[{\crtcrossreflabel{\bf (T)}[Tone]}]
    Provable solutions can be \textbf{\textit{t}}ransferred backwards over a transfer function between weakly guarded \onecharts.
    (See Lem.~\ref{lem:preservation:sols},~\ref{it:conv:funbisims:lem:preservation:sols}). 
     %
  \item[{\crtcrossreflabel{{\bf (E)}}[Eone]}]
    From every 
               guarded \LLEEonechart\ $\aonechart$ a provable solution of $\aonechart$ can be \textbf{\textit{e}}xtracted.
    (See Lem.~\ref{lem:LLEEonecharts:extraction}.)\enlargethispage{2ex}
 \item[{\crtcrossreflabel{{\bf (CR)}}[CRone]}]
   Every 
         guarded \LLEEonechart\ 
   can be transformed into a \onebisimilar\ (guarded) \textbf{\textit{cr}}ystallized (\LLEE-)\onechart.
   (See Thm.~\ref{thm:crystallization:nearcollapse}, based on Def.~\ref{def:crystallized}.)
 \item[{\crtcrossreflabel{{\bf (CN)}}[CNone]}]
   Every \textbf{\textit{c}}rystallized \onechart\ is \textbf{\textit{n}}ear-col\-lapsed.
   (Lem.~\ref{lem:crystallized:is:nearcollapsed}.)
  \item[{\crtcrossreflabel{{\bf (N\hspace*{-0.75pt}C)}}[NCone]}]
    Solutions extracted 
                        from \textbf{\textit{n}}ear-collapsed 
                                                              guarded \LLEEonecharts\ 
                        are \textbf{\textit{c}}omplete provable solutions.
    (See Lem.~\ref{lem:nearcollapsed:2:compl:solvable}.)                    
    %
  \item[{\crtcrossreflabel{{\bf (CC)}}[CCone]}]
    If a \weaklyguarded\ \onechart~$\aonechart$ has a \textbf{\textit{c}}omplete provable solution with principal value $\astexp$,
      then also the (\onetransition\ free) bisimulation \textbf{\textit{c}}ollapse of $\aonechart$ has a provable solution with principal value $\astexp$. 
      (See~Lem.~\ref{lem:preservation:sols},~\ref{it:collapse:lem:preservation:sols}.)
  
    %
  \item[{\crtcrossreflabel{{\bf (SE)}}[SEone]}]
    All provable \textbf{\textit{s}}olutions of a 
                                                guarded \LLEEonechart\ are provably \textbf{\textit{e}}qual. 
    (See Lem.~\ref{lem:LLEEonecharts:uniquely:solvable}.)
\end{enumerate}%

\begin{thm}\label{thm:milnersys:complete}
  Milner's proof system $\milnersys$ is complete with respect to process semantics equality 
    of regular expressions.
\end{thm}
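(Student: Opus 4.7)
The plan is to implement the ``\onebisimulation\ collapse'' strategy outlined in Sect.~\ref{motivation:proof:strategy}, whose structure is summarized in Fig.~\ref{fig:proof:structure}. Suppose $\astexpi{1},\astexpi{2}\in\StExpover{\actions}$ satisfy $\procsem{\astexpi{1}} = \procsem{\astexpi{2}}$, that is, $\chartof{\astexpi{1}} \bisim \chartof{\astexpi{2}}$. We must derive $\astexpi{1} \milnersyseq \astexpi{2}$. By \ref{IVone}, we first pass from the chart interpretations to the \onechart\ interpretations $\onechartof{\astexpi{1}}$ and $\onechartof{\astexpi{2}}$: each is a guarded \LLEEonechart, each is functionally \onebisimilar\ to the corresponding chart interpretation, and $\astexpi{i}$ is the principal value of a \provablein{\milnersys} solution $\sasoli{i}$ of $\onechartof{\astexpi{i}}$ for $i\in\setexp{1,2}$. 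Since bisimilarity is preserved by converse functional \onebisimilarity, we obtain $\onechartof{\astexpi{1}} \onebisim \onechartof{\astexpi{2}}$.

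Next I would produce the central link. Apply \ref{CRone} to $\onechartof{\astexpi{1}}$ to obtain a \onebisimilar\ guarded crystallized \LLEEonechart\ $\aonecharti{10}$. Extract a \provablein{\milnersys} solution $\sasoli{10}$ of $\aonecharti{10}$ by \ref{Eone}, and write $\astexpi{10}$ for its principal value. By \ref{CNone}, $\aonecharti{10}$ is \nearcollapsed, so by \ref{NCone} the extracted solution $\sasoli{10}$ is in fact a \emph{complete} \provablein{\milnersys} solution of $\aonecharti{10}$. By \ref{CCone}, this complete solution descends to a \provablein{\milnersys} solution of the bisimulation collapse $\acharti{0}$ of $\aonecharti{10}$ (equivalently, of $\onechartof{\astexpi{1}}$, $\onechartof{\astexpi{2}}$, $\chartof{\astexpi{1}}$, and $\chartof{\astexpi{2}}$) whose principal value is again $\astexpi{10}$.

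Now I would transfer this solution back. Composing with the functional \onebisimulations\ from $\onechartof{\astexpi{i}}$ to $\acharti{0}$ (obtained by composing the functional \onebisimulation\ from \ref{IVone}.\ref{IVone.ii} with the collapse), \ref{Tone} produces \provablein{\milnersys} solutions $\sasoltildei{i}$ of $\onechartof{\astexpi{i}}$ with principal value $\astexpi{10}$, for $i\in\setexp{1,2}$. On each guarded \LLEEonechart\ $\onechartof{\astexpi{i}}$ I now have two \provablein{\milnersys} solutions, namely $\sasoli{i}$ with principal value $\astexpi{i}$ and $\sasoltildei{i}$ with principal value $\astexpi{10}$. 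By \ref{SEone}, these are \provablyin{\milnersys} equal; in particular their principal values are, yielding $\astexpi{1} \milnersyseq \astexpi{10}$ and $\astexpi{10} \milnersyseq \astexpi{2}$. Transitivity of $\milnersyseq$ in equational logic then gives $\astexpi{1} \milnersyseq \astexpi{2}$, as desired.

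The main obstacle in this plan is not the bookkeeping above but the validity of the lemmas it combines, and specifically the interplay between crystallization \ref{CRone} and \nearcollapsedness\ \ref{CNone}: since \LLEEonecharts\ are \emph{not} closed under \onebisimulation\ collapse (the obstruction \ref{contrib:concept:1}), one cannot simply minimize $\onechartof{\astexpi{1}}$ while preserving \LLEE. The delicate point is that the crystallization procedure must \LLEEpreservingly\ transform $\onechartof{\astexpi{1}}$ into a \onechart\ whose strongly connected components are either collapsed or of \twincrystal\ shape, thereby guaranteeing via \ref{NCone} that the extracted solution takes provably equal values on \onebisimilar\ vertices, so that \ref{CCone} can be applied to descend to the bisimulation collapse. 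Every other step in the chain above is either a direct invocation of an established lemma or an application of the equational-logic rules of $\milnersys$.
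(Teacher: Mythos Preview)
Your proposal is correct and follows essentially the same route as the paper's own proof: pass to the \onechart\ interpretations via \ref{IVone}, crystallize one of them via \ref{CRone}, use \ref{CNone} and \ref{NCone} to get a complete solution, push it to the bisimulation collapse via \ref{CCone}, pull it back to both \onechart\ interpretations via \ref{Tone}, and conclude with \ref{SEone}. The only cosmetic difference is that the paper phrases the existence of the functional \onebisimulations\ $\onechartof{\astexpi{i}} \funonebisim \acharti{0}$ directly in terms of $\acharti{0}$ being the joint bisimulation collapse, whereas you obtain them by composition; these are the same.
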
%

\begin{proof}
  (\emph{See Fig.~\ref{fig:proof:structure} for an illustration}.)
  Let $\astexpi{1},\astexpi{2}\in\StExpover{\actions}$ be star expressions
  such that $\procsem{\astexpi{1}} = \procsem{\astexpi{2}}$ holds, that is,
  their process interpretations coincide.
  This means that 
    the behaviors $\eqcl{\chartof{\astexpi{1}}}{\sbisimsubscript}$ and $\eqcl{\chartof{\astexpi{2}}}{\sbisimsubscript}$
    of the chart interpretations $\chartof{\astexpi{1}}$ of $\astexpi{1}$ and $\chartof{\astexpi{2}}$ of $\astexpi{2}$ coincide. 
  Therefore $\chartof{\astexpi{1}} \bisim \chartof{\astexpi{2}}$ holds,
    that is, $\chartof{\astexpi{1}}$ and $\chartof{\astexpi{2}}$ are bisimilar.
  We have to show $\astexpi{1} \milnersyseq \astexpi{2}$, that is, that $\astexpi{1} \formeq \astexpi{2}$ can be proved in $\milnersys$. 
  %
  
  %
  Due to \ref{IVone}, 
    the \onechart\ interpretations $\onechartof{\astexpi{1}}$ of $\astexpi{1}$ and $\onechartof{\astexpi{2}}$ of $\astexpi{2}$ 
    are 
        guarded \LLEEonecharts\ 
    that are \onebisimilar\ to $\chartof{\astexpi{1}}$ and to $\chartof{\astexpi{2}}$, respectively. 
  With this, $\chartof{\astexpi{1}} \bisim \chartof{\astexpi{2}}$   
  entails $\onechartof{\astexpi{1}} \onebisim \onechartof{\astexpi{2}}$,
    that is, $\onechartof{\astexpi{1}}$ and $\onechartof{\astexpi{2}}$ are \onebisimilar. 
  By part~\ref{IVone.iii} of \ref{IVone}, 
    $\astexpi{1}$ and $\astexpi{2}$ are 
      the principal values of provable solutions $\sasoli{1}$ and $\sasoli{2}$ of $\onechartof{\astexpi{1}}$ and $\onechartof{\astexpi{2}}$, respectively.
      
  We now focus on $\onechartof{\astexpi{1}}$, leaving aside $\onechartof{\astexpi{2}}$ for the moment.
  (Equally we could start from $\onechartof{\astexpi{2}}$, and build up a symmetrical argument).
  By \ref{CRone} 
    we find that the 
        guarded \LLEE -\onechart~$\onechartof{\astexpi{1}}$
  can be transformed into a \onebisimilar\ 
                                           crystallized \onechart\ $\aonecharti{10}$, which is a guarded \LLEEonechart.  
    Then $\aonecharti{10} \onebisim \onechartof{\astexpi{1}}$ holds.
  Due to \ref{CNone}, 
    $\aonecharti{10}$ is also \nearcollapsed. 
  Now we can apply \ref{NCone} to $\aonecharti{10}$ 
    in order to conclude that the provable solution $\sasoli{10}$ that is extracted from $\aonecharti{10}$ by \ref{Eone} 
    is a complete provable solution of the \nearcollapsed\ \onechart~$\aonecharti{10}$.
  Let $\astexpi{10}$ be the principal value of $\sasoli{10}$.
  
  Now let $\acharti{0}$ be the (\onetransition\ free) bisimulation collapse of $\aonecharti{10}$.
  By applying \ref{CCone} to the complete provable solution $\sasoli{10}$ of $\aonecharti{10}$, which is weakly guarded since it is guarded,
    we obtain a provable solution $\sasoli{0}$ of $\acharti{0}$ that also has the principal value $\astexpi{10}$.
  Due to $\onechartof{\astexpi{1}} \onebisim \onechartof{\astexpi{2}}$,   
     $\acharti{0}$ is the joint bisimulation collapse of $\onechartof{\astexpi{1}}$ and $\onechartof{\astexpi{2}}$.
  It follows that there are functional \onebisimulations\ from $\onechartof{\astexpi{1}}$ and from $\onechartof{\astexpi{2}}$ to $\acharti{0}$,
    that is, $\onechartof{\astexpi{1}} \funonebisim \acharti{0} \convfunonebisim \onechartof{\astexpi{2}}$.    
  Now we can use \ref{Tone} to transfer the provable solution $\sasoli{0}$ 
    from $\acharti{0}$ to $\onechartof{\astexpi{1}}$ and to $\onechartof{\astexpi{2}}$.
  We obtain provable solutions $\sasoltildei{1}$ of $\onechartof{\astexpi{1}}$, and $\sasoltildei{2}$ of $\onechartof{\astexpi{2}}$,
    both of which have $\astexpi{10}$ as their principal value.
  
  Since both $\sasoli{1}$ and $\sasoltildei{1}$ are provable solutions of the 
                                                                                   guarded LLEE-\mbox{\onechart} $\onechartof{\astexpi{1}}$,
    we can apply 
                 \ref{SEone} to find that $\sasoli{1}$ and $\sasoltildei{1}$ are provably equal. 
  In particular, the principal values $\astexpi{1}$ of~$\sasoli{1}$ and $\astexpi{10}$ of~$\sasoltildei{1}$ are provably equal.
    That is, $\astexpi{1} \milnersyseq \astexpi{10}$ holds. 
  Analogously, as both $\sasoli{2}$ and $\sasoltildei{2}$ are provable solutions of the 
                                                                                        guarded \LLEEonechart~$\onechartof{\astexpi{2}}$,
    \ref{SEone} 
    also  entails that $\sasoli{2}$ and $\sasoltildei{2}$ are provably equal. 
  Therefore also the principal values $\astexpi{2}$ of $\sasoli{2}$ and $\astexpi{10}$ of $\sasoltildei{2}$ are provably equal.
    That is, $\astexpi{2} \milnersyseq \astexpi{10}$ holds.
  
  From $\astexpi{1} \milnersyseq \astexpi{10}$ and $\astexpi{2} \milnersyseq \astexpi{10}$
    we obtain $\astexpi{1} \milnersyseq \astexpi{2}$,
      and hence that $\astexpi{1} \formeq \astexpi{2}$ is provable in Milner's system $\milnersys$.
\end{proof}

\section{Failure of LLEE-preserving $\sone$-collapse}
  \label{failure:LLEEpreserving:collapse}

Here we expand on observation \ref{contrib:concept:1},
  due to which we have realized in Sect.~\ref{motivation:proof:strategy} that the bisimulation collapse strategy in \cite{grab:fokk:2020:lics} 
    cannot be extended directly to a \onebisimulation\ collapse strategy for showing completeness of $\milnersys$. 
We define the properties `collapsible', `\onecollapsible', and `jointly minimizable' formally,  
  formulate their failure for \LLEEonecharts, and suggestively explain the reason by means of~an~example. 
       
Let $\aonechart$ be a \LLEEonechart.
  We say that $\aonechart$ is \emph{\LLEEpreservingly\ \onecollapsible}
                              (\emph{\LLEEpreservingly\ collapsible})
    if $\aonechart$ has a \onebisimulation\ collapse that is a \LLEEonechart\
       (and respectively, the bisimulation collapse of $\aonechart$ is a \LLEEonechart). 
       
  We say that two \LLEEonecharts~$\aonecharti{1}$ and $\aonecharti{2}$ 
    that are \onebisimilar\ (that is, with $\aonecharti{1} \onebisim \aonecharti{2}$)
    are \emph{\LLEEpreservingly\ jointly minimizable (under functional \onebisimilarity~$\sfunonebisim$)}
      if there is a \LLEEonechart\ $\aonecharti{0}$
        such that $\aonecharti{1} \funonebisim \aonecharti{0} \convfunonebisim \aonecharti{2}$. 


\begin{figure}[!t]
  \begin{flushleft}
    \hspace*{-4.8ex}%
    \scalebox{0.76}{\begin{tikzpicture}
  \matrix[anchor=center,row sep=0.3cm,column sep=0.325cm,every node/.style={draw,very thick,circle,minimum width=2.5pt,fill,inner sep=0pt,outer sep=2pt}] at (0,0) {
                 &                  &              & & & & &              & \node(abc){};
    \\
    \\
    \\
                 & \node(acd){};   
    \\
    \\
                 &                  &              & & & & &              &               & \node(a2){};
    \\
                 &                  &              & & & & & \node(a1){};
    \\
    \node(c1){}; &                  & \node(c2){}; & & & & &              &               &               & & & & & & & \node(abcd-2){};
    \\
    \\
    \\
    \\
    \\
                 & \node(abcd-1){}; & 
    \\
                 &                  &              & & & & &              &               &               & & & & & & & \node(e){};
    \\
    \\
    \\
    \\
    \\
                 & \node(f){}; 
    \\
  };
  \pgfdeclarelayer{background}
  \pgfdeclarelayer{foreground}
  \pgfsetlayers{background,main,foreground}
  
  \begin{pgfonlayer}{background}
    \draw[draw opacity=0,fill opacity=0.4,fill=royalblue!45] 
      (abc.center)--(a1.center)--(abcd-2.center)--(e.center) to[->,relative=false,out=12,in=0,distance=5.2cm] (abc.center);
    \draw[draw opacity=0,fill opacity=0.4,fill=forestgreen!30] 
      (acd.center)--(c2.center)--(abcd-1.center)--(f.center) to[->,relative=false,out=181,in=179,distance=3.7cm] (acd.center);
  \end{pgfonlayer}

  \begin{pgfonlayer}{foreground}
  \draw[<-,very thick,color=chocolate,>=latex](abc) -- ++ (90:0.5cm););
  \path (abc) ++ (0.1cm,0.35cm) node[right]{$\tightfbox{$abc$} = \avert$};
  \draw[->,very thick,royalblue] (abc) to node[right]{$\hspace*{-0.2em}\black{a}$} (a1);
  \draw[->,very thick,royalblue] (abc) to node[right]{$\hspace*{-0.2em}\black{a}$} (a2);
  \draw[->,very thick,royalblue,shorten >=6pt] (abc) to node[left,pos=0.3]{$\hspace*{0.2em}\black{c}$} (c1); 
  \draw[->,very thick,royalblue,shorten >=3pt] (abc) to node[right,pos=0.4,xshift=-0.15em]{$\black{c}$} (c2);
  \draw[->,very thick,royalblue,shorten >=4pt] (abc) to node[right,pos=0.85]{$\black{b}$} (f);
  \path (abc) ++ (-1.65cm,0.4cm) node{\LARGE$\aonechart\:/\:\aonecharthat$};
  %
  %
  \path (acd) ++ (0cm,0.4cm) node{$\tightfbox{$acd$} = \bvertbari{1}$};
  \draw[->] (acd) to node[above,pos=0.225]{$a$} (a2);
  \draw[->] (acd) to node[below,
                                pos=0.68]{$a$} (a1);

  \draw[->,forestgreen,very thick] (acd) to node[left]{$\black{c}\hspace*{-0.23em}$} (c1);
  \draw[->,forestgreen,very thick] (acd) to node[left]{$\black{c}\hspace*{-0.15em}$} (c2);
  \draw[->,bend right,relative=false,out=-45,in=167.5,looseness=1] (acd) to node[above,pos=0.8]{$d$} (e);
  \path (a1) ++ (0.1cm,-0.375cm) node{\tightfbox{$a_1$}};
  \draw[->,shorten >=3pt] (a1) to node[below,pos=0.3]{$a_1$} (abcd-2);
  \draw[->,bend left,distance=2.9cm,relative=false,out=0,in=-20,shorten >=4pt] (a1) to node[right]{$a_1\hspace*{-0.18em}$} (abc);
  \path (a2) ++ (0.32cm,0.4cm) node{\tightfbox{$a_2$}};
  \draw[->] (a2) to node[above,near end]{$a_2$} (abcd-2);
  \draw[->,bend left,distance=1.65cm,relative=false,out=-5,in=-20,shorten >=4pt] (a2) to node[left,pos=0.65]{$a_2$} (abc);
  %
  \path (c1) ++ (-0.35cm,0.25cm) node{\tightfbox{$c_1$}};
  \draw[->,bend left,relative=false,out=200,in=180,looseness=1.9] (c1) to node[pos=0.55,right]{$c_1\hspace*{1.5em}$} (acd);
  \draw[->] (c1) to node[right]{$\hspace*{-0.18em}c_1$} (abcd-1);
  %
  \path (c2) ++ (0.35cm,-0.25cm) node{\tightfbox{$c_2$}};
  \draw[->] (c2) to node[right]{$\hspace*{-0.21em}c_2$} (abcd-1);
  \draw[->,out=220,in=180,looseness=3.75] (c2) to node[pos=0.275,below]{$c_2$\hspace*{-0.4em}} (acd);
  %
  \path (abcd-2) ++ (0.05cm,0cm) node[right]{$\tightfbox{$(abcd)_2$} \parbox[t]{\widthof{${} = \bvertbari{2}$}}
                                                                             {${} = \bvertbari{2}$
                                                                              \\
                                                                              ${} = \bverti{2}$}$};
  \draw[->] (abcd-2) to node[right]{$d$} (e);
  \draw[->,densely dotted,thick,bend right,distance=1.25cm,relative=false,out=90,in=0] (abcd-2) to node[right,pos=0.45,xshift=0.05cm]{$\sone$} (abc);
  \path (abcd-1) ++ (0cm,-0.3cm) node[left]{$\bverti{1} = \tightfbox{$(abcd)_1$}$};
  \draw[->] (abcd-1) to node[left]{$b\hspace*{-0.1em}$} (f);
  \draw[->,densely dotted,thick,bend left,relative=false,out=180,in=180,looseness=2] (abcd-1) to node[left,pos=0.45,xshift=0.05cm]{$\sone$} (acd);
  %
  \path (e) ++ (0cm,-0.4cm) node{\tightfbox{$e$}};
  \draw[->,relative=false,out=12.5,in=0,distance=5cm] (e) to node[right]{$e$} (abc);
  \path (f) ++ (0cm,-0.45cm) node{\tightfbox{$f$}};
  \draw[->,relative=false,out=180,in=180,distance=3.5cm] (f) to node[right]{$f$} (acd);

  %
  \draw[|->,very thick,out=-10,in=225,shorten <=1pt,shorten >=3pt,densely dashed,magenta,magenta,distance=1.5cm] 
    (abcd-1) to node[below,yshift=0.025cm]{${\scpfun}$} (abcd-2);
  \draw[|->,very thick,out=180,in=35,shorten <=8pt,shorten >=4.5pt,densely dashed,magenta,magenta,distance=1.5cm] 
    (abcd-2) to node[below,yshift=0.05cm,pos=0.375]{${\scpfun}$} (abcd-1);
  \end{pgfonlayer}
\end{tikzpicture}}
    \\[-12.5ex]
    \hspace*{15ex}
    \scalebox{0.35}{\begin{tikzpicture}
  \matrix[anchor=center,row sep=0.3cm,column sep=0.325cm,every node/.style={draw,very thick,circle,minimum width=2.5pt,fill,inner sep=0pt,outer sep=2pt}] at (0,0) {
                 &                  &              & & & & &              & \node(abc){};
    \\
    \\
    \\
                 & \node(acd){};   
    \\
    \\
                 &                  &              & & & & &              &               & \node(a2){};
    \\
                 &                  &              & & & & & \node(a1){};
    \\
    \node(c1){}; &                  & \node(c2){}; & & & & &              &               &               & & & & & & & \node(abcd-2){};
    \\
    \\
    \\
    \\
    \\
                 & \node[draw=none,fill=none](abcd-1){}; & 
    \\
                 &                  &              & & & & &              &               &               & & & & & & & \node(e){};
    \\
    \\
    \\
    \\
    \\
                 & \node(f){}; 
    \\
  };
  \pgfdeclarelayer{background}
  \pgfdeclarelayer{foreground}
  \pgfsetlayers{background,main,foreground}
  
  \begin{pgfonlayer}{background}
    \draw[draw opacity=0,fill opacity=0.4,fill=royalblue!30] 
      (abc.center)--(a1.center)--(abcd-2.center)--(e.center) to[->,relative=false,out=12,in=0,distance=5.2cm] (abc.center);
  \end{pgfonlayer}
  
  \begin{pgfonlayer}{foreground}
    \draw[<-,very thick,color=chocolate,>=latex](abc) -- ++ (90:0.5cm););
    \path (abc) ++ (0.1cm,0.35cm) node[right]{$\tightfbox{$abc$} = \avert$};
    \draw[->,very thick,royalblue] (abc) to node[right]{$\hspace*{-0.2em}\black{a}$} (a1);
    \draw[->,very thick,royalblue] (abc) to node[right]{$\hspace*{-0.2em}\black{a}$} (a2);
    \draw[->] (abc) to node[left,pos=0.3]{$\hspace*{0.2em}\black{c}$} (c1); 
    \draw[->] (abc) to node[right,pos=0.4,xshift=-0.15em]{$\black{c}$} (c2);
    \draw[->] (abc) to node[right,pos=0.85]{$\black{b}$} (f);
    %
    \path (abc) ++ (0cm,-5.5cm) node{\scalebox{2.5}{$\aonecharti{1} \, / \, \aonecharthati{1} $}};
    %
    %
    \path (acd) ++ (0cm,0.4cm) node{$\tightfbox{$acd$}$};
    \draw[->] (acd) to node[above,pos=0.225]{$a$} (a2);
    \draw[->] (acd) to node[below,
                                  pos=0.68]{$a$} (a1);

    \draw[->] (acd) to node[left]{$\black{c}\hspace*{-0.23em}$} (c1);
    \draw[->] (acd) to node[left]{$\black{c}\hspace*{-0.15em}$} (c2);
    \draw[->,bend right,relative=false,out=-45,in=167.5,looseness=1] (acd) to node[above,pos=0.8]{$d$} (e);
    \path (a1) ++ (0.1cm,-0.375cm) node{\tightfbox{$a_1$}};
    \draw[->,shorten >=3pt] (a1) to node[below]{$a_1$} (abcd-2);
    \draw[->,bend left,distance=2.9cm,relative=false,out=0,in=-20,shorten >=4pt] (a1) to node[right]{$a_1\hspace*{-0.18em}$} (abc);
    \path (a2) ++ (0.32cm,0.4cm) node{\tightfbox{$a_2$}};
    \draw[->] (a2) to node[above,near end]{$a_2$} (abcd-2);
    \draw[->,bend left,distance=1.65cm,relative=false,out=-5,in=-20,shorten >=4pt] (a2) to node[left,pos=0.65]{$a_2$} (abc);
    %
    \path (c1) ++ (-0.35cm,0.25cm) node{\tightfbox{$c_1$}};
    \draw[->,bend left,relative=false,out=200,in=180,looseness=1.9] (c1) to node[pos=0.55,right]{$c_1\hspace*{1.5em}$} (acd);
    \draw[->,out=280,in=210,distance=2.5cm] (c1) to node[below]{$c_1$} (abcd-2);
    %
    \path (c2) ++ (0.35cm,-0.25cm) node{\tightfbox{$c_2$}};
    \draw[->,out=260,in=200,distance=1.5cm,shorten >=3pt] (c2) to node[below,pos=0.425]{$c_2$} (abcd-2);
    \draw[->,out=220,in=180,looseness=3.75] (c2) to node[pos=0.275,below]{$c_2$\hspace*{-0.4em}} (acd);
    %
    \path (abcd-2) ++ (0.05cm,0cm) node[right]{$\tightfbox{$(abcd)_2$} = \bverti{2} $};
    \draw[->] (abcd-2) to node[right]{$d$} (e);
    \draw[->,densely dotted,thick,bend right,distance=1.25cm,relative=false,out=90,in=0] (abcd-2) to node[right,pos=0.45,xshift=0.05cm]{$\sone$} (abc);
    \path (e) ++ (0cm,-0.4cm) node{\tightfbox{$e$}};
    \draw[->,relative=false,out=12.5,in=0,distance=5cm] (e) to node[right]{$e$} (abc);
    \path (f) ++ (0cm,-0.45cm) node{\tightfbox{$f$}};
    \draw[->,relative=false,out=180,in=180,distance=3.5cm] (f) to node[right]{$f$} (acd);
  \end{pgfonlayer}
\end{tikzpicture}}
  \end{flushleft}
  \vspace*{-2.25ex}
  \caption{\label{fig:countex:collapse}%
           A \protect\onechart~$\aonechart$ with \protect\LLEEwitness~$\aonecharthat$ 
                                                 (colored \protect\loopentry\ transitions of level~1, green, of level~2, blue)
           that is not \protect\onebisimulation\ collapsed: 
               the correspondences $\protect\magenta{\protect\dashedmapsto}$ via the `counterpart function $\magenta{\scpfun}$'  
                 indicate a (grounded) functional \protect\onebisimulation\ slice on $\aonechart$.
           $\aonechart$ is not \protect\LLEEpreservingly\ \protect\onecollapsible.      
           The result $\aonecharti{1}$ 
                                       (small)
               of connecting through $\bverti{1}$ to $\bverti{2}$ in $\aonechart$
             is a \protect\onebisimulation\ collapse of~$\aonechart$,~but~not~a~\protect\mbox{\protect\LLEEonechart}.
           The colored regions explain $\aonechart$ as a \twincrystal, see Sect.~\ref{twincrystals}.
             }
\end{figure}%
\begin{prop}[\ref{contrib:concept:1}]\label{prop:not:LLEE:pres:collapse:minimization}
  The following two statements hold:\vspace*{-0.5ex}
  \begin{enumerate}[label={(\roman{*})},align=right,leftmargin=*,itemsep=0.25ex]
    \item{}\label{it:1:prop:not:LLEE:pres:collapse:minimization}
      W.g.\ \LLEEonecharts\ are not in general \LLEEpreservingly\ \onecollapsible,
        hence not in general \LLEE-pres.\ collapsible. 
    \item\label{it:2:prop:not:LLEE:pres:collapse:minimization}
      Two \onebisimilar\ \wg\ \LLEEonecharts\ 
        are not in general \LLEEpreservingly\ jointly minimizable under $\sfunonebisim$. 
  \end{enumerate}
  Statement~\ref{it:1:prop:not:LLEE:pres:collapse:minimization} 
    is witnessed by the \LLEEonechart~$\aonechart$ in Fig.~\ref{fig:countex:collapse}.
  Statement~\ref{it:2:prop:not:LLEE:pres:collapse:minimization}  
    is witnessed by the \onebisimilar\ generated sub-\LLEEonecharts\ $\gensubchartofby{\aonechart}{\bverti{1}}$ and $\gensubchartofby{\aonechart}{\bverti{2}}$ 
      of the \LLEEonechart~$\aonechart$ in Fig.~\ref{fig:countex:collapse}.
\end{prop}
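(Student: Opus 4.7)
The plan is to verify both statements by careful structural analysis of the \onechart~$\aonechart$ displayed in Fig.~\ref{fig:countex:collapse}. First I would confirm that the entry/body labeling $\aonecharthat$ indicated there is a genuine \LLEEwitness\ of $\aonechart$: the level-1 \loopentry\ transitions (shown in green) eliminate a small inner loop through $\bvertbari{1}$, and the subsequent level-2 \loopentry\ transitions (shown in blue) give rise to the outer loop at the start vertex $\avert$; after these two rounds no infinite path remains, and no \loopentry\ transition is ever deleted from a previously eliminated loop body, so layering holds. Weak guardedness is then immediate, because the only \onetransitions\ of $\aonechart$ are the two \backlinks\ from $\bverti{1}$ and $\bverti{2}$, and in $\aonecharthat$ every \onetransition\ sits strictly inside the body of a loop and so cannot participate in an infinite \onetransition\ path.

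For statement~\ref{it:1:prop:not:LLEE:pres:collapse:minimization} I would next argue that the counterpart function $\scpfun$ drawn in Fig.~\ref{fig:countex:collapse} is a self-\onebisimulation\ on $\aonechart$ which swaps $\bverti{1}$ with $\bverti{2}$ and fixes every other vertex: the forth, back, and termination conditions can be checked vertex by vertex, exploiting that $\bverti{1}$ and $\bverti{2}$ induce the same action enablings to corresponding targets under the swap. Hence $\bverti{1} \onebisim \bverti{2}$, and any \onebisimulation\ collapse of $\aonechart$ must identify these two vertices. The result of this identification (equivalently, of connecting-through $\bverti{1}$ to $\bverti{2}$) is the smaller \onechart~$\aonecharti{1}$ shown in the figure.

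The main task, and the step I expect to be the main obstacle, is then to show that $\aonecharti{1}$ admits no \LLEEwitness\ whatsoever. I would proceed by a case analysis on which non\nb-empty subset of transitions out of the start vertex $\avert$ can serve as \loopentry\ transitions at the outermost layer of a hypothetical \LLEEwitness\ of $\aonecharti{1}$. After the identification of $\bverti{1}$ with $\bverti{2}$, the two formerly disjoint loop bodies have been fused at the collapsed vertex, and the $c_1,c_2$\nb-transitions from the collapsed vertex together with the two \transitionsact{a} now cross-link these bodies irretrievably. Any candidate outermost loop at $\avert$ therefore either fails the loop condition~\ref{loop:2} (some infinite path escapes rather than returns to $\avert$) or later forces the layering requirement to be violated (a subsequent elimination step must remove a \loopentry\ transition lying inside a body that has already been eliminated). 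A short enumeration of the candidate outermost \loopentry\ sets suffices to rule out each possibility; the delicate point is to keep the loop conditions~\ref{loop:1}--\ref{loop:3} in balance with the layering constraint throughout the bookkeeping.

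Finally, statement~\ref{it:2:prop:not:LLEE:pres:collapse:minimization} I would reduce to statement~\ref{it:1:prop:not:LLEE:pres:collapse:minimization} by examining the generated \subonecharts\ $\gensubchartofby{\aonechart}{\bverti{1}}$ and $\gensubchartofby{\aonechart}{\bverti{2}}$. Each inherits \LLEE\ by restricting $\aonecharthat$ to its reachable part, each is \weaklyguarded, and the restriction of $\scpfun$ yields a \onebisimulation\ between them, so $\gensubchartofby{\aonechart}{\bverti{1}} \onebisim \gensubchartofby{\aonechart}{\bverti{2}}$. If a common \LLEEonechart~$\aonecharti{0}$ with $\gensubchartofby{\aonechart}{\bverti{1}} \funonebisim \aonecharti{0} \convfunonebisim \gensubchartofby{\aonechart}{\bverti{2}}$ existed, then the two functional \onebisimulations\ would together force the same identifications as the \onebisimulation\ collapse analyzed in the previous step, so $\aonecharti{0}$ would be isomorphic to $\aonecharti{1}$ (or a further functional \onebisimulation\ quotient of it); since $\aonecharti{1}$ has already been shown not to be a \LLEEonechart, and a functional \onebisimulation\ image cannot repair the structural obstruction at the start vertex, this yields the required contradiction.
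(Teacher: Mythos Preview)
Your overall strategy for~\ref{it:1:prop:not:LLEE:pres:collapse:minimization} tracks the paper's sketch, but your central claim that ``the counterpart function $\scpfun$ \ldots\ is a self-\onebisimulation\ on $\aonechart$ which swaps $\bverti{1}$ with $\bverti{2}$ and fixes every other vertex'' is false, and this error is not cosmetic. Consider the pair $\pair{c_1}{c_1}$ in the graph of your swap: we have $c_1 \ilt{c_1} \bverti{1}$, so the forth condition demands some $x$ with $c_1 \ilt{c_1} x$ and $\pair{\bverti{1}}{x}$ in the relation; under your swap, $\bverti{1}$ is related only to $\bverti{2}$, yet $c_1$ has no $c_1$\nb-transition to $\bverti{2}$. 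The same failure occurs at $\pair{a_1}{a_1}$ for the target $\bverti{2}$. What \emph{is} a \onebisimulation\ is $\graphof{\scpfun} \cup {=}$ with identity on \emph{all} vertices (including $\bverti{1},\bverti{2}$), cf.\ Lem.~\ref{lem:gr:bisim:slice}; the paper is careful to call $\scpfun$ only a \emph{local} transfer function (a grounded \onebisimulation\ \emph{slice}), and stresses that the swap cannot be extended to a transfer function. That non-extendability is precisely the phenomenon the example is built to exhibit, so glossing it as a full \onebisimulation\ misses the point.

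This matters most for~\ref{it:2:prop:not:LLEE:pres:collapse:minimization}. Your reduction (``the two functional \onebisimulations\ would together force the same identifications \ldots\ so $\aonecharti{0}$ would be isomorphic to $\aonecharti{1}$ or a further quotient'') does not go through: joint minimizability does not require $\aonecharti{0}$ to be \onecollapsed, so $\aonecharti{0}$ need not be a quotient of $\aonecharti{1}$ at all. The paper's hint is different: since $\gensubchartofby{\aonechart}{\bverti{1}}$ and $\gensubchartofby{\aonechart}{\bverti{2}}$ share the same underlying \oneLTS, the two functional \onebisimulations\ to $\aonecharti{0}$ compose (via $\aonecharti{0}$) to constraints that would force a transfer function on $\aonechart$ sending $\bverti{1}\mapsto\bverti{2}$ and $\bverti{2}\mapsto\bverti{1}$; the non-existence of such an extension is what yields the contradiction. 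Your argument does not invoke this, and the step where you conclude $\aonecharti{0}\cong\aonecharti{1}$ needs a genuine replacement. For~\ref{it:1:prop:not:LLEE:pres:collapse:minimization} note also that the paper checks \emph{both} connect-through charts (through $\bverti{1}$ to $\bverti{2}$ and through $\bverti{2}$ to $\bverti{1}$), since a priori \onebisimulation\ collapses of a \onechart\ are not unique.
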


That the \LLEEonechart~$\aonechart$ in Fig.~\ref{fig:countex:collapse} is not \onecollapsible\
  is suggested there by exhibiting a natural\vspace*{-1pt} \onebisimulation\ collapse that is not a \LLEEonechart:
    the \onebisimilar\ \onechart\ $\aonecharti{1}$ 
      that results by `connecting through' all incoming transitions at $\bverti{1}$ in $\aonechart$ over to $\bverti{2}$.
From 
     $\aonecharti{1}$ it is easy to intuit that the bisimulation collapse of $\aonechart$ cannot be a \LLEEonechart, either,
  and hence that $\aonechart$ is not \LLEEpreservingly\ collapsible.
While that is only part of the (remaining) proof 
  of Prop.~\ref{prop:not:LLEE:pres:collapse:minimization},~\ref{it:1:prop:not:LLEE:pres:collapse:minimization}, 
  it is also easy to check\vspace*{-0.5pt} 
    that the \connectthroughonechart{\bverti{2}}{\bverti{1}} $\aonecharti{1}$   
                                                               of $\aonechart$
    cannot be a \LLEEonechart, either. 
For the proof that $\gensubchartofby{\aonechart}{\bverti{1}}$ and $\gensubchartofby{\aonechart}{\bverti{2}}$
  are not \LLEEpreservingly\ jointly minimizable it is crucial to realize
    that a function that maps $\bverti{1}$ to $\bverti{2}$, and $\bverti{2}$ to $\bverti{1}$ 
      cannot be extended into a transfer function on $\aonechart$.
        (That function, however, defines a `grounded' \onebisimulation\ slice on $\aonechart$, see Def.~\ref{def:grounded:onebisim:slice} later,
                                                                                               and Fig.~\ref{fig:ex:local:transfer:function}).

\begin{figure}[t!]
\begin{flushleft}  
  \hspace*{-3ex}
  \scalebox{0.7}{%
\begin{tikzpicture}
  \matrix[anchor=center,row sep=0.5cm,column sep=1.65cm,ampersand replacement=\&,
          every node/.style={draw,very thick,circle,minimum width=2.5pt,fill,inner sep=0pt,outer sep=2pt}] at (0,0) {
  \\
                    \&  \node(abc--1){};
  \\
  \\
  \node(acd--1){};   
  \\
                    \& \node(a--1){};
  \\
  \node(c--1){};       \&                  \&  \node(abcd-2--1){};
  \\
  \\
  \node(abcd-1--1){};  \&                  \&  \node(e--1){};
  \\
  \\
  \node(f--1){};
  \\
  };
  \pgfdeclarelayer{background}
  \pgfdeclarelayer{foreground}
  \pgfsetlayers{background,main,foreground}
  
  \begin{pgfonlayer}{background}
    \draw[draw opacity=0,fill opacity=0.4,fill=mediumblue!40] (abc--1.center)--(a--1.center)--(abcd-2--1.center)--(e--1.center) to[->,out=-3,in=2,distance=2.95cm] (abc--1.center);
    \draw[draw opacity=0,fill opacity=0.4,fill=forestgreen!30] (acd--1.center)--(f--1.center) to[out=181,in=179,distance=2.67cm] (acd--1.center);
  \end{pgfonlayer}{background}
  \draw[<-,very thick,color=chocolate,>=latex](abc--1) -- ++ (180:0.5cm););
  \path (abc--1) ++ (-1.55cm,0.1cm) node{\LARGE $\aonecharti{s} \, / \, \aonecharthati{s}$};
  \path (abc--1) ++ (0cm,0.375cm) node{$\tightfbox{$abc$}$};
  \draw[->,thick,royalblue] (abc--1) to node[right,xshift=-0.05cm]{\black{$a$}} (a--1);
  \draw[->,thick,royalblue,shorten >=3pt]  (abc--1) to node[left,xshift=0.05cm,pos=0.3]{\black{$c$}} (c--1);
  \draw[->,thick,royalblue,shorten >= 5pt] (abc--1) to node[right,pos=0.65,xshift=-0.05cm]{\black{$b$}} (f--1);
  \path (acd--1) ++ (0cm,0.375cm) node{$\tightfbox{$acd$}$};
  \draw[->] (acd--1) to node[above,pos=0.25]{$a$} (a--1);
  \draw[->,thick,forestgreen] (acd--1) to node[left,xshift=0.085cm,pos=0.5]{\black{$c$}} (c--1);
  \draw[->,bend right,distance=0.75cm] (acd--1) to node[below,pos=0.475]{$d$} (e--1);
  \path (c--1) ++ (-0.25cm,-0.3cm) node{$\tightfbox{$c$}$};
  \draw[->] (c--1) to node[right,xshift=-0.05cm]{$c$} (abcd-1--1);
  \draw[->,out=180,in=225] (c--1) to node[left,xshift=0.05cm]{$c$} (acd--1);
  \path (abcd-1--1) ++ (-0.55cm,-0.35cm) node{$\tightfbox{$abcd_1$}$};
  \draw[->,densely dotted,thick,out=180,in=209,shorten >=4.5pt,distance=1.5cm] (abcd-1--1) to node[left,xshift=0.05cm]{$\sone$} (acd--1);
  \draw[->] (abcd-1--1) to node[right,xshift=-0.075cm,pos=0.25]{$b$} (f--1);
  \path (f--1) ++ (0cm,-0.375cm) node{$\tightfbox{$f$}$};
  \draw[->,out=180,in=180,distance=2.5cm] (f--1) to node[left,xshift=0.05cm]{$f$} (acd--1);
  \path (a--1) ++ (-0.25cm,-0.25cm) node{$\tightfbox{$a$}$};
  \draw[->,out=0,in=-45,distance=0.75cm] (a--1) to node[right,xshift=-0.05cm]{$a$} (abc--1);
  \draw[->] (a--1) to node[above,pos=0.6]{$a$} (abcd-2--1);
  \path (abcd-2--1) ++ (0.55cm,-0.35cm) node{$\tightfbox{$abcd_2$}$};
  \draw[->,densely dotted,thick,out=0,in=-25,distance=1.5cm,shorten >=4.5pt] (abcd-2--1) to node[right,xshift=0.05cm]{$\sone$} (abc--1);
  \draw[->] (abcd-2--1) to node[left,xshift=0.05cm]{$d$} (e--1);
  \path (e--1) ++ (0cm,-0.375cm) node{$\tightfbox{$e$}$};
  \draw[->,out=0,in=0,distance=2.75cm] (e--1) to node[right]{$e$} (abc--1);


  %
  \matrix[anchor=center,row sep=0.5cm,column sep=1.65cm,ampersand replacement=\&,
          every node/.style={draw,very thick,circle,minimum width=2.5pt,fill,inner sep=0pt,outer sep=2pt}] at (4.25,-3.85) {
  \\
                    \&  \node(abc--2){};
  \\
  \\
  \node(acd--2){};   
  \\
                    \& \node(a--2){};
  \\
  \node(c--2){};       \&                  \&  \node(abcd-2--2){};
  \\
  \\
  \node(abcd-1--2){};  \&                  \&  \node(e--2){};
  \\
  \\
  \node(f--2){};
  \\
  };
  \pgfdeclarelayer{background}
  \pgfdeclarelayer{foreground}
  \pgfsetlayers{background,main,foreground}
  
  \begin{pgfonlayer}{background}
    \draw[draw opacity=0,fill opacity=0.4,fill=mediumblue!40] (abc--2.center)--(a--2.center)--(abcd-2--2.center)--(e--2.center) to[->,out=-3,in=2,distance=2.95cm] (abc--2.center);
    \draw[draw opacity=0,fill opacity=0.4,fill=forestgreen!30] (acd--2.center)--(f--2.center) to[out=181,in=179,distance=2.67cm] (acd--2.center);
  \end{pgfonlayer}{background}
  \draw[<-,very thick,color=chocolate,>=latex](abc--2) -- ++ (180:0.5cm););
  \path (abc--2) ++ (2.2cm,0.3cm) node{\LARGE $\aonecharti{s} \, / \, \aonecharthati{s}$};
  \path (abc--2) ++ (0cm,0.375cm) node{$\tightfbox{$abc$}$};
  \draw[->,thick,royalblue] (abc--2) to node[right,xshift=-0.05cm]{\black{$a$}} (a--2);
  \draw[->,thick,royalblue,shorten >=3pt]  (abc--2) to node[left,xshift=0.05cm,pos=0.3]{\black{$c$}} (c--2);
  \draw[->,thick,royalblue,shorten >= 5pt] (abc--2) to node[right,pos=0.65,xshift=-0.05cm]{\black{$b$}} (f--2);
  \path (acd--2) ++ (0cm,0.375cm) node{$\tightfbox{$acd$}$};
  \draw[->] (acd--2) to node[above,pos=0.25]{$a$} (a--2);
  \draw[->,thick,forestgreen] (acd--2) to node[left,xshift=0.085cm,pos=0.5]{\black{$c$}} (c--2);
  \draw[->,bend right,distance=0.75cm] (acd--2) to node[below,pos=0.55]{$d$} (e--2);
  \path (c--2) ++ (-0.25cm,-0.3cm) node{$\tightfbox{$c$}$};
  \draw[->] (c--2) to node[right,xshift=-0.05cm]{$c$} (abcd-1--2);
  \draw[->,out=180,in=225] (c--2) to node[left,xshift=0.05cm]{$c$} (acd--2);
  \path (abcd-1--2) ++ (-0.55cm,-0.35cm) node{$\tightfbox{$abcd_1$}$};
  \draw[->,densely dotted,thick,out=180,in=209,shorten >=4.5pt,distance=1.5cm] (abcd-1--2) to node[left,xshift=0.05cm]{$\sone$} (acd--2);
  \draw[->] (abcd-1--2) to node[right,xshift=-0.075cm,pos=0.25]{$b$} (f--2);
  \path (f--2) ++ (0cm,-0.375cm) node{$\tightfbox{$f$}$};
  \draw[->,out=180,in=180,distance=2.5cm] (f--2) to node[left,xshift=0.05cm]{$f$} (acd--2);
  \path (a--2) ++ (-0.25cm,-0.25cm) node{$\tightfbox{$a$}$};
  \draw[->,out=0,in=-45,distance=0.75cm] (a--2) to node[right,xshift=-0.05cm]{$a$} (abc--2);
  \draw[->] (a--2) to node[above,pos=0.6]{$a$} (abcd-2--2);
  \path (abcd-2--2) ++ (0.55cm,-0.35cm) node{$\tightfbox{$abcd_2$}$};
  \draw[->,densely dotted,thick,out=0,in=-25,distance=1.5cm,shorten >=4.5pt] (abcd-2--2) to node[right,xshift=0.05cm]{$\sone$} (abc--2);
  \draw[->] (abcd-2--2) to node[left,xshift=0.05cm]{$d$} (e--2);
  \path (e--2) ++ (0cm,-0.375cm) node{$\tightfbox{$e$}$};
  \draw[->,out=0,in=0,distance=2.75cm] (e--2) to node[right]{$e$} (abc--2);

  \draw[|->,very thick,densely dashed,magenta,distance=0.75cm,out=-32,in=205,distance=2cm] (abcd-1--1) to node[below,pos=0.25]{$\scpfunon{\black{\csetverts}}$} (abcd-2--2);
  \draw[|->,very thick,densely dashed,magenta,distance=0.75cm,out=235,in=140,distance=1.25cm] (abcd-2--1) to node[left,pos=0.325,xshift=0.05cm]{$\scpfunon{\black{\csetverts}}$} (abcd-1--2);
  
  \draw[-,thick,densely dashed,chocolate,out=200,in=190,distance=2.5cm] (f--1) to (f--2);
  \draw[-,thick,densely dashed,chocolate,out=40,in=105,distance=2.35cm] (a--1) to (a--2);
  \draw[-,thick,densely dashed,chocolate,out=-50,in=155,distance=1.5cm] (c--1) to (c--2);
  \draw[-,thick,densely dashed,chocolate,out=-30,in=202.5,distance=3.5cm] (e--1) to (e--2);
  \draw[-,thick,densely dashed,chocolate,out=20,in=10,distance=2.75cm,shorten <=3.5pt,shorten >=3.5pt] (abc--1) to (abc--2);
  \draw[-,thick,densely dashed,chocolate,out=-77.5,in=170,distance=1cm,shorten <=0pt,shorten >=3.5pt] (acd--1) to (acd--2);
  \draw[-,thick,densely dashed,chocolate,out=-45,in=165,distance=1.75cm,shorten <=6pt,shorten >=6pt] (abcd-1--1) to (abcd-1--2);
  \draw[-,thick,densely dashed,chocolate,out=35,in=90,distance=2cm] (abcd-2--1) to (abcd-2--2);
    
\end{tikzpicture}}\vspace*{-1ex}
\end{flushleft}
  \vspace*{-1.5ex}
  \caption{\label{fig:ex:local:transfer:function}%
    Two copies of a
      simplified version $\aonecharti{s}$ (with \LLEEwitness~$\aonecharthati{s}$) of the \onechart~$\aonechart$ in Fig.~\ref{fig:countex:collapse},
    linked by  a local transfer function $\magenta{\scpfunon{\black{\twinc}}}$ (\magenta{$\mapsto$ links}) 
      whose graph (a grounded \onebisimulation\ slice) 
      is extended to a \onebisimulation\ (\chocolate{added links}).
    Like $\aonechart$ in Fig.~\ref{fig:countex:collapse}, also $\aonecharti{s}$ is a \twincrystal\ shaped \LLEEonechart.  
    }
\end{figure}%
For motivating concepts in the next sections
  we will use the simplified version $\aonecharti{s}$ in Fig.~\ref{fig:ex:local:transfer:function}
  of the \LLEEonechart~$\aonechart$ in Fig.~\ref{fig:countex:collapse}.
We emphasize, however, that although $\aonecharti{s}$ is not collapsible, it is \onecollapsible.
  Hence $\aonecharti{s}$ does not witness statement \ref{it:1:prop:not:LLEE:pres:collapse:minimization},
  nor can it be used for showing \ref{it:2:prop:not:LLEE:pres:collapse:minimization}~in~Prop.~\ref{prop:not:LLEE:pres:collapse:minimization}.

\section{Twin-Crystals}%
  \label{twincrystals}

The \onechart\ $\aonechart$ in Fig.~\ref{fig:countex:collapse} and Prop.~\ref{prop:not:LLEE:pres:collapse:minimization} 
  will turn out to be prototypical for strongly connected components in \LLEEonecharts\
    that are `nearly collapsed' but not \LLEEpreservingly\ collapsible any further. 
By isolating a number of its properties from $\aonechart$, and from its simplified version $\aonecharti{s}$ in Fig.~\ref{fig:ex:local:transfer:function},
  we define the central concept of `\twincrystal'.
Each of the \LLEEonecharts\ $\aonechart$ and $\aonecharti{s}$ consists of a single \scc\  
  that is of `\twincrystal\ shape', which exhibits a certain kind of 
                                                     symmetry with respect to \onebisimilarity.
Our proof utilizes this symmetry for proving uniqueness of provable solutions
  for \LLEEonecharts\ of which all \sccs\ are \onecollapsed\ or of \twincrystal\ shape. 
Before describing \twincrystals, we first define, also motivated by the two examples,
`grounded \onebisimulation\ slices', `local transfer functions', and `\nearcollapsed'~\onecharts.  

On both of the \LLEEonecharts\ $\aonechart$ in Fig.~\ref{fig:countex:collapse}, and $\aonecharti{s}$ in Fig.~\ref{fig:ex:local:transfer:function} 
  there are \nontrivial\ functional \onebisimulation\ slices, 
  which are suggested by the \magenta{magenta links}.
These \onebisimulation\ slices
  cannot be extended to \onebisimulations\ that are defined by transfer functions.
However, they have an expedient additional property
    that will permit us to work with `local transfer functions' instead.
Namely, that induced transitions from a vertex $\bverti{1}$ of pair $\pair{\bverti{1}}{\bverti{2}}$ of a slice $\abisim$
  to a vertex $\bvertacci{1}$ outside of the domain of $\abisim$ 
    can be joined by an induced transition with the same label from $\bverti{2}$ to $\bverti{1}$,
and vice versa. We call \onebisimulation\ slices with this property `grounded',
and functions that induce them `local transfer functions'.

\begin{defi}\label{def:grounded:onebisim:slice}
  Let $\aoneLTS = \tuple{\states,\actions,\sone,\transs,\termexts}$ be a \oneLTS.\nopagebreak[4]
  
  By a \emph{grounded \onebisimulation\ slice on $\aoneLTS$}
    we mean a \onebisimulating\ slice~$\abisim \subseteq \states\times\states$ on $\aoneLTS$ 
    such that for all $\pair{\bstatei{1}}{\bstatei{2}}\in\abisim$
  the following additional forth/back conditions hold:
  \begin{itemize}[labelindent=0.2em,leftmargin=*,align=left,itemsep=0.25ex,labelsep=0.25em]
    \item[(forth)$_\text{\nf g}$]
      $ \begin{aligned}[t] 
          \forall \aact\in\actions\,
            \forall \bstateacci{1}\in\statesi{1}
                \bigl(\, & 
                  \bstatei{1} \ilt{\aact} \bstateacci{1}
                    \logand
                  \bstateacci{1}\notin\asetvertsi{1} 
                  \\[-1ex] 
                     & \;\Longrightarrow\;
                      \bstatei{2} \ilt{\aact} \bstateacci{1}
                    \logand
                  \bstateacci{1}\notin\asetvertsi{2}  
              \,\bigr) \punc{,} 
        \end{aligned}$
      
    \item[(back)$_\text{\nf g}$]
      $ \begin{aligned}[t]
          \forall \aact\in\actions\,
            \forall \bstateacci{2}\in\statesi{1}
                \bigl(\, &  
                  \bstatei{1} \ilt{\aact} \bstateacci{2}
                    \logand
                  \bstateacci{2}\notin\asetvertsi{1}
                    \\[-1ex]
                    & \;\Longleftarrow\;
                  \bstatei{2} \ilt{\aact} \bstateacci{2}
                    \logand
                  \bstateacci{2}\notin\asetvertsi{2}  
              \,\bigr) \punc{.} 
       \end{aligned} $
  \end{itemize} 
  where $\asetvertsi{1} \defdby \actdomof{\abisim}$, 
        and $\asetvertsi{2} \defdby \actcodomof{\abisim}$ are the active domain, and the active codomain of $\abisim$, respectively.
\end{defi}

\begin{lem}
            \label{lem:gr:bisim:slice}
  For every grounded bisimulation slice $\abisimslice \subseteq \states\times\states$ 
      on a \oneLTS~$\aoneLTS = \tuple{\states,\actions,\sone,\transs,\termexts}$,
    the relation
    $
    \abisimdoublebar \defdby \abisimslice \,\cup\, \seqrel 
    $
  is a \onebisimulation\ on $\aoneLTS$.
\end{lem}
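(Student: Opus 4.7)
The plan is to verify, for each pair in $\abisimdoublebar = \abisimslice \cup \seqrel$, the (forth), (back), and (termination) conditions demanded of a $\sone$-bisimulation on $\aoneLTS$ in Def.~\ref{def:onebisim:slices:betw:oneLTSs} (that is, the three conditions \emph{without} the underlined domain restrictions). Non-emptyness of $\abisimdoublebar$ is immediate, since $\abisimslice \neq \emptyset$ by assumption on the slice. For each pair I split on whether it comes from $\abisimslice$ or from the diagonal $\seqrel$ on $\states$.

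The diagonal case $(v,v) \in \seqrel$ is trivial: any induced transition $v \ilt{a} v'$ is matched by itself, yielding $(v',v') \in \seqrel \subseteq \abisimdoublebar$; the back direction is symmetric; and $\oneterminates{v} \iff \oneterminates{v}$ holds tautologically.

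The main case is $(\bstatei{1},\bstatei{2}) \in \abisimslice$, which is where the grounded conditions get used. Termination for $(\bstatei{1},\bstatei{2})$ is handed to us by the hypothesis that $\abisimslice$ is a $\sone$-bisimulating slice. For (forth), I would pick an induced transition $\bstatei{1} \ilt{a} \bstateacci{1}$ and branch on whether $\bstateacci{1}$ lies in the active domain $\asetvertsi{1}$ of $\abisimslice$. If it does, the restricted condition (forth)$_\text{\nf s}$ from Def.~\ref{def:onebisim:slices:betw:oneLTSs} supplies some $\bstateacci{2}$ with $\bstatei{2} \ilt{a} \bstateacci{2}$ and $(\bstateacci{1},\bstateacci{2}) \in \abisimslice \subseteq \abisimdoublebar$. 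If $\bstateacci{1} \notin \asetvertsi{1}$, the groundedness condition (forth)$_\text{\nf g}$ from Def.~\ref{def:grounded:onebisim:slice} yields $\bstatei{2} \ilt{a} \bstateacci{1}$ directly, and the matching pair $(\bstateacci{1},\bstateacci{1})$ lies in $\seqrel \subseteq \abisimdoublebar$. The (back) direction runs entirely symmetrically, using (back)$_\text{\nf s}$ when the target stays inside $\asetvertsi{2}$, and (back)$_\text{\nf g}$ together with the diagonal when it leaves $\asetvertsi{2}$.

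I do not expect a real obstacle in the argument; it is a routine case split forced by the definitions. The one conceptual point worth flagging is that the grounded conditions are engineered precisely so that transitions leaving the active domain (or codomain) of the slice are matched at the diagonal of $\seqrel$: this is why enlarging $\abisimslice$ by the identity relation, and nothing more, is exactly what it takes to close a grounded bisimulating slice into a full $\sone$-bisimulation.
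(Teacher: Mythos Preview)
Your proof is correct and is precisely the natural verification forced by the definitions: case-split on whether the pair comes from $\abisimslice$ or from the diagonal, and for pairs in $\abisimslice$ case-split on whether the target of an induced transition stays in the active (co)domain, invoking (forth)$_{\text{s}}$/(back)$_{\text{s}}$ when it does and (forth)$_{\text{g}}$/(back)$_{\text{g}}$ together with the diagonal when it does not. The paper states this lemma without proof, so there is nothing further to compare; your argument is exactly what one would expect the omitted proof to be.
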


\begin{defi}\label{def:local:transfer:function}
  A \emph{\localtransfer\ 
                          function} on a \oneLTS~$\aoneLTS = \tuple{\states,\actions,\sone,\transs,\termexts}$
    is a partial function $\sphifun \funin \states \rightharpoonup \states$ 
      whose graph $\descsetexp{ \pair{\astate}{\phifun{\astate}} }{ \astate \in \states }$
        is a grounded \onebisimulation\ slice on ${\aoneLTS}$. 
\end{defi}

\begin{exa}\label{ex:local:transfer:function}
  Both of the functions $\scpfunon{\twinc}$ on the \onecharts~$\aonechart$ in Fig.~\ref{fig:countex:collapse}, 
    and $\scpfunon{\twinc}$ on $\aonecharti{s}$ in Fig.~\ref{fig:ex:local:transfer:function} 
  are local transfer functions: in particular in Fig.~\ref{fig:ex:local:transfer:function} it can be checked easily
    that $\scpfunon{\twinc}$ defines a grounded \onebisimulation\ slice. 
  Neither of these local transfer functions can be extended into a transfer function.
  For $\aonecharti{s}$ this can be checked in Fig.~\ref{fig:countex:collapse}: 
      all pairs of the identity function have to be added 
        in order to extend $\graphof{\scpfunon{\twinc}}$ into a \onebisimulation,
      thereby violating functionality of the relation.  
\end{exa}


We will say that a \onechart~$\aonechart$ (and respectively, a \oneLTS~$\aoneLTS$) is `\nearcollapsed'
  if \onebisimilarity\ on $\aonechart$ (on $\aoneLTS$) is the reflexive--symmetrical closure
    of the union of the graphs of finitely many local transfer functions on $\aoneLTS$. 

\begin{defi}[\ref{contrib:concept:3}]
                  \label{def:local:nearcollapsed}
  Let $\aoneLTS$ be a \oneLTS\ with state set $\verts$, and let $\csetverts\subseteq\verts$ be a subset of the vertices of $\aoneLTS$. 
  We say that $\aoneLTS$ \emph{is locally \nearcollapsed\ for $\csetverts$}
    if there are local transfer functions $\sphifuni{1},\ldots,\sphifuni{n} \funin \verts \rightharpoonup \verts$ on $\aoneLTS$ such that:
  \begin{center}
    $ 
    \sonebisimon{\aoneLTS} \cap ( \csetverts \times \csetverts )
      \,\subseteq\,
    \sredrsci{\sabinrel} \punc{,} 
    \quad
      \text{for } \sabinrel = \bigcup_{i=1}^n \graphof{\sphifuni{i}} \punc{,}
    $
  \end{center} 
  where $\sredrsci{\sabinrel}$ means the \reflexivesymmetric\ closure of $\sabinrel$.
  We say that $\aoneLTS$ \emph{is \nearcollapsed} 
    if $\aoneLTS$ is \nearcollapsed\ for $\verts$. 
   
  A \onechart~$\aonechart$ is \emph{(locally for $\csetverts$) \nearcollapsed}
    if the \oneLTS\ $\oneLTSof{\aonechart}$ underlying $\aonechart$ is (locally for $\csetverts$) \nearcollapsed. 
\end{defi}

\begin{example}
  Each of the \onecharts\ $\aonechart$ in Fig.~\ref{fig:countex:collapse}, 
    and $\aonecharti{s}$ in Fig.~\ref{fig:ex:local:transfer:function}
  contains only one pair of distinct \onebisimilar\ vertices.
  Since these vertices are related, respectively, by the appertaining transfer function $\scpfunon{\aonechart}$,
    it follows that $\aonechart$ and $\aonecharti{s}$ are locally \nearcollapsed\ for the sets $\csetverts$ of all their vertices, respectively.
  Therefore $\aonechart$ and $\aonecharti{s}$ are \nearcollapsed.
\end{example}


     %
\begin{figure}[!t]
\begin{center}
\begin{tikzpicture}
  
  \node (pivot) at (0,0) {};
    \path (pivot) ++ (0cm,0.4cm) node{$\pivottwinc$};
    \path (pivot) ++ (-1.35cm,-2.75cm) node (P1-left) {};
    \path (pivot) ++ (1.35cm,-2.75cm) node (P1-right) {};
    
  \path (pivot) ++ (0cm,-3.05cm) node{$\twincpivot$};

  \node (top) at (3.5,0) {};
    \path (top) ++ (0cm,0.4cm) node{$\toptwinc$};
    \path (top) ++ (-1.35cm,-2.75cm) node (P2-left) {};
    \path (top) ++ (1.35cm,-2.75cm) node (P2-right) {};
    
  \path (top) ++ (0cm,-3.05cm) node{$\twinctop$};

  \path ($(pivot)!0.5!(top)$) ++ (0cm,0.65cm) node{$\twinc$};

  \draw [-,fill=forestgreen!20] (pivot.center) -- (P1-left.center) -- (P1-right.center) -- (pivot.center);

  \draw [-,fill=royalblue!35] (top.center) -- (P2-left.center) -- (P2-right.center) -- (top.center);
  
  \path (top) ++ (0cm,-0.4cm) node {\large\bf\royalblue{$E_2$}};

    \draw[draw=none] (pivot) to node[pos=0.325](1-pivot){} (P1-left);
  \draw[->,thick,forestgreen] (pivot.center) to (1-pivot.center);
    \draw[draw=none] (pivot) to node[pos=0.425](2-pivot){} (P1-right);
  \draw[->,thick,forestgreen] (pivot.center) to (2-pivot.center);   
  %
  %
  \draw[draw=none] (top) to node[pos=0.425](1-top){} (P2-left);
    \draw[->,very thick,royalblue] (top.center) to (1-top.center);
  \draw[draw=none] (top) to node[pos=0.325](2-top){} (P2-right);
    \draw[->,very thick,royalblue] (top.center) to (2-top.center);   
  \draw[->,shorten >=2pt] (pivot.center) to (1-top.center);
  \draw[->,shorten >=5.5pt] (pivot.center) to (2-top.center);  
  \draw[->,thick,royalblue,shorten >=5.5pt] (top.center) to (1-pivot.center);
  \draw[->,thick,royalblue,shorten >=2pt] (top.center) to (2-pivot.center);

  \draw[draw=none] (pivot) to node[pos=0.675](boundary-left-P1){} (P1-left);
  \draw[draw=none] (pivot) to node[pos=0.75](boundary-right-P1){} (P1-right);
  %
  \draw[out=20,in=210,distance=1cm] (boundary-left-P1.center) to node[pos=0.4](boundary-P1-mid){} (boundary-right-P1.center);
    \path (boundary-P1-mid.center) ++ (0.5cm,0.1cm) node(v1){}; 
    \draw[->] (boundary-P1-mid.center) to (v1.center);
    \draw[->,thick,royalblue] (top.center) to (v1.center);
    \path (boundary-left-P1) ++ (-0.5cm,-0.1cm) node{$\boundaryof{\twincambigpivot}$};
    
  \draw[->>,thick,densely dotted,out=170,in=180,distance=1.5cm,shorten >=-2pt] (boundary-left-P1.center) to (pivot);  
   
  \path (boundary-P1-mid) ++ (-0.2cm,0.4cm) node{$\twincunambigpivot$}; 
  \path (boundary-P1-mid) ++ (-0.35cm,-0.4cm) node{$\twincambigpivot$};
  
  \path (boundary-P1-mid.center) ++ (0.35cm,-0.5cm) node(w1){}; 
  \draw[->,thick,forestgreen] (pivot.center) to (w1.center);
  
  \draw[->,red] ($(boundary-P1-mid.center) + (-0.6cm,-0.2cm)$) to node[sloped,pos=0.625]{/} ($(boundary-P1-mid.center) + (0.6cm,0.525cm)$);

  \draw[draw=none] (top) to node[pos=0.7](boundary-left-P2){} (P2-left);
  \draw[draw=none] (top) to node[pos=0.55](boundary-right-P2){} (P2-right);
  %
  \draw[out=-30,in=190,distance=1cm] (boundary-left-P2.center) to node[pos=0.5](helper-P2){} node[pos=0.9](boundary-P2-mid){} (boundary-right-P2.center); 
    \path (boundary-right-P2) ++ (0.55cm,-0.1cm) node{$\boundaryof{\twincambigtop}$}; 
    \path (boundary-P2-mid.center) ++ (0cm,0.35cm) node(v2){}; 
    \draw[->] (boundary-P2-mid.center) to (v2.center);
    \draw[->] (pivot.center) to (v2.center);
  \draw[->>,thick,densely dotted,out=30,in=0,distance=1.5cm,shorten >=-2pt] (boundary-right-P2.center) to (top);

  \path (boundary-P2-mid) ++ (-0.35cm,0.15cm) node{$\twincunambigtop$}; 
  \path (boundary-P2-mid) ++ (0cm,-0.65cm) node{$\twincambigtop$};  
    
  \path (boundary-P2-mid.center) ++ (-0.9cm,-0.7cm) node(w2){}; 
  \draw[->,very thick,royalblue] (top.center) to (w2.center);  
  
  \draw[->,red] ($(helper-P2) + (0cm,-0.6cm)$) to node[pos=0.45,sloped]{/} ($(helper-P2) + (0cm,0.25cm)$);

  \draw[|->,thick,out=-20,in=200,shorten <=0pt,shorten >=2pt,densely dashed,magenta,distance=0.75cm] 
    (w1) to node[below,yshift=0.05cm,pos=0.53]{$\scpfunon{\black{\twinc}}$} (w2);
  \draw[|->,thick,out=160,in=20,shorten <=0pt,shorten >=2pt,densely dashed,magenta,distance=0.75cm] 
    (w2) to node[above,yshift=-0.05cm]{$\scpfunon{\black{\twinc}}$} (w1);

    
\end{tikzpicture}
\end{center}
  \vspace*{-2ex}
  \caption{\label{fig:twincrystal:schema}%
    Structure schema of
    a \protect\twincrystal\ with carrier $\twinc$, 
      with part $\twincpivot$ of pivot vertex $\pivottwinc$,
      and part $\twinctop$ of top vertex $\toptwinc$,
      where $\twinctop$ is generated by the top entry transitions in \royalblue{$E_2$}.
           }
\end{figure}%
%
%
%
For vertices $\bvert$ and $\avert$ in a \LLEEwitness~$\aonecharthat$
  we write $\bvert \loopsbackto \avert$, $\bvert$ \emph{\txtloopsbackto} $\avert$,
  if $\avert \comprewrels{\sredi{\looplab{n}}}{\sredrtci{\bodylab}} \bvert \redtci{\bodylab} \avert$ holds with $n\in\natplus$ and
     such that $\avert$ is only encountered again at~the~end.
We fix $\sectionto{\avert}{\sloopsbacktortc} \defdby \descsetexp{ \bvert }{ \bvert \loopsbacktortc \avert }$,
  the \emph{\txtloopsbackto\ part~of~$\avert$}. 
  %
   
The structure of the \LLEEonecharts~$\aonechart$ and $\aonecharti{s}$ above
  can be described by the illustration in Fig.~\ref{fig:twincrystal:schema},
together with the following eight properties
  that define when a vertex set $\twinc$ in a \LLEEonechart~$\aonechart = \tuple{\verts,\actions,\sone,\start,\transs,\termexts}$
    is the \emph{carrier of a \twincrystal} (and that the \subLTS\ induced by $\twinc$ in $\aonechart$ is a \emph{\twincrystal})
      with respect to vertices $\toptwinc,\pivottwinc\in\twinc\subseteq\verts$, sets $\twinctop,\twincpivot\subseteq\twinc$,
        a \LLEEwitness~$\aonecharthat$ of $\aonechart$
          with binary \txtloopsbackto\ relation $\sloopsbackto$,
        and a \nonempty\ set $E_2$ of transitions~from~$\toptwinc\,$: 
\begin{enumerate}[label={(tc-\arabic*)},leftmargin=*,align=right,itemindent=0ex,itemsep=0.25ex]
  
    \item{}\label{tc-1}
      $\twinc = \sectionto{\toptwinc}{\sloopsbacktortc}$ is a maximal \txtloopsbackto\ part. 
        Then $\toptwinc$ is \maximalwrt{\sloopsbackto}, and $\twinc$ is an \scc.
      
    \item{}\label{tc-2}
      $\twinc = \twincpivot \uplus \twinctop$ for
      $\twincpivot \defdby \sectionto{\pivottwinc}{\sloopsbacktortc}$
        and
      $\twinctop \defdby \sectionto{\toptwinc}{\sloopsbackinloopltortc{E_2}}$,
      the \txtloopsbackto\ part generated by transitions~in~$E_2$. 
      We call $\pivottwinc\,$ \emph{pivot vertex} and $\toptwinc\,$ \emph{top vertex}.
      Then $\pivottwinc\in\twincpivot \subseteq \twinc$, 
            $\toptwinc\in\twinctop \subseteq \twinc$,
       and $\family{P_{i}}{i\in\setexp{1,2}}$ is a partition~of~$\twinc$.
      
    \item{}\label{tc-3}
      $\aonechart$ is not \onecollapsed\ for $\twinc$. Hence $\twinc$ contains \onebisimilarity\ redundancies. 
      
    \item{}\label{tc-4}
      (Using terminology from Def.~\ref{lem:reducing:1brs} later:) 
      All `reduced' \onebisimilarity\ redundancies in $\twinc$ 
          are of `\precrystalline' form \ref{R3.4},
          with one vertex in $\twincpivot$, and the other in $\twinctop$. 
        %
    
    \item{}\label{tc-5} 
      Proper transitions from $\pivottwinc\,$ \emph{favor} $\twincpivot$:
        whenever a prop\-er transition from $\pivottwinc$ is \onebisimilar\ to a vertex in $\twincpivot$,
          then its target is in $\twincpivot$. 
      
    \item{}\label{tc-6} 
        Proper transitions from $\toptwinc\,$ \emph{favor} $\twinctop$ \mbox{} (confer \ref{tc-5}).
        
    \item{}\label{tc-7}
      $\twinc$ is \emph{squeezed in $\aonechart$}: 
        no vertex in $\twinc$ is \onebisimilar\ to a vertex outside of $\twinc$. 
    \item{}\label{tc-8}
      $\twinc$ is \emph{grounded in $\aonechart$}:
        any two transitions from $\twinc$ with \onebisimilar\ targets outside of $\twinc$
        have the same target.
  \end{enumerate}
  
  For carrier sets $\twinc$ of a \twincrystal\ in $\aonechart$ 
    it can be shown that \onebisimilarity\ redundancies in $\twinc$ occur 
      with one vertex in $\twincpivot$, and the other in $\twinctop$.
  Hence a vertex in $\twincpivot$ may have a \onebisimilar\ counterpart, which can only be in $\twinctop$; and vice versa.    
  Then the following derived concepts can be introduced (see Fig.~\ref{fig:twincrystal:schema}).
  The pivot part $\twincpivot$ partitions into the sets $\twincunambigpivot$ and $\twincambigpivot$ 
    of unambiguous, and ambiguous vertices, in $\twinc$
    that are unique, and respectively, are not unique up to $\onebisimon{\aonechart}$. 
  The top part $\twinctop$ partitions into the sets $\twincunambigtop$ and $\twincambigtop$ 
    of unambiguous, and ambiguous vertices, respectively.
  The boundary vertices $\boundaryof{\twincambigpivot}$ of $\twincpivot$ have \onetransition\ paths to $\pivottwinc$,
    and the boundary vertices $\boundaryof{\twincambigtop}$ of $\twinctop$ have \onetransition\ paths to $\toptwinc$.
  There are no transitions directly from $\twincambigpivot\setminus\boundaryof{\twincambigpivot}$ to $\twincunambigpivot$,
    and no transitions directly from $\twincambigtop\setminus\boundaryof{\twincambigtop}$ to $\twincunambigtop$.
    
  \oneBisimilar\ vertices in $\twincambigpivot$ and $\twincambigtop$ are related 
    by the \emph{counterpart (partial) function} $\scpfunon{\twinc} \funin \verts \rightharpoonup \verts$ \emph{on $\aonechart$} 
    with domain and range contained in $\twinc$  that is defined, for all~$\bvert\in\verts$,~by:
    \begin{align*}
      \cpfunon{\twinc}{\bvert} \defdby
                \begin{cases}
                  \bvertbar         & \parbox[t]{\widthof{if $\bvert\in\twincambig$, and $\bvertbar$ the \onebisimilar}}
                                             {if $\bvert\in\twincambig$, and $\bvertbar$ the \onebisimilar\
                                              \\
                                              \phantom{if} {counterpart} of $\bvert$ in $\twinc$\punc{,}}
                  \\
                  \text{undefined}  & \text{if $\bvert\notin\twincambig$\punc{.}}
                \end{cases}   
    \end{align*}%
    \vspace*{-1.5ex}

\begin{exa}\label{ex:twincrystal}
  Each of the \onetransition\ limited \LLEEonecharts\ $\aonechart$ in Fig.~\ref{fig:countex:collapse} 
    and $\aonecharti{2}$ in Fig.~\ref{fig:ex:local:transfer:function}
    consists of a single \scc\ that is carrier of a \twincrystal.
    In the appertaining figures, the parts $\twincpivot$ and $\twinctop$ are colored blue and green, respectively.
\end{exa}

The picture in Fig.~\ref{fig:twincrystal:schema}
  suggests the symmetric nature of \twincrystals:\label{symm:nature:twincrystal}
    if in the underlying \LLEEwitness\
      the \loopentry\ transitions from $\twinctop$ to $\twincpivot$ (maximum level $m$) are relabeled into body transitions,
        and the body transitions from $\twincpivot$ to $\twinctop$ into \loopentry\ transitions (level $m$),
          then a \twincrystal\ with permuted roles of $\pivottwinc$ and $\toptwinc$ arises.
But for the completeness proof only the properties of \twincrystal\ shaped \sccs\ 
  that are formulated by the two lemmas below are crucial.           
  

\begin{lem}
  The counterpart function 
                           on the carrier $\twinc$ of a \twincrystal\ in a \LLEEonechart~$\aonechart$
    is a local transfer function. 
\end{lem}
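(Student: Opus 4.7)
The plan is to verify directly that the graph
$\sabinrel \defdby \graphof{\scpfunon{\twinc}} = \descsetexp{\pair{\bvert}{\cpfunon{\twinc}{\bvert}}}{\bvert \in \twincambig}$
of the counterpart function is a grounded $\sone$-bisimulation slice on the $\sone$-LTS~$\oneLTSof{\aonechart}$ underlying $\aonechart$, as demanded by Def.~\ref{def:local:transfer:function} (and Def.~\ref{def:grounded:onebisim:slice}). First I would observe that by \ref{tc-3} the carrier $\twinc$ contains $\sone$-bisimilarity redundancies, which by \ref{tc-4} are of precrystalline form R3.4 with exactly one representative in $\twincpivot$ and one in $\twinctop$; hence $\cpfunon{\twinc}$ is a well-defined involution on $\twincambig = \twincambigpivot \uplus \twincambigtop$ that interchanges the pivot- and top-parts, so $\sabinrel$ is nonempty and self-inverse, with active domain and active codomain both equal to $\twincambig$. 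In particular every $\pair{\bverti{1}}{\bverti{2}}\in\sabinrel$ satisfies $\bverti{1} \onebisimon{\aonechart} \bverti{2}$, from which the (termination) condition follows at once.

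Next I would treat the grounded forth/back conditions. Given $\pair{\bverti{1}}{\bverti{2}}\in\sabinrel$ and an induced transition $\bverti{1} \ilt{\aact} \bvertacci{1}$ with $\bvertacci{1}\notin\twincambig$, $\sone$-bisimilarity of $\bverti{1}$ and $\bverti{2}$ yields $\bverti{2} \ilt{\aact} \bvertacci{2}$ with $\bvertacci{1} \onebisimon{\aonechart} \bvertacci{2}$, and it suffices to show $\bvertacci{2} = \bvertacci{1}$. Squeezedness \ref{tc-7} implies that $\bvertacci{1}\in\twinc$ iff $\bvertacci{2}\in\twinc$. In the case $\bvertacci{1}\notin\twinc$, both transitions leave $\twinc$ with $\sone$-bisimilar targets, so groundedness \ref{tc-8} of $\twinc$ in $\aonechart$ forces $\bvertacci{1} = \bvertacci{2}$. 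In the remaining case $\bvertacci{1} \in \twinc\setminus\twincambig = \twincunambig$, the vertex $\bvertacci{1}$ is the unique representative of its $\sone$-bisimilarity class in $\twinc$, so again $\bvertacci{2}=\bvertacci{1}$. This establishes (forth)$_\text{\nf g}$; (back)$_\text{\nf g}$ follows by the same argument after swapping $\bverti{1}$ and $\bverti{2}$, using that $\cpfunon{\twinc}$ is involutive.

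Finally I would verify the ordinary (forth)$_\text{\nf s}$ and (back)$_\text{\nf s}$ conditions on targets inside $\twincambig$: given $\bverti{1} \ilt{\aact} \bvertacci{1}$ with $\bvertacci{1}\in\twincambig$, $\sone$-bisimilarity supplies $\bverti{2} \ilt{\aact} \bvertacci{2}$ with $\bvertacci{2}\onebisimon{\aonechart}\bvertacci{1}$, and by \ref{tc-7} together with \ref{tc-4} the bisimilar mate satisfies $\bvertacci{2}\in\setexp{\bvertacci{1},\cpfunon{\twinc}{\bvertacci{1}}}$. What must be shown is that the right choice is made, namely $\bvertacci{2} = \cpfunon{\twinc}{\bvertacci{1}}$, so that $\pair{\bvertacci{1}}{\bvertacci{2}}\in\sabinrel$. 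This is where the structural rigidity of the twin-crystal becomes essential: since $\bverti{1}$ and $\bverti{2}$ sit on opposite sides of the pivot/top split (by the characterization of precrystalline redundancies in \ref{tc-4}), the target $\bvertacci{2}$ reached by $\bverti{2}$'s induced $\aact$-move, traced through the guarded \LLEEwitness~$\aonecharthat$ of $\aonechart$ using the favoring properties \ref{tc-5} and \ref{tc-6} on proper transitions from $\pivottwinc$ and $\toptwinc$, must likewise lie on the opposite side from $\bvertacci{1}$, hence equal $\cpfunon{\twinc}{\bvertacci{1}}$; the back condition is symmetric by involutivity.

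The main obstacle I expect is the last step: the favoring conditions \ref{tc-5}, \ref{tc-6} are stated only for transitions emanating from the distinguished vertices $\pivottwinc$ and $\toptwinc$, yet the forth condition must be verified at every ambiguous vertex in $\twincambig$. Overcoming this requires propagating the favoring along $\sone$-transition prefixes of induced transitions using the layered loop structure supplied by $\aonecharthat$ (the $\sone$-transitions from boundary vertices of $\twincambigpivot$, $\twincambigtop$ back to $\pivottwinc$, $\toptwinc$), so that every induced $\aact$-transition out of an ambiguous vertex can be pulled back to a proper transition out of the corresponding distinguished vertex whose target-membership is then controlled by \ref{tc-5} or \ref{tc-6}. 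The symmetric role of $\twincpivot$ and $\twinctop$ noted on p.~\pageref{symm:nature:twincrystal} ensures the two sides of the argument are interchangeable.
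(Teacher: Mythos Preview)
The paper states this lemma without proof, consistent with its role as an outline of the completeness argument. Your direct verification of the grounded $\sone$-bisimulation slice conditions is the natural route, and your treatment of (termination) and of (forth)$_\text{g}$/(back)$_\text{g}$ via the case split on $\bvertacci{1}\notin\twinc$ (using \ref{tc-7} then \ref{tc-8}) versus $\bvertacci{1}\in\twincunambig$ (using unambiguity within $\twinc$) is sound.

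The gap you flag in (forth)$_\text{s}$/(back)$_\text{s}$ is real, and your proposed resolution does not close it. Pulling induced transitions back along $\sone$-prefixes to $\pivottwinc$ or $\toptwinc$ only works from \emph{boundary} ambiguous vertices: the paper asserts $\sone$-transition paths to $\pivottwinc$ (resp.\ $\toptwinc$) only from $\boundaryof{\twincambigpivot}$ (resp.\ $\boundaryof{\twincambigtop}$), not from interior ambiguous vertices. And even when you do reach a distinguished vertex, \ref{tc-5}/\ref{tc-6} tell you only that a target bisimilar to something in the favored part lies in that part; they do not by themselves force the specific matching target $\bvertacci{2}$ to be $\cpfunon{\twinc}{\bvertacci{1}}$ rather than $\bvertacci{1}$ itself. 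Closing this requires the finer derived structure the paper lists after \ref{tc-8}---that vertices in $\twincambigpivot\setminus\boundaryof{\twincambigpivot}$ have no transitions directly into $\twincunambigpivot$ (and symmetrically for the top part), combined with the precrystalline form \ref{R3.4} of every redundancy from \ref{tc-4}---to track how induced transitions out of $\twincambigpivot$ and $\twincambigtop$ correspond under $\scpfunon{\twinc}$. Your sketch points at the right ingredients but stops short of assembling them.
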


\begin{lem}
  If $\csetverts$ is the carrier of a \twincrystal\ in a \LLEEonechart~$\aonechart$,
    then $\aonechart$ is locally \nearcollapsed\ for $\csetverts$. 
\end{lem}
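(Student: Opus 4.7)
The plan is to exhibit a single local transfer function whose graph, under reflexive--symmetric closure, already captures every \onebisimilarity\ redundancy inside the carrier $\twinc = \csetverts$. The natural candidate is the counterpart function $\scpfunon{\twinc}$ itself, which by the preceding lemma is a local transfer function on $\aonechart$. Setting $n=1$ and $\sphifuni{1} \defdby \scpfunon{\twinc}$, and letting $\sabinrel \defdby \graphof{\scpfunon{\twinc}}$, it remains to verify
\begin{equation*}
  \sonebisimon{\aonechart} \cap (\twinc\times\twinc)
    \,\subseteq\, \sredrsci{\sabinrel} \punc{.}
\end{equation*}

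To prove the above inclusion, I would pick an arbitrary pair $\pair{\bverti{1}}{\bverti{2}}\in\sonebisimon{\aonechart}\cap(\twinc\times\twinc)$ and argue by cases on whether $\bverti{1} = \bverti{2}$ or not. The reflexive case is immediate. In the non-reflexive case, the two vertices form a genuine \onebisimilarity\ redundancy inside $\twinc$. Here the twin-crystal axioms take over: by \ref{tc-3} redundancies exist; by \ref{tc-4} every such reduced redundancy is of \precrystalline\ form~\ref{R3.4}, with one vertex in $\twincpivot$ and the other in $\twinctop$. Hence, after possibly swapping, $\bverti{1}\in\twincambigpivot$ and $\bverti{2}\in\twincambigtop$ (or vice versa), and $\bverti{2}$ is the unique \onebisimilar\ counterpart of $\bverti{1}$ inside $\twinc$ guaranteed by the squeezed-in-$\aonechart$ property~\ref{tc-7} which forbids \onebisimilar\ partners outside $\twinc$. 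By the very definition of $\scpfunon{\twinc}$ this means $\cpfunon{\twinc}{\bverti{1}} = \bverti{2}$, so $\pair{\bverti{1}}{\bverti{2}}\in\graphof{\scpfunon{\twinc}} = \sabinrel$, and the opposite orientation of the pair is captured by the symmetric closure.

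The only delicate point is justifying that the case analysis above is exhaustive -- that is, that \ref{tc-4} really does cover \emph{all} distinct \onebisimilar\ pairs in $\twinc$, not only the ``reduced'' ones. I expect this to be the main obstacle, and the argument I would use is that any \onebisimilar\ pair in $\twinc$ can be reduced to one of \precrystalline\ form without leaving $\twinc$: starting from $\bverti{1}\onebisimon{\aonechart}\bverti{2}$ with both vertices in $\twinc$, the favoring properties \ref{tc-5} and \ref{tc-6} together with \ref{tc-7} (squeezedness) ensure that the reduction procedure underlying \ref{tc-4} stays inside $\twincpivot\cup\twinctop = \twinc$, and \ref{tc-8} (groundedness) guarantees that target-side redundancies that escape $\twinc$ collapse to identity and therefore do not introduce new pairs. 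Consequently the reduced pair is of the form prescribed by \ref{tc-4}, lies in $\graphof{\scpfunon{\twinc}}$, and yields the desired inclusion. This establishes that $\aonechart$ is locally \nearcollapsed\ for $\csetverts$.
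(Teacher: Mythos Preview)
Your approach is correct and is exactly what the paper's setup points to: the lemma is meant to follow immediately from the preceding lemma (that $\scpfunon{\twinc}$ is a local transfer function) together with the fact that the counterpart function, by construction, exhausts all non-trivial \onebisimilarity\ pairs inside~$\twinc$. The paper does not spell out a proof.

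Your ``delicate point'' is well-spotted but over-engineered. You worry that \ref{tc-4} only constrains \emph{reduced} redundancies, and then try to push an arbitrary redundancy down to a reduced one using \ref{tc-5}--\ref{tc-8}. But the paper has already done this work for you: just before introducing $\scpfunon{\twinc}$ it states explicitly that \emph{all} \onebisimilarity\ redundancies in $\twinc$ occur with one vertex in $\twincpivot$ and the other in $\twinctop$ (not just the reduced ones). This derived property is precisely what makes the counterpart function well-defined as a \emph{function} (uniqueness of the counterpart follows: two distinct \onebisimilar\ vertices in the same part $\twincpivot$ or $\twinctop$ would contradict the split), and it simultaneously guarantees that $\graphof{\scpfunon{\twinc}}$ covers every non-reflexive pair in $\sonebisimon{\aonechart}\cap(\twinc\times\twinc)$. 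Since $\scpfunon{\twinc}$ is moreover self-inverse, you do not even need the symmetric closure for the non-trivial pairs. So you can replace your last paragraph by a one-line appeal to that stated property and the definition of~$\scpfunon{\twinc}$.
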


%
%
\section{Crystallization of LLEE-\protect\onecharts}%
  \label{crystallization}

For this central part of the proof
  we sketch how every weakly guarded \LLEEonechart\ can be minimized under \onebisimilarity\
    far enough to obtain a \LLEEonechart\ in  `crystallized' form.
By that we mean that the resulting \onebisimilar\ \LLEEonechart\ is \onecollapsed\ apart from that some of its \sccs\ may be \twincrystals. 
A `groundedness' clause in the definition will ensure
  that every crystallized \onechart\ is also \nearcollapsed. 
      
The minimization process we describe here is a refinement of the process for \LLEEcharts\ (without \onetransitions) 
  that was defined in \cite{grab:fokk:2020:lics,grab:fokk:2020:lics:arxiv}.
We first find that if a \LLEEonechart~$\aonechart$ is not \onecollapsed,
  then $\aonechart$ contains a `\onebisimilarity\ redundancy' from one of three `reduced kinds' (with subkinds).
  This is stated below by Lem.~\ref{lem:reducing:1brs}, which is a generalization of Prop.~6.4 in \cite{grab:fokk:2020:lics,grab:fokk:2020:lics:arxiv}.
Second, we argue that every reduced \onebisimilarity\ redundancy can be eliminated \LLEEpreservingly\
  except if it belongs to a subkind which can be found in the not \onecollapsible\ \LLEEonechart\ in Fig.~\ref{fig:countex:collapse}.
Third, we define a few further transformations for cutting \sccs\ into \twincrystals.

\begin{lem}[kinds of reduced \protect\onebisimilarity\ redundancies]\label{lem:reducing:1brs}\vspace*{-2pt}
  Let $\aonechart$ 
    be a 
         \onechart, 
  and let $\aonecharthat$ a \onetransition\ limited \LLEEwitness\ of~$\aonechart$.
  Suppose that $\aonechart$ is not a bisimulation collapse.
  
  Then $\aonechart$ contains a \onebisimilarity\ redundancy $\pair{\bverti{1}}{\bverti{2}}$  
  (distinct \onebisimilar\ vertices $\bverti{1}$ and $\bverti{2}$ in $\aonechart$)\vspace*{-1.5pt}
  that satisfies, with respect to\/ $\aonecharthat$, one of the position conditions (kinds)
    \ref{R1} (with subkinds \ref{R1.1}, \ref{R1.2})   
    \ref{R2}, or
    \ref{R3} (with subkinds \ref{R3.1}, \ref{R3.2}, \ref{R3.3}, \ref{R3.4}) 
  that are illustrated in Fig.~\ref{fig:ill:pos:R123}.%
  \input{figs/fig-ill-pos-R123.tex}%
  Note that the vertices $\bverti{1}$ and $\bverti{2}$ are in the same \scc\ for position kinds \ref{R2} and \ref{R3},
    but in different \sccs\ for position kind \ref{R1}. 
\end{lem}

\begin{defi}
  We consider a \onebisimilarity\ redundancy\vspace*{-1.5pt} $\pair{\bverti{1}}{\bverti{2}}$ in $\aonechart$
    under the assumptions of Lem.~\ref{lem:reducing:1brs} on $\aonecharthat$~and~$\aonechart$. 
  
  We say that $\pair{\bverti{1}}{\bverti{2}}$ is \emph{reduced (with respect to $\aonecharthat$)} 
    if it is of one of the kinds  \ref{R1}--\ref{R3} in Fig.~\ref{fig:ill:pos:R123}. 
    %
  We say that $\pair{\bverti{1}}{\bverti{2}}$ is \emph{simple}
    if it is of (sub-)kind \ref{R1}, \ref{R2}, \ref{R3.1}, \ref{R3.2}, or \ref{R3.3} in Fig.~\ref{fig:ill:pos:R123}.
  We say that $\pair{\bverti{1}}{\bverti{2}}$ is \emph{\precrystalline} 
    if it is of subkind \ref{R3.4} in Fig.~\ref{fig:ill:pos:R123}.
  We say that $\pair{\bverti{1}}{\bverti{2}}$ is \emph{crystalline} 
    if it is \precrystalline, but neither of subkind \ref{R3.4.1}~nor~\ref{R3.4.2} in Fig.~\ref{fig:ill:pos:R3:41:42}.%
      \input{figs/fig-ill-pos-R3-41-42.tex}%
\end{defi} 

\begin{exa}
  In the \LLEEonecharts~$\aonechart$ in Fig.~\ref{fig:countex:collapse}, and $\aonecharti{s}$ in Fig.~\ref{fig:ex:local:transfer:function},
  the pair $\pair{\tightfbox{$abcd_1$}}{\tightfbox{$abcd_2$}}$ of vertices forms a crystalline reduced \onebisimilarity\ redundancy,
  since it is of kind \ref{R3.4}, hence \precrystalline, but not of kind \ref{R3.4.1} nor \ref{R3.4.2}.
\end{exa}
    
From a \LLEEonechart, every reduced \onebisimilarity\ redundancy that is not crystalline
  can be \emph{eliminated \LLEEpreservingly} by which we mean that the result is a \onebisimilar\ \LLEEonechart.  
The transformations needed are adaptations of the \connectthrough{\bverti{1}}{\bverti{2}} operation 
  from \cite{grab:fokk:2020:lics,grab:fokk:2020:lics:arxiv} 
    in which the incoming transitions at vertex~$\bverti{1}$
      are redirected to a \onebisimilar\ vertex~$\bverti{2}$.
Here this operation typically requires an unraveling step in which 
  loop levels above $\bverti{1}$ that are reachable by \onetransitions\ 
    are removed by similar transition redirections. 
In an example
  we illustrate the elimination of a reduced \onebisimilarity\ redundancy of kind \ref{R3.2} in Fig.~\ref{fig:ex:elim:R3.2}.%
\begin{figure}[t!]
\begin{center}
\begin{tikzpicture}
  
  \matrix[anchor=center,row sep=0.55cm,column sep=0.425cm,every node/.style={draw,very thick,circle,minimum width=2.5pt,fill,inner sep=0pt,outer sep=2pt}] at (0,0) {
       & \node[chocolate](v--1){}; & &[0.1cm] & & \node[chocolate](v--2){}; & &[0.05cm] & & \node[chocolate](v--3){}; & &[-0.15cm] & & \node[chocolate](v--4){};  
    \\[0.1cm]
    \node(w1-bar--1){}; & & \node(w2-bar--1){}; & & \node(w1-bar--2){}; & & \node(w2-bar--2){}; & &  
    \node(w1-bar--3){}; & & \node(w2-bar--3){}; & & \node(w1-bar--4){}; & & \node(w2-bar--4){};  
    \\[0.15cm]  
    \node(w1--1){}; & & \node(w2--1){}; & & \node(w1--2){}; & & \node(w2--2){}; & &  
    \node(w1--3){}; & & \node(w2--3){}; & &                 & & \node(w2--4){}; 
    \\
    };
  \draw[thick,chocolate] (v--1) circle (0.12cm);
  \draw[thick,chocolate] (v--2) circle (0.12cm);
  \draw[thick,chocolate] (v--3) circle (0.12cm);
  \draw[thick,chocolate] (v--4) circle (0.12cm);
  \draw[<-,very thick,>=latex,chocolate,shorten <=1.5pt](v--1) -- ++ (90:0.525cm);
  \draw[<-,very thick,>=latex,chocolate,shorten <=1.5pt](v--2) -- ++ (90:0.525cm);
  \draw[<-,very thick,>=latex,chocolate,shorten <=1.5pt](v--3) -- ++ (90:0.525cm);
  \draw[<-,very thick,>=latex,chocolate,shorten <=1.5pt](v--4) -- ++ (90:0.525cm);
  %
  %
  %
  \path (v--1) ++ (0.2cm,0.2cm) node{$\scriptstyle\avert$};
  \path (w1-bar--1) ++ (-0.25cm,0.05cm) node{$\scriptstyle\bvertbari{1}$};
  \path (w1--1) ++ (-0.05cm,-0.2cm) node{$\scriptstyle\bverti{1}$};
  \path (w2-bar--1) ++ (0.33cm,0.33cm) node{$\scriptstyle\bvertbari{2}$};
  \path (w2--1) ++ (0.1cm,-0.2cm) node{$\scriptstyle\bverti{2}$};
  \draw[->,thick,royalblue,shorten <=2pt] (v--1) to (w1-bar--1);
  \draw[->,thick,royalblue,shorten <=2pt,shorten >= 4pt] (v--1) to (w1--1);
  \draw[->,thick,royalblue,shorten <=2pt] (v--1) to (w2-bar--1);
  \draw[->,thick,royalblue,shorten <=2pt,shorten >= 4pt] (v--1) to (w2--1);
  \draw[->,thick,forestgreen] (w1-bar--1) to (w1--1);
  \draw[->] (w1-bar--1) to (w2-bar--1);
  \draw[->,out=120,in=180,shorten >= 2pt] (w1-bar--1) to (v--1);
  \draw[->,thick,densely dotted,out=180,in=230,distance=0.4cm] (w1--1) to (w1-bar--1);
  \draw[->,thick,forestgreen] (w2-bar--1) to (w2--1);
  \draw[->,out=60,in=0,shorten >= 2pt] (w2-bar--1) to (v--1);
  \draw[->,thick  ,forestgreen,out=-30,in=30,distance=0.6cm] (w2-bar--1) to (w2-bar--1);
  \draw[->,thick,densely dotted,out=0,in=-60,shorten >=1pt,distance=0.4cm] (w2--1) to (w2-bar--1);
  \draw[-,very thick,bend right,magenta,densely dashed,distance=] (w1--1) to (w2--1);

  %
  \path ($(w1-bar--1)!0.5!(w2--1)$) ++ (0cm,-0.3cm) node{\small \ref{R3.2}};

  %
  \path (v--2) ++ (0.2cm,0.2cm) node{$\scriptstyle\avert$};
  \path (w1-bar--2) ++ (-0.25cm,0.05cm) node{$\scriptstyle\bvertbari{1}$};
  \path (w1--2) ++ (0cm,-0.2cm) node{$\scriptstyle\bverti{1}$};
  \path (w2-bar--2) ++ (0.33cm,0.33cm) node{$\scriptstyle\bvertbari{2}$};
  \path (w2--2) ++ (0.1cm,-0.2cm) node{$\scriptstyle\bverti{2}$};
  \draw[->,thick,royalblue,shorten <=2pt] (v--2) to (w1-bar--2);
  \draw[->,thick,royalblue,shorten <=2pt] (v--2) to (w1--2);
  \draw[->,thick,royalblue,shorten <=2pt] (v--2) to (w2-bar--2);
  \draw[->,thick,royalblue,shorten <=2pt,shorten >= 4pt] (v--2) to (w2--2);
  \draw[->] (w1-bar--2) to (w2--2);
  \draw[->] (w1-bar--2) to (w2-bar--2);
  \draw[->,out=120,in=180,shorten >= 2pt] (w1-bar--2) to (v--2);
  \draw[->,thick,densely dotted,out=180,in=230,distance=0.4cm] (w1--2) to (w1-bar--2);
  \draw[->,thick,forestgreen] (w2-bar--2) to (w2--2);
  \draw[->,out=60,in=0,shorten >= 2pt] (w2-bar--2) to (v--2);
  \draw[->,thick,forestgreen,out=-30,in=30,distance=0.6cm] (w2-bar--2) to (w2-bar--2);
  \draw[->,thick,densely dotted,out=0,in=-60,shorten >= 1pt,distance=0.4cm] (w2--2) to (w2-bar--2);
  \draw[-,very thick,bend right,magenta,densely dashed,distance=] (w1--2) to (w2--2);

\draw[-implies,thick,double equal sign distance, bend left,distance=0.9cm,
               shorten <= 0.5cm,shorten >= 0.4cm,yshift=0.05cm
               ] (v--1) to node[above]{\small $\fap{\textsf{unravel}}{\bverti{1}}$} (v--2);

  %
  \path (v--3) ++ (0.2cm,0.2cm) node{$\scriptstyle\avert$};
  \path (w1-bar--3) ++ (-0.2cm,-0.05cm) node{$\scriptstyle\bvertbari{1}$};
  \path (w1--3) ++ (-0.05cm,-0.2cm) node{$\scriptstyle\bverti{1}$};
  \path (w2-bar--3) ++ (0.33cm,0.33cm) node{$\scriptstyle\bvertbari{2}$};
  \path (w2--3) ++ (0.1cm,-0.2cm) node{$\scriptstyle\bverti{2}$};
  \draw[->,thick,royalblue,shorten <=2pt] (v--3) to (w1-bar--3);
  \draw[->,thick,royalblue,shorten <=2pt] (v--3) to (w1--3);
  \draw[->,thick,royalblue,shorten <=2pt] (v--3) to (w2-bar--3);
  \draw[->,thick,royalblue,shorten <=2pt,shorten >= 4pt] (v--3) to (w2--3);
  \draw[->] (w1-bar--3) to (w2--3);
  \draw[->] (w1-bar--3) to (w2-bar--3);
  \draw[->,out=120,in=180,shorten >= 2pt,distance=0.45cm] (w1-bar--3) to (v--3);
  \draw[->] (w1--3) to (w1-bar--3);
  \draw[->] (w1--3) to (w2-bar--3);
  \draw[->] (w1--3) to (w2--3);
  \draw[-,out=140,in=180,shorten >= 2pt,distance=0.85cm] (w1--3) to (v--3);
  \draw[->,thick,forestgreen] (w2-bar--3) to (w2--3);
  \draw[->,out=60,in=0,shorten >= 2pt] (w2-bar--3) to (v--3);
  \draw[->,thick,forestgreen,out=-30,in=30,distance=0.6cm] (w2-bar--3) to (w2-bar--3);
  \draw[->,thick,densely dotted,out=0,in=-60,shorten >= 1pt,distance=0.4cm] (w2--3) to (w2-bar--3);
  \draw[-,very thick,bend right,magenta,densely dashed,distance=] (w1--3) to (w2--3);

\draw[-implies,thick,double equal sign distance, bend left,distance=0.9cm,
               shorten <= 0.5cm,shorten >= 0.4cm
               ] (v--2) to node[below,yshift=-0.05cm]{\small $\textsf{restored}$}
                           node[above,yshift=0.05cm]{\small \LLEEonelim} (v--3);

  %
  %
  \path (v--4) ++ (0.2cm,0.2cm) node{$\scriptstyle\avert$};
  \path (w1-bar--4) ++ (0.05cm,-0.225cm) node{$\scriptstyle\bvertbari{1}$};
  \path (w2-bar--4) ++ (0.33cm,0.33cm) node{$\scriptstyle\bvertbari{2}$};
  \path (w2--4) ++ (0.1cm,-0.2cm) node{$\scriptstyle\bverti{2}$};
  \draw[->,thick,royalblue,shorten <=2pt] (v--4) to (w1-bar--4);
  \draw[->,thick,royalblue,shorten <=2pt] (v--4) to (w2-bar--4);
  \draw[->,thick,royalblue,shorten <=2pt,shorten >= 4pt] (v--4) to (w2--4);
  \draw[->] (w1-bar--4) to (w2-bar--4);
  \draw[->,out=120,in=180,shorten >= 2pt] (w1-bar--4) to (v--4);
  \draw[->,thick,forestgreen] (w2-bar--4) to (w2--4);
  \draw[->,out=60,in=0,shorten >= 2pt] (w2-bar--4) to (v--4);
  \draw[->,thick,forestgreen,out=-30,in=30,distance=0.6cm] (w2-bar--4) to (w2-bar--4);
  \draw[->,thick,densely dotted,out=0,in=-60,shorten >= 1pt,distance=0.4cm] (w2--4) to (w2-bar--4);

\draw[-implies,thick,double equal sign distance, bend left,distance=0.8cm,
               shorten <= 0.5cm,shorten >= 0.4cm
               ] (v--3) to node[above,yshift=0.025cm]{\small $\textsf{conn-through}$}
                           node[below]{\small $\bverti{1} \textsf{ to } \bverti{2}$} (v--4);

\end{tikzpicture}
\end{center}  
  \vspace*{-2ex}
  \caption{\label{fig:ex:elim:R3.2}%
           Example for the \protect\LLEEpreserving\ elimination of a red.\ $\sone$\nb-bisim.\ red.\ of kind \protect\ref{R3.2}
             by redirecting transitions to \protect\onebisimilar\ targets. Here all proper action labels~are~the~same.
           }

\end{figure}%

\begin{lem}\label{lem:elim:all:but:cryst:1brs}
  Every reduced \onebisimilarity\ redundancy $\pair{\bverti{1}}{\bverti{2}}$  
    can be eliminated \LLEEpreservingly\ from a \onetransition\ limited \LLEEonechart\
      provided it is of either of the following kinds:
  \begin{enumerate}[label={(\roman{*})},align=right,leftmargin=*,itemsep=0.25ex]\vspace*{-0.1ex}
    \item
      $\pair{\bverti{1}}{\bverti{2}}$ is simple, or
    \item 
      $\pair{\bverti{1}}{\bverti{2}}$ is \precrystalline, but not crystalline. 
  \end{enumerate}
\end{lem}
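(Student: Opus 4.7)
The plan is to proceed by case analysis on the kind of the reduced $\sone$-bisimilarity redundancy $\pair{\bverti{1}}{\bverti{2}}$, using as the main instrument a variant for \onecharts\ of the \txtconnectthrough\ operation from \cite{grab:fokk:2020:lics,grab:fokk:2020:lics:arxiv}, in which all incoming transitions at $\bverti{1}$ (and, when applicable, at intermediate vertices on paths above $\bverti{1}$) are redirected to $\bverti{2}$ (respectively, to \onebisimilar\ counterparts), after which unreachable material is discarded. The overall goal in each case is to exhibit a sequence of steps that transforms the given \onechart~$\aonechart$, together with its \onetransition\ limited \LLEEwitness~$\aonecharthat$, into a \onebisimilar\ \onetransition\ limited \LLEE-witnessed~$\aonechart'$ in which the pair $\pair{\bverti{1}}{\bverti{2}}$ has been removed, and then to verify that the transformations preserve $\sone$-bisimilarity with $\aonechart$, \LLEE\ (by rebuilding a witness), and the \onetransition\ limitedness property.

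For the simple kinds \ref{R1.1}, \ref{R1.2}, \ref{R2}, and \ref{R3.1} I would first address the `easy' subcases where $\bverti{1}$ and $\bverti{2}$ reside in different \sccs\ (\ref{R1}) or where $\bverti{1}$ sits below $\bverti{2}$ in the loop nesting of $\aonecharthat$ (\ref{R2}, \ref{R3.1}), and where the forbidden back-path from $\bverti{2}$ to $\bverti{1}$ marked by the struck-out arrow in Fig.~\ref{fig:ill:pos:R123} guarantees that redirecting incoming transitions at $\bverti{1}$ (including its loop-entry incomings) to $\bverti{2}$ does not create new loops that would conflict with the existing loop nesting. The resulting labeling is then still a \LLEEwitness, and \onetransition\ limitedness is preserved because no \onetransition\ is newly introduced as a body transition outside a backlink position.

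For the harder simple subcases \ref{R3.2} and \ref{R3.3}, where $\bverti{1}$ sits inside a loop whose entry point $\bvertbari{1}$ is itself a loop-body vertex above, I would prepend an \emph{unraveling} step that, starting from $\bvertbari{1}$, duplicates the loop-\subonechart\ containing $\bverti{1}$ to lift $\bverti{1}$ out from under its entry, and then restores \LLEEonelim\ by re-assigning loop-entry markings; only after this preparation is \txtconnectthrough\ applied. This is the schema already illustrated in Fig.~\ref{fig:ex:elim:R3.2}, and the bookkeeping is the principal technical load: I would verify termination of the unraveling (by induction on loop-nesting depth in $\aonecharthat$), \onebisimilarity\ of each intermediate stage (by exhibiting an explicit functional \onebisimulation\ between old and new \onecharts), and that the final witness remains \onetransition\ limited.

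The main obstacle, and where I would spend the bulk of the proof, lies with the \precrystalline\ but not crystalline kinds \ref{R3.4.1} and \ref{R3.4.2}. These have the full \precrystalline\ shape of \ref{R3.4}, which is exactly the `stuck' configuration witnessed by Fig.~\ref{fig:countex:collapse} where no \LLEEpreserving\ reduction is available; the task is therefore to pinpoint how the extra features of \ref{R3.4.1} (an extra self-action or second loop-entry at the common ancestor $\avert$) and of \ref{R3.4.2} (a twinned pair $\pair{\cverti{1}}{\cverti{2}}$ on the other side of $\avert$, connected through $\avert$) provide the asymmetry needed to unblock the reduction. Concretely, I would show that in \ref{R3.4.1} the additional outgoing structure at $\avert$ can be used to redirect one of the two loop-entries from $\avert$ so that after \txtconnectthrough\ the remaining \onechart\ is still \LLEE-witnessed; and that in \ref{R3.4.2} the presence of the sibling pair permits reassigning the roles of pivot and top across the two halves of $\aonechart$, thereby enabling a connect-through of $\bverti{1}$ to $\bverti{2}$ whose effect on the would-be \twincrystal\ shape is compensated by a symmetric step on the $\cverti{i}$ side. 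In both sub-cases, the crucial verification is that the \LLEEwitness\ can be rebuilt (with possibly decreased maximal loop-nesting level), which I would prove by constructing the new marking explicitly and then appealing to Lem.~\ref{lem:LLEEonechart:2:LLEEonelimonechart} to recover \onetransition\ limitedness.
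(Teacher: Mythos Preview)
The paper does not supply a proof of this lemma; it is one of the technical results whose details are only ``hinted at'' (the article is explicitly a conceptual report). The only evidence the paper offers is the description preceding the lemma together with the illustration in Fig.~\ref{fig:ex:elim:R3.2} for kind~\ref{R3.2}, and the later Fig.~\ref{fig:ex:pars:insulated:exp} which, though placed in the discussion of parsimonious insulation, does show an \ref{R3.4.1} redundancy being completely eliminated. So there is no full argument to compare against, only a sketched methodology.

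Your overall plan (case analysis by kind, with \txtconnectthrough\ as the workhorse and a preparatory step when \onetransitions\ above $\bverti{1}$ obstruct the redirection) is exactly the methodology the paper signals. Two points, however, deserve correction.

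First, your description of the unraveling step as one that ``duplicates the loop-\subonechart\ containing $\bverti{1}$'' mischaracterises what the paper does. The paper is explicit: the unraveling step \emph{redirects} transitions to \onebisimilar\ targets so as to remove loop levels above $\bverti{1}$ that are reachable by \onetransitions; no vertex duplication occurs. In Fig.~\ref{fig:ex:elim:R3.2} the step $\fap{\textsf{unravel}}{\bverti{1}}$ redirects the loop-entry $\bvertbari{1}\to\bverti{1}$ to $\bvertbari{1}\to\bverti{2}$ (using $\bverti{1}\onebisim\bverti{2}$), and the subsequent ``\LLEEonelim\ restored'' step inlines the \onetransition\ $\bverti{1}\to\bvertbari{1}$ by proper transitions. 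If you implement unraveling as genuine graph duplication you will not obtain the size decrease you need for termination of the overall crystallization procedure.

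Second, your treatment of \ref{R3.4.2} invokes ``reassigning the roles of pivot and top'' and a ``would-be \twincrystal\ shape''. This conflates the crystalline case (where a \twincrystal\ with pivot/top structure \emph{is} present and cannot be reduced) with \ref{R3.4.2}, which is by definition \emph{not} crystalline. The extra feature in \ref{R3.4.2} is a second \precrystalline\ pair $\pair{\cverti{1}}{\cverti{2}}$ under the \emph{same} joining vertex $\avert$, together with the absence of a body path $\cvertbari{1}\to\bvertbari{1}$; the reduction should exploit this structural asymmetry between the $\bvert$-side and the $\cvert$-side directly (for instance to reorganise the \txtloopsbackto\ structure at $\avert$), not a pivot/top symmetry that is not yet there. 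Your \ref{R3.4.1} sketch is closer to the mark, and Fig.~\ref{fig:ex:pars:insulated:exp} confirms that redirecting transitions at $\avert$ to \onebisimilar\ targets inside the part below $\avert$ is the right lever.
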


\begin{center}
  \begin{figure}[t!]
\begin{center}
\begin{tikzpicture}
   \matrix[anchor=center,row sep=0.55cm,column sep=0.55cm,every node/.style={draw,very thick,circle,minimum width=2.5pt,fill,inner sep=0pt,outer sep=2pt}] at (0,0) {
       & \node(u--1){};  & &[-0.15cm]   & & \node(u--2){}; &  &[-0.35cm] & & \node(u--3){}; &       &[-0.15cm] & \node(u--4){}; 
    \\[0.1cm]
       & \node(v--1){};  & &            & & \node(v--2){}; &         &   & & \node(v--3){}; &       & & \node(v--4){}; 
    \\[0.1cm]
    \node(w1bar--1){};  & & \node(w2bar--1){}; & & \node(w1bar--2){};   & & \node(w2bar--2){}; & &  \node[draw=none,fill=none](w1bar--3){}; & & \node(w2bar--3){}; 
    \\[0.2cm]  
    \node(w1--1){};      & & \node(w2--1){};     & & \node(w1--2){};       & & \node(w2--2){};     & &   \node[draw=none,fill=none](w1--3){};   & & \node(w2--3){}; & & \node(u--5){};
    \\
    };
    \draw[<-,very thick,>=latex,chocolate,shorten <=0.5pt](u--1) -- ++ (90:0.5cm);
    \draw[<-,very thick,>=latex,chocolate,shorten <=0.5pt](u--2) -- ++ (90:0.5cm);
    \draw[<-,very thick,>=latex,chocolate,shorten <=0.5pt](u--3) -- ++ (90:0.5cm);
    \draw[<-,very thick,>=latex,chocolate,shorten <=0.5pt](u--4) -- ++ (90:0.5cm);
    \draw[<-,very thick,>=latex,chocolate,shorten <=0.5pt](u--5) -- ++ (0:0.5cm);

%
%
\path (u--1) ++ (-0.25cm,0cm) node{$\scriptstyle\cvert$};
\path (v--1) ++ (-0.225cm,0.05cm) node{$\scriptstyle\avert$};
\path (w1bar--1) ++ (-0.25cm,0.1cm) node{$\scriptstyle\bvertbari{1}$};
\path (w1--1) ++ (-0.05cm,-0.2cm) node{$\scriptstyle\bverti{1}$};
\path (w2bar--1) ++ (0.3cm,0.1cm) node{$\scriptstyle\bvertbari{2}$};
\path (w2--1) ++ (0.1cm,-0.2cm) node{$\scriptstyle\bverti{2}$};

\draw[->,thick,cyan] (u--1) to (v--1);
\draw[->,thick,royalblue] (v--1) to (w1bar--1);
\draw[->,out=45,in=-30,distance=0.3cm] (v--1) to (u--1);
\draw[->,thick,forestgreen] (w1bar--1) to (w1--1);
\draw[->] (w1bar--1) to (w2bar--1);
\draw[->,thick,densely dotted,out=180,in=230,distance=0.4cm] (w1--1) to (w1bar--1);
\draw[->,thick,forestgreen] (w2bar--1) to (w2--1);
\draw[->,thick,densely dotted,out=90,in=0,distance=0.5cm] (w2bar--1) to (v--1);
\draw[->,thick,densely dotted,out=0,in=-50,distance=0.4cm] (w2--1) to (w2bar--1);
\draw[-,very thick,bend right,magenta,densely dashed,distance=] (w1--1) to (w2--1); 
\draw[-,very thick,out=15,in=75,magenta,densely dashed,distance=0.75cm] (u--1) to (w2bar--1);

\path ($(w1bar--1)!0.5!(w2--1)$) ++ (0cm,-0.3cm) node{\small \ref{R3.4.1}};

%
%
\path (u--2) ++ (-0.25cm,0cm) node{$\scriptstyle\cvert$};
\path (v--2) ++ (-0.225cm,0.1cm) node{$\scriptstyle\avert$};
\path (w1bar--2) ++ (-0.25cm,0.1cm) node{$\scriptstyle\bvertbari{1}$};
\path (w1--2) ++ (-0.05cm,-0.2cm) node{$\scriptstyle\bverti{1}$};
\path (w2bar--2) ++ (0.3cm,0.1cm) node{$\scriptstyle\bvertbari{2}$};
\path (w2--2) ++ (0.1cm,-0.2cm) node{$\scriptstyle\bverti{2}$};

\draw[->] (u--2) to (v--2);
\draw[->,thick,royalblue] (v--2) to (w1bar--2);
\draw[->,thick,royalblue] (v--2) to (w2bar--2);
\draw[->,thick,forestgreen] (w1bar--2) to (w1--2);
\draw[->] (w1bar--2) to (w2bar--2);
\draw[->,thick,densely dotted,out=180,in=230,distance=0.4cm] (w1--2) to (w1bar--2);
\draw[->,thick,forestgreen] (w2bar--2) to (w2--2);
\draw[->,thick,densely dotted,out=90,in=0,distance=0.5cm] (w2bar--2) to (v--2);
\draw[->,thick,densely dotted,out=0,in=-50,distance=0.4cm] (w2--2) to (w2bar--2);
\draw[-,very thick,bend right,magenta,densely dashed,distance=] (w1--2) to (w2--2); 
\draw[-,very thick,out=-15,in=130,magenta,densely dashed,distance=0.4cm] (w1bar--2) to (w2--2);

\path ($(w1bar--2)!0.5!(w2--2)$) ++ (0cm,-0.3cm) node{\small \ref{R3.4.1}};

\draw[-implies,thick,double equal sign distance, bend left,distance=0.9cm,
               shorten <= 0.5cm,shorten >= 0.4cm
               ] (u--1) to node[above]{\small $\fap{\textsf{insulate}}{\avert}$} (u--2);

%
%
\path (u--3) ++ (0.2cm,-0.15cm) node{$\scriptstyle\cvert$};
\path (v--3) ++ (-0.225cm,0.05cm) node{$\scriptstyle\avert$};
\path (w2bar--3) ++ (0.3cm,0.1cm) node{$\scriptstyle\bvertbari{2}$};
\path (w2--3) ++ (0.1cm,-0.2cm) node{$\scriptstyle\bverti{2}$};

\draw[->] (u--3) to (v--3);
\draw[->,thick,royalblue] (v--3) to (w2bar--3);
\draw[->,thick,royalblue,shorten >=5pt] (v--3) to (w2--3);
\draw[->,thick,forestgreen] (w2bar--3) to (w2--3);
\draw[->,thick,densely dotted,out=90,in=0,distance=0.5cm] (w2bar--3) to (v--3);
\draw[->,thick,densely dotted,out=0,in=-50,distance=0.4cm] (w2--3) to (w2bar--3);

\path ($(w1bar--3)!0.5!(w2--3)$) ++ (0cm,0cm) node{\small \ref{R2}};    

\draw[-,very thick,densely dashed,magenta,out=180,in=270,distance=0.6cm]
     (w2bar--3) to (v--3);
\draw[-,very thick,densely dashed,magenta,out=180,in=225,distance=1.25cm]
     (w2--3) to  (v--3);

\draw[-implies,thick,double equal sign distance, bend left,distance=0.9cm,
               shorten <= 0.5cm,shorten >= 0.4cm
               ] (u--2) to node[above]{\small $\fap{\textsf{make}}{\avert}$} node[below]{\small $\textsf{parsimonious}$} (u--3);     
  
%
%
\path (u--4) ++ (0.2cm,0cm) node{$\scriptstyle\cvert$};
\path (v--4) ++ (0.225cm,0.15cm) node{$\scriptstyle\avert$};
  
\draw[->] (u--4) to (v--4);
\draw[->,thick,royalblue,out=-45,in=225,distance=0.9cm] (v--4) to (v--4);

\path ($(u--4)!0.5!(v--4)$) ++ (-0.5cm,0cm) node{\small \ref{R1.1}}; 
  
\draw[-,very thick,densely dashed,magenta,out=180,in=180,distance=1.25cm]
     (v--4) to  (u--4);      

\draw[-implies,thick,double equal sign distance, bend left,distance=0.75cm,
               shorten <= 0.25cm,shorten >= 0.25cm
               ] (u--3) to node[above]{\small ${\textsf{elim}}$} (u--4);

%
%
\path (u--5) ++ (0cm,-0.25cm) node{$\scriptstyle\cvert$};
\draw[->,thick,royalblue,out=45,in=135,distance=0.9cm] (u--5) to (u--5);

\draw[-implies,thick,double equal sign distance,out=-45,in=70,distance=1cm,
               shorten <= 0.2cm,shorten >= 0.6cm
               ] (u--4) to node[left,pos=0.72]{\small ${\textsf{elim}}$} (u--5);  
  
\end{tikzpicture}
\end{center}  
  \vspace*{-2ex}
  \caption{\label{fig:ex:pars:insulated:exp}%
           \protect\LLEEPreserving\ parsimonious insulation from above of a red.\ $\protect\sone$-bisimilarity redundancy \protect\ref{R3.4.1}
             that here leads to its elimination, and permits further minimization.}%

\end{figure}  %
\end{center}
In order to cut \twincrystals\ in \scc's we also need to safeguard
  that the joining loop vertices of crystalline reduced \onebisimilarity\ redundancies
    are `parsimoniously insulated' from above. 
The top vertices $\avert$ of \onebisimilarity\ redundancies of kind \ref{R3.3} and \ref{R3.4} 
  are, due to the occurring \onetransitions, substates of $\bverti{1}$ and $\bverti{2}$. 
Therefore any such vertex $\avert$ can be \emph{insulated from above}, that is, turned into a \maximalwrt{\sloopsbackto} vertex,
  by redirecting all induced transitions from $\avert$ into the \txtloopsbackto\ part of $\avert$ or below.   
In the example in Fig.~\ref{fig:ex:pars:insulated:exp}, where all proper transitions have the same action label, in an insulation step (first step)
  the transition from $\avert$ to $\cvert$ is redirected to the \onebisimilar\ target $\bvertbari{2}$.
In the arising \onechart\ $\avert$ is not yet \emph{parsimonious},
  because the \loopentrytransition\ from $\avert$ to $\bvertbari{1}$ 
    can be redirected to \onebisimilar\ target $\bverti{2}$ (second step),
      thereby making less use of the \txtloopsbackto\ part of $\avert$, and eliminating it
        (and permitting further minimization).

We now define `crystallized' (\LLEE-$\sone$-)charts as follows.

\begin{defi}[crystallized \onechart\ (\ref{contrib:concept:4})]\label{def:crystallized}
                 \label{def:crystallized:onechart}
  Let $\aonechart$ 
    be a \onechart.
  
  We say that $\aonechart$ \emph{is crystallized}
  if there is a \LLEEwitness~$\aonecharthat$ of $\aonechart$
  such that the following four conditions hold:
  \begin{enumerate}[label={(cr-\arabic*)},leftmargin=*,align=right,itemindent=0ex,itemsep=0ex]
    \item{}\label{cr-1}
      $\aonechart$ is a (finite) \onechart\ with \LLEEonelim,
      and specifically,
      $\aonechart$ is \onetransition\ limited with respect to $\aonecharthat$.
    \item{}\label{cr-2}
      Every $\aonecharthat$-reduced $\sone$-bisim. redundancy is crystalline. 
    \item{}\label{cr-3}
      Every crystalline \reducedwrt{\aonecharthat} \onebisimilarity\ redundancy in $\aonechart$
        is parsimoniously insulated from above. 
    \item{}\label{cr-4}
      Every carrier of an \scc\ in $\aonechart$ is grounded in $\aonechart$.     
  \end{enumerate}      
  Then we also say that $\aonechart$ is \emph{crystallized with respect to} $\aonecharthat$.  
\end{defi}


The lemma below gathers properties of crystallized \onecharts.
The subsequent proposition justifies the term `crystallized \onechart'
  by explaining the connection with \twincrystals.

\begin{lem}
  Every \onechart~$\aonechart$ that is crystallized with respect to a \LLEEwitness~$\aonecharthat$
    with \txtloopsbackto\ relation $\sloopsbackto$,
  satisfies:
  \begin{enumerate}[label={(\roman{*})},leftmargin=*,align=right,itemindent=0ex,itemsep=0.25ex]
    \item{}\label{it:1:lem:props:crystallized:onecharts}
      $\aonechart$ is \onecollapsed\ apart from within \sccs,
        i.e.\ \onecollapsed\ for \txtloopsbackto\ parts of \maximalwrt{\aonecharthat} loop vertices of $\aonecharthat$.
    \item{}\label{it:2:lem:props:crystallized:onecharts}
      $\aonechart$ is \onecollapsed\ for every \txtloopsbackto\ part 
      of a loop vertex of $\aonecharthat$ that is {not} \maximalwrt{\sloopsbackto}.
  \end{enumerate}\label{lem:props:crystallized:onecharts}
\end{lem}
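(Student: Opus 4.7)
The plan is to prove both parts by contraposition, leveraging Lemma~\ref{lem:reducing:1brs} on reducing $\sone$-bisimilarity redundancies together with conditions \ref{cr-2} and \ref{cr-3} of Definition~\ref{def:crystallized:onechart}. Concretely, I would assume that $\aonechart$ contains distinct $\sone$-bisimilar vertices that violate the conclusion of (i) or (ii) and then derive a reduced $\sone$-bisimilarity redundancy whose position with respect to $\aonecharthat$ is incompatible with $\aonechart$ being crystallized.

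First I would invoke Lemma~\ref{lem:reducing:1brs}: since the hypothesized $\sone$-bisimilarity redundancy witnesses that $\aonechart$ is not a bisimulation collapse, the lemma delivers a reduced $\sone$-bisimilarity redundancy $\pair{\bvertacci{1}}{\bvertacci{2}}$ with respect to $\aonecharthat$ of some position kind \ref{R1}, \ref{R2}, or \ref{R3}. By \ref{cr-2}, this reduced redundancy must be crystalline, hence of subkind \ref{R3.4} while avoiding the subkinds \ref{R3.4.1} and \ref{R3.4.2}. From the shape of \ref{R3.4}, $\bvertacci{1}$ and $\bvertacci{2}$ share a common top loop vertex $\avert$ from which they descend via $\bvertbari{1}$ and $\bvertbari{2}$ along $\sloopsbackto$-links. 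Applying \ref{cr-3}, the vertex $\avert$ is parsimoniously insulated from above, which forces $\avert$ to be $\sloopsbackto$-maximal, and hence $\bvertacci{1}$ and $\bvertacci{2}$ both lie inside the loops-back-to part $\sectionto{\avert}{\sloopsbacktortc}$ of a maximal loop vertex, i.e., inside a single SCC of $\aonechart$.

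For part (i), this suffices: every reduced redundancy sits inside the loops-back-to part of a $\sloopsbackto$-maximal loop vertex, and relativizing the reduction argument to arbitrary redundancies (via the subchart generated at one of the offending vertices) yields the desired conclusion that any $\sone$-bisimilar pair of distinct vertices must be located within such a loops-back-to part. For part (ii), I would argue further by contradiction: if both vertices of an $\sone$-bisimilarity redundancy sat inside the loops-back-to part of a non-maximal loop vertex $\avert'$, then the top $\avert$ of the associated reduced crystalline redundancy would satisfy $\avert \poi{\sloopsbackto} \avert'$, strictly below a $\sloopsbackto$-maximal ancestor, contradicting the parsimonious insulation from above required by \ref{cr-3}. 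It is here that the full strength of \ref{cr-3} is used.

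The main obstacle I anticipate is bridging from the reduced crystalline redundancy back to the \emph{particular} hypothesized redundancy: Lemma~\ref{lem:reducing:1brs} only guarantees the existence of \emph{some} reduced redundancy in $\aonechart$, not a reduction of the given pair $\pair{\bverti{1}}{\bverti{2}}$. I expect that this gap must be closed by applying the reduction locally, for instance within $\gensubchartofby{\aonechart}{\bverti{1}}$ or within the restriction to a loops-back-to part, together with an induction on the loop-layer structure furnished by $\aonecharthat$. Condition \ref{cr-4}, asserting that carriers of SCCs are grounded in $\aonechart$, should enter at exactly this point to control locality, ensuring that the inductive reduction does not escape the relevant loops-back-to part through a $\sone$-bisimilar target outside the SCC.
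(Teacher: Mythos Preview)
The paper does not supply a proof of this lemma; it is one of the auxiliary statements in Section~\ref{crystallization} that are stated without argument in this conceptual overview (the introduction warns that ``details are sometimes only hinted at''). So there is no paper proof to compare against directly, and I can only assess your proposal on its own merits.

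Your overall strategy---force any reduced redundancy to be crystalline via \ref{cr-2}, then force its top vertex $\avert$ to be $\sloopsbackto$-maximal via \ref{cr-3}---is the natural one and almost certainly what is intended. But the gap you yourself flag is real and is the heart of the matter: Lemma~\ref{lem:reducing:1brs} as stated is purely existential, whereas (i) and (ii) are claims about \emph{every} redundancy. Your proposed fix of ``applying the reduction locally within $\gensubchartofby{\aonechart}{\bverti{1}}$'' does not obviously work, since the reduced redundancy that Lemma~\ref{lem:reducing:1brs} yields in a subchart need not involve $\bverti{1}$ or $\bverti{2}$ at all, and restricting to a generated subchart does not confine the reduction to a given \txtloopsbackto\ part.

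The correct way to close the gap is to use the \emph{constructive} content of the reduction underlying Lemma~\ref{lem:reducing:1brs} (generalizing Prop.~6.4 in \cite{grab:fokk:2020:lics,grab:fokk:2020:lics:arxiv}): starting from a given pair $\pair{\bverti{1}}{\bverti{2}}$, each component is pushed upward along its $\sloopsbackto$-chain until a reduced configuration is reached. Two consequences then do the work. If $\bverti{1},\bverti{2}$ lie in different \sccs, their $\sloopsbackto$-chains never merge, so the reduced pair is of kind \ref{R1}, which is simple and hence excluded by \ref{cr-2}; this yields (i). If $\bverti{1},\bverti{2}$ both lie in $\sectionto{\avert'}{\sloopsbacktortc}$ for a non-maximal loop vertex $\avert'$, then both $\sloopsbackto$-chains pass through $\avert'$, so the top vertex $\avert$ of the resulting crystalline redundancy satisfies $\avert \loopsbacktortc \avert'$ and is therefore not $\sloopsbackto$-maximal, contradicting the insulation required by \ref{cr-3}; this yields (ii). Condition \ref{cr-4} is not needed for this lemma; groundedness of \sccs\ enters only later, for Proposition~\ref{prop:crystallized:2:twincrystal:sccs} and Lemma~\ref{lem:crystallized:is:nearcollapsed}.
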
 

\begin{prop}[crystallized $\Rightarrow$ $\sone$-coll./\twincrystal\ \scc's]\label{prop:crystallized:2:twincrystal:sccs}
  For every carrier $\csetverts$ of a \scc\ 
    in a \onechart~$\aonechart$ that is crystallized, 
    either $\aonechart$ is \onecollapsed\ for $\csetverts$,
    or $\csetverts$ is the carrier~of~a~\twincrystal. 
\end{prop}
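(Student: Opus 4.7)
The plan is to assume that $\aonechart$ is not $\sone$-collapsed for $\csetverts$ and, under this hypothesis, to exhibit $\csetverts$ as the carrier of a twin-crystal. Since $\csetverts$ then contains distinct $\sone$-bisimilar vertices, Lem.~\ref{lem:reducing:1brs} yields an $\aonecharthat$-reduced $\sone$-bisimilarity redundancy $\pair{\bverti{1}}{\bverti{2}}$ lying in $\csetverts$, where $\aonecharthat$ is a \LLEEwitness\ satisfying \ref{cr-1}--\ref{cr-4}. Crystallization condition \ref{cr-2} forces this redundancy to be crystalline, i.e.\ of form \ref{R3.4} but neither \ref{R3.4.1} nor \ref{R3.4.2}. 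Configuration \ref{R3.4} supplies a common vertex $\avert$ above $\bvertbari{1}$ and $\bvertbari{2}$ via $\sloopsbackto$-links, and by \ref{cr-3} this $\avert$ is parsimoniously insulated from above, hence $\sloopsbackto$-maximal in $\aonecharthat$.

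I would then designate the top vertex of the prospective twin-crystal as $\toptwinc \defdby \avert$, and the pivot vertex $\pivottwinc$ as a second $\sloopsbackto$-maximal loop vertex of $\aonecharthat$ inside $\csetverts$. Its existence is forced by the fact that $\bverti{1}$ and $\bverti{2}$ lie in the single SCC $\csetverts$ yet belong to disjoint $\sloopsbackto$-descent regions of $\aonecharthat$ (by the definition of \ref{R3.4}). Setting $\twincpivot \defdby \sectionto{\pivottwinc}{\sloopsbacktortc}$ and $\twinctop \defdby \sectionto{\toptwinc}{\sloopsbackinloopltortc{E_2}}$ for an appropriate nonempty set $E_2$ of loop-entry transitions leaving $\toptwinc$, the twin-crystal conditions \ref{tc-1}, \ref{tc-2}, \ref{tc-3}, \ref{tc-8} follow in order: the first two from the layered structure of $\aonecharthat$ combined with the SCC property of $\csetverts$ and the $\sloopsbackto$-maximality of $\toptwinc$ and $\pivottwinc$, the third from our standing hypothesis, and the fourth as a direct instance of \ref{cr-4}.

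For the remaining conditions I would argue as follows. For \ref{tc-4}, any $\aonecharthat$-reduced redundancy in $\csetverts$ is crystalline by \ref{cr-2}, and parsimoniousness \ref{cr-3} forces its two joining ancestors $\bvertbari{1},\bvertbari{2}$ to descend into distinct parts---one in $\twincpivot$, one in $\twinctop$. The favoring conditions \ref{tc-5}, \ref{tc-6} follow by contradiction: a proper transition from $\pivottwinc$ that is $\sone$-bisimilar to a vertex in $\twincpivot$ but targets $\twinctop$ would induce a crystalline redundancy in configuration \ref{R3.4.1} or \ref{R3.4.2} above $\toptwinc$, contradicting \ref{cr-3}; symmetrically for $\toptwinc$. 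Squeezedness \ref{tc-7} is obtained by noting that a $\sone$-bisimilar partner outside $\csetverts$ of a vertex inside $\csetverts$ would, via Lem.~\ref{lem:reducing:1brs}, yield a reduced redundancy of form \ref{R1} (since such a pair spans different SCCs), which is not crystalline and thus contradicts \ref{cr-2}.

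The main obstacle I anticipate is the global coherence of the pivot/top decomposition \ref{tc-2}: one must show that $\csetverts$ cannot split into three or more $\sloopsbackto$-descent regions of distinct $\sloopsbackto$-maximal vertices, and that every crystalline redundancy in $\csetverts$ pairs exactly one vertex of $\twincpivot$ with one of $\twinctop$. The key leverage is the interaction between the layering of $\aonecharthat$, which organizes $\csetverts$ into a forest of $\sloopsbackto$-descent regions, and parsimoniousness \ref{cr-3}, which---by forbidding redundant loop entries above already-joined top vertices---forces this forest to have exactly two maximal roots and constrains each crystalline redundancy to span precisely these two descent regions. Once this bipartiteness is established, all remaining verifications reduce to routine case analyses on the shape of \ref{R3.4}.
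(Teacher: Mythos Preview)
The paper states this proposition without proof (it is an extended-abstract outline and explicitly defers details), so there is no paper argument to compare against. Your overall strategy---locate a crystalline redundancy in $\csetverts$, use \ref{cr-3} to make the joining vertex $\avert$ into the $\sloopsbackto$-maximal vertex $\toptwinc$, and then verify \ref{tc-1}--\ref{tc-8} one by one---is the natural one and matches how the paper has set up the concepts.

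There is, however, a structural misconception in your identification of $\pivottwinc$. You call it ``a second $\sloopsbackto$-maximal loop vertex of $\aonecharthat$ inside $\csetverts$'', but this cannot be right: by \ref{tc-1} the \emph{entire} scc $\csetverts$ equals $\sectionto{\toptwinc}{\sloopsbacktortc}$, so $\toptwinc$ is the \emph{unique} $\sloopsbackto$-maximal vertex in $\csetverts$, and necessarily $\pivottwinc \loopsbacktotc \toptwinc$. In the configuration \ref{R3.4} (and in the running examples Figs.~\ref{fig:countex:collapse},~\ref{fig:ex:local:transfer:function}) the pivot is $\bvertbari{1}$, one of the direct $\sloopsbackto$-children of $\avert$, not a second maximal vertex. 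Consequently your stated ``main obstacle''---ruling out three or more $\sloopsbackto$-maximal roots---is a non-issue. The genuine obstacles are rather: (i) that every crystalline redundancy in $\csetverts$ has its joining vertex equal to the same $\avert = \toptwinc$ (this uses insulation from \ref{cr-3} together with Lem.~\ref{lem:props:crystallized:onecharts}\ref{it:2:lem:props:crystallized:onecharts}, which says non-maximal loops-back-to parts are already \onecollapsed), and (ii) that all such redundancies agree on the same $\bvertbari{1} = \pivottwinc$, so that the partition $\twincpivot \uplus \twinctop$ in \ref{tc-2} is globally well-defined---and it is precisely here that the exclusion of subkind \ref{R3.4.2} (two \ref{R3.4}-configurations below the same $\avert$ with $\sloopsbackto$-incomparable $\bvertbari{1}$'s) does the work. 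With the pivot correctly identified, the rest of your verification plan for \ref{tc-4}--\ref{tc-8} is sound.
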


\begin{lem}[\ref{contrib:concept:4}]\label{lem:crystallized:is:nearcollapsed}
  Every crystallized \onechart\ is \nearcollapsed.
\end{lem}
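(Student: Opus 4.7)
The plan is to decompose $\aonechart$ along its \sccs\ and to paste together, from each \twincrystal\ component, the counterpart function already shown to be a local transfer function on $\aonechart$. First, I would enumerate the carriers of the \sccs\ of $\aonechart$ as $\csetvertsi{1}, \ldots, \csetvertsi{n}$. By Prop.~\ref{prop:crystallized:2:twincrystal:sccs}, each $\csetvertsi{i}$ either has the property that $\aonechart$ is \onecollapsed\ for $\csetvertsi{i}$, or is the carrier of a \twincrystal. After reordering I may assume that $\csetvertsi{1}, \ldots, \csetvertsi{k}$ are precisely the \twincrystal\ carriers, and define $\sphifuni{i} \defdby \scpfunon{\csetvertsi{i}}$ for $i \in \setexp{1, \ldots, k}$. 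By the lemma immediately preceding Def.~\ref{def:crystallized:onechart}, each $\sphifuni{i}$ is a local transfer function on $\aonechart$. Set $\sabinrel \defdby \bigcup_{i=1}^{k} \graphof{\sphifuni{i}}$.

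Next I would verify the containment $\sonebisimon{\aonechart} \subseteq \sredrsci{\sabinrel}$. Let $\pair{\bverti{1}}{\bverti{2}}$ be an arbitrary \onebisimilarity\ pair. If $\bverti{1} = \bverti{2}$, reflexivity of $\sredrsci{\sabinrel}$ settles the inclusion. Otherwise, by Lem.~\ref{lem:props:crystallized:onecharts}~\ref{it:1:lem:props:crystallized:onecharts}--\ref{it:2:lem:props:crystallized:onecharts}, two distinct \onebisimilar\ vertices of a crystallized \onechart\ must lie in a common maximal \txtloopsbackto\ part, hence in a common \scc, say carrier $\csetvertsi{i}$. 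Since $\csetvertsi{i}$ contains the \nontrivial\ redundancy $\pair{\bverti{1}}{\bverti{2}}$, $\aonechart$ cannot be \onecollapsed\ for $\csetvertsi{i}$; hence $\csetvertsi{i}$ is a \twincrystal\ carrier, so $i \in \setexp{1, \ldots, k}$. By property \ref{tc-4} of \twincrystals, any non-diagonal \onebisimilarity\ pair in $\csetvertsi{i}$ consists of one pivot-side and one top-side vertex linked by the counterpart function $\scpfunon{\csetvertsi{i}} = \sphifuni{i}$. Therefore $\pair{\bverti{1}}{\bverti{2}} \in \sredrsci{\graphof{\sphifuni{i}}} \subseteq \sredrsci{\sabinrel}$. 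Def.~\ref{def:local:nearcollapsed} applied with $\csetverts = \verts$ then delivers that $\aonechart$ is \nearcollapsed.

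The only delicate point — which is not really an obstacle, since it has been absorbed into the preceding results — is the verification that each $\sphifuni{i}$ is a local transfer function of the \emph{ambient} $\aonechart$ rather than of the generated sub-\onechart\ on $\csetvertsi{i}$ alone. This is guaranteed by the groundedness clause \ref{cr-4} of Def.~\ref{def:crystallized:onechart} together with the squeezed and grounded conditions \ref{tc-7} and \ref{tc-8} of \twincrystals: transitions leaving $\csetvertsi{i}$ from a \precrystalline\ counterpart pair cannot witness an escape from \onebisimilarity, because either the target already lies outside $\csetvertsi{i}$ with a unique image (by \ref{tc-8}), or the pair of boundary transitions is matched symmetrically by the construction of $\scpfunon{\csetvertsi{i}}$. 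Thus the forth/back conditions \textnormal{(forth)$_\text{\nf g}$} and \textnormal{(back)$_\text{\nf g}$} of Def.~\ref{def:grounded:onebisim:slice} lift from $\csetvertsi{i}$ to all of $\aonechart$, completing the argument.
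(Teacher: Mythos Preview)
Your proposal is correct and follows essentially the route the paper's organization lays out: use Lem.~\ref{lem:props:crystallized:onecharts} to confine nontrivial \onebisimilarity\ pairs to single \sccs, apply Prop.~\ref{prop:crystallized:2:twincrystal:sccs} to see that any \scc\ hosting such a pair is a \twincrystal, and then invoke the lemma that the counterpart function of a \twincrystal\ is a local transfer function on the ambient~$\aonechart$. Two minor remarks: your appeal to \ref{tc-4} is slightly off, since \ref{tc-4} speaks only of \emph{reduced} redundancies---the fact you need (that \emph{all} \onebisimilarity\ redundancies in $\csetvertsi{i}$ straddle $\twincpivot$ and $\twinctop$, hence are captured by $\scpfunon{\csetvertsi{i}}$) is the unnumbered assertion the paper states just after the \twincrystal\ conditions; and your final paragraph is superfluous, since the lemma you already cited asserts that $\scpfunon{\csetvertsi{i}}$ is a local transfer function \emph{on $\aonechart$}, not merely on the sub-\onechart, so the ``delicate point'' has already been discharged.
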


By combining \LLEEpreserving\ eliminations of \precrystalline\ \onebisimilarity\ redundancies,
  of parsimonious insulation of crystalline \onebisimilarity\ redundancies, and of grounding of \sccs\
  %
  we are able to prove our main auxiliary statement.

\begin{thm}[crystallization, nearcollapse, (\ref{contrib:concept:3},\ref{contrib:concept:4})]\label{thm:crystallization:nearcollapse}
  Every 
        weakly guarded \LLEEonechart~$\aonechart$ 
  can be transformed, together with a \LLEEwitness~$\aonecharthat$ of $\aonechart$,
    into a \onebisimilar\ \LLEEonechart~$\aonechartacc$ with \onetransition\ limited \LLEEwitness~$\aonecharthatacc$ 
    such that $\aonechart$ is crystallized with respect to $\aonecharthatacc$,
    and $\aonechartacc$ is \nearcollapsed.
\end{thm}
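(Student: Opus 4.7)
The plan is to iteratively transform the input weakly guarded \LLEEonechart\ into a crystallized and \nearcollapsed\ one, preserving \onebisimilarity\ and \LLEE\ at every step. First, I would apply Lem.~\ref{lem:LLEEonechart:2:LLEEonelimonechart} to replace $\aonechart$ by a \onebisimilar\ \onetransition\ limited (hence guarded) \LLEEonechart\ with a fixed \onetransition\ limited \LLEEwitness; this secures clause \ref{cr-1} of Def.~\ref{def:crystallized:onechart}. All subsequent operations will be chosen so as to re-establish \onetransition\ limitedness (if necessary via another appeal to Lem.~\ref{lem:LLEEonechart:2:LLEEonelimonechart}) before entering the next iteration.

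Second, I would drive the transformation by a well-founded lexicographic measure $\mu(\aonechart,\aonecharthat)$ on pairs of a guarded \LLEEonechart\ and a \onetransition\ limited \LLEEwitness. A natural choice has as first component the number of vertices of $\aonechart$; as second component the number of \reducedwrt{\aonecharthat} \onebisimilarity\ redundancies that are not crystalline (cf.\ Lem.~\ref{lem:reducing:1brs}); as third component the number of crystalline \reducedwrt{\aonecharthat} \onebisimilarity\ redundancies whose top vertex is not yet parsimoniously insulated from above; and as fourth component the number of \sccs\ of $\aonechart$ that fail the groundedness condition \ref{cr-4}. The main loop proceeds by case distinction: (a) if some \reducedwrt{\aonecharthat} \onebisimilarity\ redundancy is simple or \precrystalline\ but not crystalline, eliminate it \LLEEpreservingly\ by Lem.~\ref{lem:elim:all:but:cryst:1brs}, strictly decreasing either the first or the second component; (b) otherwise, if some crystalline redundancy is not parsimoniously insulated from above, apply the insulation procedure illustrated in Fig.~\ref{fig:ex:pars:insulated:exp}, which either demotes the redundancy to an eliminable kind covered by (a), or strictly decreases the third component; (c) otherwise, if some \scc\ fails groundedness, redirect the offending outgoing transitions to a common \onebisimilar\ target, strictly decreasing the fourth component.

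When the loop terminates, the resulting pair $\pair{\aonechartacc}{\aonecharthatacc}$ satisfies all of \ref{cr-1}--\ref{cr-4}, so $\aonechartacc$ is crystallized with respect to $\aonecharthatacc$. By Prop.~\ref{prop:crystallized:2:twincrystal:sccs}, all \sccs\ of $\aonechartacc$ are then either \onecollapsed\ or carriers of \twincrystals, and by Lem.~\ref{lem:crystallized:is:nearcollapsed} $\aonechartacc$ is \nearcollapsed. The \onebisimilarity\ $\aonechart \onebisim \aonechartacc$ chains through the initial reduction to \onetransition\ limitedness and through each step, because elimination of a non-crystalline redundancy, parsimonious insulation, and grounding all preserve \onebisimilarity\ (the redirections inside (a)--(c) only change transition targets to \onebisimilar\ vertices).

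The main obstacle I expect is verifying that operations (b) and (c) interact well with \LLEE\ and with each other, and in particular that they do not re-create non-crystalline redundancies on previously processed \sccs, which would compromise the lexicographic measure. Parsimonious insulation from above necessarily redirects \loopentry\ transitions and \onetransitions\ at the top vertex and can shift loop levels in the \LLEEwitness; hence after each step I must argue that a fresh \onetransition\ limited \LLEEwitness\ can be chosen for which the first two components of $\mu$ are unchanged while the third strictly decreases. A similar subtlety arises for grounding: the redirection step may create new parallel transitions inside the same \scc, which must be shown to be \LLEEpreserving\ and not to introduce fresh non-crystalline redundancies. Handling these bookkeeping issues uniformly, via the \txtconnectthrough\ operation and the detailed structural analysis of \twincrystals\ (properties \ref{tc-1}--\ref{tc-8}) will be the technical heart of the proof.
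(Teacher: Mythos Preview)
Your proposal is correct and follows essentially the same approach as the paper: the paper only gives a one-sentence sketch (``by combining \LLEEpreserving\ eliminations of \precrystalline\ \onebisimilarity\ redundancies, of parsimonious insulation of crystalline \onebisimilarity\ redundancies, and of grounding of \sccs''), and your three-case loop (a)--(c) is exactly this combination, organized around a sensible lexicographic termination measure. Your closing paragraph correctly isolates the genuine technical burden that the paper defers to its full version, namely that insulation and grounding must be shown to be \LLEEpreserving\ and not to regenerate non-crystalline redundancies already dealt with.
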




\input{figs/fig-counterex-transf-fun-elevation.tex}%
  \afterpage{\FloatBarrier}

\section{Near-collapsed LLEE-\protect\onecharts\ 
         have complete solutions}
  \label{nearcollapsed:LLEEonecharts:complete:solvable}

We show that \nearcollapsed\ \LLEEonecharts\ have complete solutions
  by linking local transfer functions, as in the definition of `\nearcollapsed', 
  to transfer functions in order to be able to use the transfer-property \ref{Tone}
  for provable solutions.

Local-Transfer functions can be linked to transfer functions via the concept of the \oneLTS' $\elevatechart{\asetverts}{\aoneLTS}$
  that is the `elevation of a set $\asetverts$ of vertices above' a \oneLTS~$\aoneLTS = \tuple{\states,\actions,\sone,\transs,\termexts}$,
which is constructed as follows.
The set of vertices of $\elevatechart{\asetverts}{\aoneLTS}$
consists of two copies of its set $\verts$ of vertices,
the `ground floor' $\verts\times\setexp{\groundfloor}$, and the `first floor' $\verts\times\setexp{\firstfloor}$. 
These two copies of the set of vertices of $\aoneLTS$
are linked by copies of the corresponding transitions of $\aoneLTS$ 
with the 
         \emph{exception} that \emph{proper} 
$\triple{\averti{1}}{\aact}{\averti{2}}$ of $\aoneLTS$
do \underline{\smash{not}} give rise to a proper transition of the form
$\triple{\pair{\averti{1}}{\firstfloor}}{\aact}{\pair{\averti{2}}{\firstfloor}}$ on the first floor
\emph{if} the vertex $\averti{2}$ is not contained in $\asetverts$.
Those transitions get redirected as transitions $\triple{\pair{\averti{1}}{\firstfloor}}{\aact}{\pair{\averti{2}}{\groundfloor}}$
to target the corresponding copy $\pair{\averti{2}}{\groundfloor}$ of $\averti{2}$ on the ground floor. 
Note that such redirections from the first floor to the ground floor do not happen for \onetransitions. 
The \suboneLTS\ of $\elevatechart{\asetverts}{\aoneLTS}$ that consists of all transitions between vertices on the ground floor
is an exact copy of the original \oneLTS~$\aoneLTS$. 
Yet within the elevation $\elevatechart{\asetverts}{\aoneLTS}$ of $\asetverts$ above $\aoneLTS$,
a number of vertices on the ground floor will have additional incoming \properaction\ transitions from vertices on the first floor.

\begin{example}
  Fig.~\ref{fig:ex:lem:transfer:lift:local:transfer} contains two copies of the elevation $\elevationofabove{\alert{\asetverts}}{\aonecharti{s}}$ 
        of the set $\alert{\asetverts} \defdby \setexp{ \tightfbox{$abcd_1$}, \tightfbox{$abcd_2$} }$ 
        of vertices of the \onechart~$\aonecharti{s}$ in Fig.~\ref{fig:ex:local:transfer:function} 
          above $\aonecharti{s}$. 
\end{example}    
    
Indeed, Fig.~\ref{fig:ex:lem:transfer:lift:local:transfer} also shows how the local transfer function $\protect\magenta{\scpfunon{\black{\twinc}}}$ 
      on \onechart~$\aonecharti{s}$ in Fig.~\protect\ref{fig:ex:local:transfer:function} 
    can be lifted to a transfer function $\protect\sliftltfunn{\protect\magenta{\scpfunon{\black{\twinc}}}}$
    on the elevation $\protect\elevationofabove{\protect\alert{\protect\asetverts}}{\protect\aonecharti{s}}$ 
      of $\protect\alert{\asetverts}$  
          above $\aonecharti{s}$:    
namely by defining $\protect\sliftltfunn{\protect\magenta{\scpfunon{\black{\twinc}}}}$ as $\protect\magenta{\scpfunon{\black{\twinc}}}$
  on the first floor, and as the identity function on the ground floor of $\protect\elevationofabove{\protect\alert{\protect\asetverts}}{\protect\aonecharti{s}}$. 

In general the following statement holds, which is a generalization to \onecharts\ of Prop.~2.4 in \cite{grab:2021:NWPT}.
Every local transfer function 
    $\sphifun \funin \states \rightharpoonup \states$ 
  on a \oneLTS~$\aoneLTS$   
  with $\asetverts \defdby \domof{\sphifun} \cap \ranof{\sphifun}$ 
  lifts to a transfer function $\sliftltfun$ on the elevation $\elevationofabove{\asetverts}{\aoneLTS}$ of $\asetverts$ over~$\aoneLTS$,
    via the projection transfer function $\sproji{1}$ from $\elevationofabove{\asetverts}{\aoneLTS}$ to $\aoneLTS$, 
  such that the diagram below commutes for vertices on the first floor 
    of $\elevationofabove{\asetverts}{\aoneLTS}$:
  \begin{equation}\label{diag:lem:transfer:lift:local:transfer}
    \hspace*{-3ex}
    \begin{aligned}[c]
      \begin{tikzpicture}
  \matrix[anchor=center,row sep=0.5cm,column sep=1.4cm] {
          \node(topleft){$\aoneLTS$};
            & \node(topright){$\aoneLTS$};
    \\[-0.25ex]
        &  
    \\[-0.5ex]
          \node(bottomleft){$\elevatechart{\asetverts}{\aoneLTS}
                                                                $};
            & \node(bottomright){$\elevatechart{\asetverts}{\aoneLTS}
                                                                     $};
    \\  
                                                       };
  \draw[-{Straight Barb[left]}] (topleft)    to node[above]{$\sphifun$} (topright);
  \draw[-{Straight Barb[left]}] (bottomleft) to node[above]{$\sliftltfun$} (bottomright);
  \draw[->] (bottomleft)  to node[left,xshift=0.05cm]{$\sproji{1}$}   (topleft);
  \draw[->] (bottomright) to node[right,xshift=-0.05cm]{$\sproji{1}$} (topright); 
\end{tikzpicture}
    \end{aligned}
    \hspace*{3ex}
    \begin{aligned}
      & \hspace*{-1.5ex}\text{for all $\bvert\in\domof{\sphifun}$:}
      \\[0.5ex]
      &
      \compfuns{(\sproji{1}}{\sphifun)}{\pair{\bvert}{\firstfloor}}
      \\[-0.5ex]
      & 
      {} \qquad \synteq
      \\[-1ex]
      & \compfuns{(\sliftltfunn{\sphifun}}{\sproji{1})}{\pair{\bvert}{\firstfloor}} 
    \end{aligned}
  \end{equation}

Together with preservation of \LLEE\ for elevations, 
the possibility to lift local transfer functions to transfer functions on elevations
  facilitates us to use invariance of provable solutions under transfer functions between \LLEEoneLTSs\
  for proving invariance of provable solutions under local transfer functions on \LLEEoneLTSs. 
Then we can use this fact to show complete solvability of \nearcollapsed\ \mbox{\LLEEonecharts}.  

\begin{lem}%
    \label{lem:sol:inv:under:ltfuns}
  Let $\sphifun \funin \verts \rightharpoonup \verts$ be a local transfer function
    on a \wg\ \LLEEoneLTS~$\aoneLTS = \tuple{\verts,\actions,\sone,\transs,\termexts}$.
  Then every \provablein{\milnersys} solution $\sasol$ of $\aoneLTS$ 
    is \provablyin{\milnersys} invariant~under~$\sphifun$: 
  \begin{equation}\label{eq:lem:sol:inv:under:ltfuns}
    \asol{\bvert}
      \milnersyseq
    \asol{\phifun{\bvert}}
      \quad \text{(for all $\bvert\in\domof{\sphifun}$) \punc{.}} 
  \end{equation}
\end{lem}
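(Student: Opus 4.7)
The plan is to lift the local transfer function $\sphifun$ to a genuine transfer function on an auxiliary \oneLTS, and then apply the invariance result for transfer functions (Lem.~\ref{lem:sol:inv:under:tfuns}) that is already available. Concretely, I set $\asetverts \defdby \domof{\sphifun}\cap\ranof{\sphifun}$ and form the elevation $\aoneLTShat \defdby \elevatechart{\asetverts}{\aoneLTS}$ of $\asetverts$ above $\aoneLTS$. According to the construction described just before the lemma, $\sphifun$ lifts to a transfer function $\sliftltfunn{\sphifun}$ on $\aoneLTShat$ such that the diagram (\ref{diag:lem:transfer:lift:local:transfer}) commutes on first-floor vertices, and the projection $\sproji{1} \funin \verts\times\setexp{\groundfloor,\firstfloor} \rightharpoonup \verts$ onto the first component of vertices of $\aoneLTShat$ is itself a transfer function from $\aoneLTShat$ to $\aoneLTS$.

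The first technical step is to verify that $\aoneLTShat$ is again a \wg\ \LLEEoneLTS. Weak guardedness is inherited because all transition redirections in the elevation construction concern \emph{proper} transitions, while \onetransitions\ are copied verbatim on each floor; hence no new infinite \onetransition\ path is created. For \LLEE\ one lifts a \LLEEwitness\ $\aonecharthat$ of $\aoneLTS$ to a \LLEEwitness\ of $\aoneLTShat$ by copying the loop-entry/body labeling on each floor and marking the redirected proper transitions from the first to the ground floor as body transitions; since loop \subonecharts\ at first-floor vertices only reach back through \onetransitions, which are \emph{not} redirected, the layered elimination order of $\aonecharthat$ is preserved floor-wise.

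Given that $\aoneLTShat$ is a \wg\ \LLEEoneLTS, I next use the projection $\sproji{1}$ to pull the provable solution $\sasol$ of $\aoneLTS$ back to the elevation. Since $\sproji{1}$ is a transfer function from $\aoneLTShat$ to $\aoneLTS$, Lem.~\ref{lem:sol:inv:under:tfuns} applied to the identity solution comparison yields that $\sasoltilde \defdby \compfuns{\sasol}{\sproji{1}}$ is an \provablein{\milnersys} solution of $\aoneLTShat$. Now applying Lem.~\ref{lem:sol:inv:under:tfuns} a second time, this time to the transfer function $\sliftltfunn{\sphifun}$ on $\aoneLTShat$ (between $\aoneLTShat$ and itself, with $\sasoltilde$ on both sides), gives
\begin{equation*}
  \sasoltilde{\bverttilde} \milnersyseq \sasoltilde{\liftltfunn{\sphifun}{\bverttilde}}
    \quad\text{for all } \bverttilde\in\domof{\sliftltfunn{\sphifun}} \punc{.}
\end{equation*}

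Finally, to conclude (\ref{eq:lem:sol:inv:under:ltfuns}), I instantiate the above with $\bverttilde \defdby \pair{\bvert}{\firstfloor}$ for an arbitrary $\bvert\in\domof{\sphifun}$. By construction $\proji{1}{\pair{\bvert}{\firstfloor}} \synteq \bvert$, and the commutativity of the diagram (\ref{diag:lem:transfer:lift:local:transfer}) on the first floor gives $\proji{1}{\liftltfunn{\sphifun}{\pair{\bvert}{\firstfloor}}} \synteq \phifun{\bvert}$. Unfolding $\sasoltilde = \compfuns{\sasol}{\sproji{1}}$ then yields $\asol{\bvert} \milnersyseq \asol{\phifun{\bvert}}$, as required. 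The main obstacle in this plan is the first step: carefully establishing that \LLEE\ (and weak guardedness) transfers to the elevation by constructing an appropriate \LLEEwitness\ on $\aoneLTShat$; once this structural lemma is in place, the rest reduces to two clean applications of the already-proved transfer-function invariance result.
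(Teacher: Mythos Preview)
Your proposal is correct and follows essentially the same route as the paper's proof sketch: both lift $\sphifun$ to a transfer function $\sliftltfunn{\sphifun}$ on the elevation $\elevationofabove{\asetverts}{\aoneLTS}$ (which is again a \wg\ \LLEEoneLTS), and then combine Lem.~\ref{lem:sol:inv:under:tfuns} with the commutativity of diagram~(\ref{diag:lem:transfer:lift:local:transfer}) on first-floor vertices to conclude. One small point: the step where you argue that $\sasoltilde = \scompfuns{\sasol}{\sproji{1}}$ is a \provablein{\milnersys} solution of the elevation is more directly justified by Lem.~\ref{lem:preservation:sols}\,\ref{it:conv:funbisims:lem:preservation:sols} (i.e., \ref{Tone}) than by Lem.~\ref{lem:sol:inv:under:tfuns}, which presupposes a solution on the source side rather than producing one.
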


\begin{proof}[Proof (Sketch)]
  We use the diagram in \eqref{diag:lem:transfer:lift:local:transfer},
    and that $\elevationofabove{\asetverts}{\aoneLTS}$ is also a \LLEEoneLTS.
  Since $\sproji{1}$ and $\scompfuns{\sliftltfunn{\sphifun}}{\sproji{1}}$ are transfer functions,
  by Lem.~\ref{lem:sol:inv:under:tfuns} 
  $\scompfuns{(\sasol}{\sproji{1})}$
    and 
  $\scompfuns{(\sasol}{\scompfuns{\sproji{1}}{\sliftltfun)}}$
  are \provablyin{\milnersys} equal.  
  Then by using we diagram commutativity on the first floor in \eqref{diag:lem:transfer:lift:local:transfer}
  we get:   
  $\asol{\bvert}
     \synteq
   \compfuns{(\sasol}{\sproji{1})}{\pair{\bvert}{\firstfloor}}
     \milnersyseq
   \compfuns{(\sasol}{\scompfuns{\sproji{1}}{\sliftltfun)}}{\pair{\bvert}{\firstfloor}}
   $
   $  \synteq
   \compfuns{(\sasol}{\scompfuns{\sphifun}{\sproji{1}})}{\pair{\bvert}{\firstfloor}}
     \synteq
   \asol{\phifun{\bvert}}$,
  for all $\bvert\in\domof{\sphifun}$. In this way we have obtained \eqref{eq:lem:sol:inv:under:ltfuns}.
\end{proof}

%
%
%
%
\begin{lem}[\ref{contrib:concept:6}]\label{lem:nearcollapsed:2:compl:solvable}
  Every 
        \wg\ \LLEEonechart\ that is \nearcollapsed\
    has a \completewrt{\milnersys} \provablein{\milnersys} solution.
\end{lem}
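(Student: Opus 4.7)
The plan is to combine the extraction property \ref{Eone} with the invariance of provable solutions under local transfer functions established in Lem.~\ref{lem:sol:inv:under:ltfuns}, applied to the \oneLTS\ underlying the given chart.

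First I would extract a provable solution. Let $\aonechart = \tuple{\verts,\actions,\sone,\start,\transs,\termexts}$ be a weakly guarded \LLEEonechart\ that is \nearcollapsed. By Lem.~\ref{lem:LLEEonecharts:extraction} (applied to a guarded \LLEEwitness~$\aonecharthat$ of $\aonechart$) one obtains a \provablein{\milnersys} solution $\sasol$ of $\aonechart$. When no such guarded witness is immediately available, one first invokes Lem.~\ref{lem:LLEEonechart:2:LLEEonelimonechart} to pass to a \onetransition\ limited, \onebisimilar\ \LLEEonechart\ and checks that the passage preserves the \nearcollapsed\ property; in the intended application (a crystallized \onechart, cf.\ Thm.~\ref{thm:crystallization:nearcollapse}) a guarded witness is produced by construction, so this subtlety does not arise.

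Second I would exploit \nearcollapsedness. By Def.~\ref{def:local:nearcollapsed} there exist local transfer functions $\sphifuni{1},\ldots,\sphifuni{n}$ on the underlying \oneLTS~$\oneLTSof{\aonechart}$ such that
\begin{equation*}
  \sonebisimon{\aonechart} \,\subseteq\, \sredrsci{\sabinrel}
    \quad \text{for} \quad
  \sabinrel \defdby \bigcup_{i=1}^{n} \graphof{\sphifuni{i}} \punc{.}
\end{equation*}
Since $\oneLTSof{\aonechart}$ inherits weak guardedness and \LLEE\ from $\aonechart$, and $\sasol$ is equally a \provablein{\milnersys} solution of $\oneLTSof{\aonechart}$, Lem.~\ref{lem:sol:inv:under:ltfuns} applied to each $\sphifuni{i}$ in turn yields the invariance identity
\begin{equation*}
  \asol{\bvert} \milnersyseq \asol{\phifuni{i}{\bvert}}
    \quad\text{for all $i\in\setexp{1,\ldots,n}$ and $\bvert\in\domof{\sphifuni{i}}\punc{.}$}
\end{equation*}

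Third I would conclude completeness by a chain argument. Fix $\bverti{1},\bverti{2}\in\verts$ with $\bverti{1}\onebisimon{\aonechart}\bverti{2}$. By the inclusion above there is a finite chain $\bverti{1}\synteq\cverti{0},\cverti{1},\ldots,\cverti{k}\synteq\bverti{2}$ in $\verts$ such that, for each $j<k$, either $\cverti{j+1}\synteq\phifuni{i_j}{\cverti{j}}$ or $\cverti{j}\synteq\phifuni{i_j}{\cverti{j+1}}$ for some index $i_j\in\setexp{1,\ldots,n}$. In either case, the invariance identity above supplies $\asol{\cverti{j}}\milnersyseq\asol{\cverti{j+1}}$, and transitivity of provable equality in $\milnersys$ yields $\asol{\bverti{1}}\milnersyseq\asol{\bverti{2}}$. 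Hence $\sasol$ is a \completewrt{\milnersys} \provablein{\milnersys} solution of $\aonechart$.

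The main obstacle is already packaged into Lem.~\ref{lem:sol:inv:under:ltfuns}, which uses the elevation construction to lift a local transfer function to a genuine transfer function on a \LLEEoneLTS\ so that Lem.~\ref{lem:sol:inv:under:tfuns}---and, through it, unique solvability from Lem.~\ref{lem:LLEEonecharts:uniquely:solvable}---becomes applicable on the first floor of the elevation. Granted those, the present proof reduces to a bookkeeping traversal of the \reflexivesymmetric\ closure that defines the \nearcollapsed\ property.
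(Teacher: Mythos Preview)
Your proof is correct and matches the approach the paper sets up: extract a solution via Lem.~\ref{lem:LLEEonecharts:extraction}, then apply Lem.~\ref{lem:sol:inv:under:ltfuns} to each of the finitely many local transfer functions supplied by \nearcollapsedness\ to conclude \completenesswrt{\milnersys}. One small simplification: since Def.~\ref{def:local:nearcollapsed} uses the reflexive--\emph{symmetric} (not transitive) closure of $\bigcup_i \graphof{\sphifuni{i}}$, the chain in your third step always has length $k\le 1$, so the traversal collapses to a single invocation of invariance (plus symmetry of $\milnersyseq$).
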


\section{Conclusion}%
  \label{conclusion}

%

As a consequence of the crystallization process for \LLEEonecharts\ and of Thm.~\ref{thm:crystallization:nearcollapse}
  we also obtain a new characterization of expressibility of finite process graphs
    in the process semantics.
For this purpose we say that a \onechart~$\aonechart$ is \emph{expressible by a regular expression modulo \onebisimilarity}
  if $\aonechart$ is \onebisimilar\ to the chart interpretation of a star expression. 

\begin{cor}\label{cor:expressible}
  A \onechart~$\aonechart$ is expressible modulo \onebisimilarity\ 
    if and only if
  $\aonechart$ is \onebisimilar\ to a crystallized, and hence to a \nearcollapsed, \LLEEonechart.
\end{cor}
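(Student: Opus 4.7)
The proof splits along the biconditional, and I handle each direction separately.

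For the forward direction, suppose $\aonechart \onebisim \chartof{\astexp}$ for some star expression $\astexp$. By \ref{IVone}, the \onechart\ interpretation $\onechartof{\astexp}$ is a \wg\ guarded \LLEEonechart\ with $\onechartof{\astexp} \onebisim \chartof{\astexp}$. Applying Thm.~\ref{thm:crystallization:nearcollapse} to $\onechartof{\astexp}$ yields a \onebisimilar\ crystallized \LLEEonechart\ $\aonecharti{0}$, which by Lem.~\ref{lem:crystallized:is:nearcollapsed} is moreover \nearcollapsed. Transitivity of $\sonebisim$ then delivers $\aonechart \onebisim \aonecharti{0}$.

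For the backward direction, suppose $\aonechart \onebisim \aonecharti{0}$ with $\aonecharti{0}$ a crystallized \LLEEonechart. Per Def.~\ref{def:crystallized}, $\aonecharti{0}$ is a \wg\ guarded \LLEEonechart\ (since \onetransition-limitedness forces both guardedness and weak guardedness), so \ref{Eone} yields a \provablein{\milnersys} solution $\sasol$ of $\aonecharti{0}$; let $\astexp \defdby \asol{\start}$ be its principal value. It suffices to show $\aonecharti{0} \onebisim \chartof{\astexp}$, for then $\aonechart \onebisim \chartof{\astexp}$ witnesses expressibility. Invoking \ref{IVone}, $\onechartof{\astexp}$ is a \wg\ guarded \LLEEonechart\ with $\onechartof{\astexp} \onebisim \chartof{\astexp}$ and with $\astexp$ as principal value of a provable solution, so it further suffices to prove $\aonecharti{0} \onebisim \onechartof{\astexp}$. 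Both \onecharts\ admit provable solutions sharing the same principal value $\astexp$; via soundness of $\milnersys$ (Milner, 1984) these provable solutions descend to semantic ones, and unique solvability up to $\sbisim$ of guarded systems of equations in the process semantics then forces $\aonecharti{0} \onebisim \onechartof{\astexp}$.

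The main obstacle will be precisely this last step, because \ref{SEone} delivers uniqueness of provable solutions only \emph{within} a single \LLEEonechart. Crossing between two distinct \LLEEonecharts\ that share a principal value requires invoking soundness of $\milnersys$ to convert provable solutions into semantic ones and then applying unique solvability of guarded systems in the process semantics. A cleaner alternative would be to strengthen Lem.~\ref{lem:LLEEonecharts:extraction} by observing that the extraction procedure therein in fact produces a star expression $\astexp$ whose chart interpretation is automatically \onebisimilar\ to the input \LLEEonechart, yielding $\aonecharti{0} \onebisim \chartof{\astexp}$ directly and bypassing the detour through $\onechartof{\astexp}$ altogether.
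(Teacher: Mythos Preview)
Your proposal is correct, and since the paper states this corollary without proof (merely calling it ``a consequence of the crystallization process \ldots\ and of Thm.~\ref{thm:crystallization:nearcollapse}''), your elaboration is a reasonable reconstruction of the intended argument. The forward direction is exactly right.

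For the backward direction, however, the detour through $\onechartof{\astexp}$ is unnecessary and slightly obscures what is going on. You phrase the final step as comparing two \emph{different} guarded systems (that of $\aonecharti{0}$ and that of $\onechartof{\astexp}$) that happen to share a principal solution value, but semantic unique solvability is a statement about a \emph{single} system. The direct argument is: the guarded system determined by $\aonecharti{0}$ has two semantic solutions --- the canonical one assigning to each vertex its own behavior in $\aonecharti{0}$, and the one given by $\avert \mapsto \procsem{\asol{\avert}}$ (obtained from your extracted $\sasol$ via soundness). Unique solvability of that one system forces these to coincide, yielding $\aonecharti{0} \onebisim \chartof{\astexp}$ immediately, with no reference to $\onechartof{\astexp}$. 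Your own suggested alternative (observing that the extraction of Lem.~\ref{lem:LLEEonecharts:extraction} already yields an expression whose chart interpretation is \onebisimilar\ to the input) amounts to the same thing and is indeed the cleanest route. Note also that the backward direction does not actually use that $\aonecharti{0}$ is crystallized: being a guarded \LLEEonechart\ suffices, which confirms that the real content of the corollary lies in the forward direction via Thm.~\ref{thm:crystallization:nearcollapse}.
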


Since the size of a crystallized \onechart\ is bounded by at most double the size of an \onebisimulation\ collapse
(as every vertex is \onebisimilar\ to at most one other vertex in \twincrystals, and crystallized \onecharts), 
  this characterization raises the hopes for a polynomial algorithm
  for recognizing expressibility of finite process graphs. 
Such a recognition algorithm would substantially improve on
  the superexponential algorithm for deciding expressibility in \cite{baet:corr:grab:2007}.


\begin{acks}
   {Wan Fokkink} introduced me to Milner's questions from \cite{miln:1984} in 2005.
    This facilitated my work on them, and led me to the decision result \cite{baet:corr:grab:2007} with Jos Baeten and Flavio Corradini.
  In 2015, Wan suggested to tackle the problem together
    by looking for minimization strategies for star expressions that he had pioneered in \cite{fokk:zant:1994,fokk:1996:term:cycle:LGPS,fokk:1997:pl:ICALP}.
  For that it led me from the structure constraints of LEE and layered LEE for process graphs
    (generalizing `well-behaved specifications' of processes by Flavio Corradini)
      to an idea for tackling Milner's axiomatization question in full generality,
        weekly meetings with Wan in 2015--2018 were crucial.
  Together with Wan's research visit to GSSI in 2019, these meetings
    led us to the completeness result \cite{grab:fokk:2020:lics} for the tailored restriction \BBP\ of Milner's system to `\onefree\ star expressions'.
  While I continued further on this path by myself for the past two years, 
    my work with Wan has formed the firm basis of~this~work.

  I am very thankful to {Luca Aceto} for comments on parts of this submission and previous ones,
    and for giving me the chance to complete this work within the PRIN project
    \emph{{\nf IT MATTERS} - Methods and Tools for Trustworthy Smart Systems}
    (project ID: \text{2017FTXR7S\_005}).

  I want to thank {Emilio Tuosto} very much for discussing aspects of my work with him,
    for the chance to speak about it in the FM/SE meetings he organized at GSSI,
      and also for quick help, repeatedly.

  For their help during the TAPS-typesetting process of the LICS conference article,
    I want to thank Vincent van Oostrom for crucial advice about how to proceed in order to overcome an impasse, 
    and Alessandro Aloisio for an idea that helped me locate an unspecific, wrongly located typesetting error.

  Finally I want to thank the {anonymous reviewers} of the LICS submission for their interest and thorough reading,
    and a conceptualizing summary. 
    Their comments signaled the need to make details in Sect.~\ref{completeness:proof} better accessible,
      and provided me with an idea to visually improve~Fig.~\ref{fig:proof:structure} by using a `pyramid' structure.
\end{acks}

\bibliographystyle{ACM-Reference-Format}
\bibliography{cryst}



\newpage\onecolumn

\appendix%
\section{Appendix} 
  \label{appendix}%



\subsection{Supplements for Section~\ref{motivation:proof:strategy}}

\subsubsection{Obstacle for the bisimulation-chart proof strategy (more detail)} \mbox{}
  \label{intro::app}%
\newcommand{\EP}{\textit{EP}}%
\newcommand{\EPacc}{{\EP\hspace*{1pt}^{\prime}}}%

\medskip
\noindent
In Sect.~\ref{motivation:proof:strategy}, starting on page~\ref{bisim:chart:strategy:start}, 
  we argued that a bisimulation-chart proof strategy that would operate 
    in analogy with Salomaa's proof in \cite{salo:1966} of completeness of $\Fone$ for language equivalence of regular expressions 
    does not work for showing completeness of Milner's system with respect to the process interpretation. 
In particular, we argued that an extraction procedure $\EP$ of a solution from any given guarded linear system of recursion equations
  as in Salomaa's proof is not possible, not even from systems that are solvable. 
The reason is as follows. 
An extraction procedure $\EP$ that were completely analogous to the one used by Salomaa would have the following two properties:  
\begin{enumerate}[label={($\EP$-\arabic{*})},align=right,leftmargin=*,itemsep=0.5ex]
  \item{}\label{EP:1}
    $\EP$ extracts a solution from any guarded linear system $\aspec$ of recursion equations.
  \item{}\label{EP:2}
    $\EP$ proceeds `data-obliviously' by mechanically combining
    the actions in the equations of a recursion system $\aspec$ according to a fixed way of traversing it that,
    while copying and transporting specific letters from the equations over to the solution, 
    never compares different letters and never takes decisions on the basis of the specific letter at hand. 
    In particular $\EP$ also never takes a decision that the system would not be solvable,
      but always produces a result.    
\end{enumerate} 
Now property \ref{EP:1} cannot be obtained, because as Milner showed, see Example~\ref{exa:charts:not:expressible} below, 
  there are guarded linear systems of recursion equations that are unsolvable by star expressions in the process semantics.
While this shows that an extraction procedure $\EP$ with \ref{EP:1} and \ref{EP:2} is impossible,
  it leaves open the possibility of an extraction procedure $\EPacc$ that satisfies \ref{EP:2} except for that it may sometimes not terminate (see \ref{EPacc:2} below),
    and the restriction \ref{EPacc:1} of \ref{EP:1} to solvable recursion equations (see \ref{EPacc:1}, \ref{EPacc:1:a}) 
    together with a soundness condition (see \ref{EPacc:1}, \ref{EPacc:1:b}):
\begin{enumerate}[label={($\EPacc$-\arabic{*})},align=right,leftmargin=*,itemsep=0.5ex]
  \item{}\label{EPacc:1}
    \begin{enumerate}[label={(\alph{*})},align=right,leftmargin=*,itemsep=0.25ex]
      \item{}\label{EPacc:1:a}
        $\EPacc$ extracts a solution from any guarded linear system $\aspec$ of recursion equations \underline{\smash{that is solvable}}.
      \item{}\label{EPacc:1:b}
        Whenever $\EPacc$ obtains a result for a guarded linear system $\aspec$ of recursion equations, then that is a solution of $\aspec$.  
    \end{enumerate}
  \item{}\label{EPacc:2}
    $\EPacc$ proceeds data-obliviously in the same way as \ref{EP:2} requires for $\EP$. However, the extraction process is not required to be terminating. 
    %
\end{enumerate}    
It turns out that such a restricted extraction procedure $\EPacc$ is not possible, either.
The reason is that there are unsolvable specifications $\aspeci{\textit{uso}}$ that are `data-obliviously' the same 
  (that is, the underlying process graph has the same structure when action names are ignored)
  with  a (respectively) corresponding solvable specification $\aspeci{\textit{so}}$.
  Then, in order to safeguard solvability for $\aspeci{\textit{so}}$, see \ref{EPacc:1}, \ref{EPacc:1:a},
    $\EPacc$ would need to produce a result for both $\aspeci{\textit{uso}}$ and $\aspeci{\textit{so}}$;
  but this then leads to a contradiction with \ref{EPacc:1}, \ref{EPacc:1:b}, because the result extracted from $\aspeci{\textit{uso}}$ cannot be a solution,
    as $\aspeci{\textit{uso}}$ is unsolvable.
We give two examples of such pairs of unsolvable and solvable specifications that are data-obliviously the same 
  below in Example~\ref{exa:charts:not:expressible} and Example~\ref{exa:charts:not:expressible:a},
where the specifications in Example~\ref{exa:charts:not:expressible:a}
  are solvable counterparts of unsolvable specifications in Example~\ref{exa:charts:not:expressible}.
            
Such pairs of unsolvable and solvable process graphs that are `data-obliviously' the same are not only artificial counterexamples,
  but can also arise from bisimulation charts that link bisimilar expressible graphs, see Example~4.1 in \cite{grab:fokk:2020:lics,grab:fokk:2020:lics:arxiv}.
    (The such a solvable bisimulation chart is paired with a chart that arises from it by relabeling each transition by a different action.)

\begin{exa}[not expressible process graphs, unsolvable recursive specifications]\label{exa:charts:not:expressible}
  As mentioned in Sect.~\ref{motivation:proof:strategy} on page~\pageref{quotation:unsolvable:specs}, 
    Milner noticed that guarded systems of recursion equations cannot always be solved by star expressions
    under the process semantics:
    \begin{quotation}
      ``[In] contrast with the case for languages---an arbitrary system of guarded equations 
        in [star]-behaviours cannot in general be solved in star expressions'' \cite{miln:1984}.
    \end{quotation}   
  In fact, Milner showed in \cite{miln:1984}
    that the linear specification $\fap{\aspec}{G_1}$ defined by the process graph $G_1$  below 
    does not have a star expression solution modulo bisimilarity. 
  He conjectured that that holds also for the easier specification $\fap{\aspec}{G_2}$ defined by the process graph $G_2$ below.
    That that is indeed the case was confirmed and proved later by Bosscher \cite{boss:1997}.    
  \begin{center}\label{fig:milner-bosscher-expressible}
    \begin{tikzpicture}
  
%
\matrix[anchor=center,row sep=0.924cm,column sep=0.75cm,
        every node/.style={draw,very thick,circle,minimum width=2.5pt,fill,inner sep=0pt,outer sep=2pt}] at (0,0) {
                   &                  &  \node(C-1-2){};
  \\
  \node(C-1-1){};  &                  &  \node[draw=none,fill=none](helper){};               
  \\
                   &                  &  \node(C-1-3){};  
  \\
};
\draw[<-,very thick,>=latex,color=chocolate](C-1-1) -- ++ (180:0.5cm);  

\path(C-1-1) ++ (-0.325cm,0.5cm) node{\Large $\iap{G}{1}$};
\path(C-1-1) ++ (-0.2cm,-0.4cm) node{$X_1$};
\draw[->,bend right,distance=0.65cm] (C-1-1) to node[above]{$\aacti{2}$} (C-1-2); 
\draw[->,bend right,distance=0.65cm] (C-1-1) to node[left]{$\aacti{3}$}  (C-1-3);

\path(C-1-2) ++ (0cm,0.4cm) node{$X_2$};
\draw[->,bend right,distance=0.65cm]  (C-1-2) to node[above]{$\aacti{1}$} (C-1-1); 
\draw[->,bend left,distance=0.65cm]  (C-1-2) to node[right,xshift=-1pt]{$\aacti{3}$} (C-1-3);

\path(C-1-3) ++ (0cm,-0.4cm) node{$X_3$};
\draw[->,bend right,distance=0.65cm] (C-1-3) to node[left]{$\aacti{1}$}  ($(C-1-1)+(+0.15cm,-0.05cm)$);
\draw[->,bend left,distance=0.65cm]  (C-1-3) to node[right,xshift=-1.5pt]{$\aacti{2}$} (C-1-2);

\path (helper) ++ (1.25cm,0cm) node[right]{$
  \fap{\aspec}{G_1} = 
  \left\{\,
  \begin{aligned}
    X_1 & {} = a_2 \prod X_2  +  a_3 \prod X_3
    \\ 
    X_2 & {} = a_1 \prod X_1  +  a_3 \prod X_3
    \\
    X_3 & {} = a_1 \prod X_1  +  a_2 \prod X_2
  \end{aligned}
  \,\right.
  $};

%
\matrix[anchor=center,row sep=0cm,column sep=1.5cm,
        every node/.style={draw,very thick,circle,minimum width=2.5pt,fill,inner sep=0pt,outer sep=2pt}] at (9.5,0) {
  \node[color=chocolate](C-2-0){};  &   \node[color=chocolate](C-2-1){};
  \\
};
\draw[<-,very thick,>=latex,chocolate,shorten <=2pt](C-2-0) -- ++ (180:0.58cm);
\path(C-2-0) ++ (0cm,-0.4cm) node{$Y_1$};
\path(C-2-0) ++ (-0.325cm,0.5cm) node{\Large $\iap{G}{2}$};

\path (C-2-1) ++ (0cm,-0.45cm) node{$Y_2$};

\draw[thick,chocolate] (C-2-1) circle (0.12cm);
\draw[thick,chocolate] (C-2-0) circle (0.12cm);
\draw[->,bend left,distance=0.65cm,shorten <=2pt,shorten >=2pt] (C-2-0) to node[above]{$\aact$} (C-2-1); 
\draw[->,bend left,distance=0.65cm,shorten <=2pt,shorten >=2pt] (C-2-1) to node[below]{$\bact$} (C-2-0); 


\path (C-2-1) ++ (0.85cm,0cm) node[right]{$
  \fap{\aspec}{G_2} = 
  \left\{\,
  \begin{aligned}
    Y_1 & {} = 1 + a \prod Y_2
    \\ 
    Y_2 & {} = 1 + b \prod Y_1
  \end{aligned}
  \,\right.
  $};

\end{tikzpicture}
  \end{center}
  Here the start vertex of a process graph is again highlighted by a brown arrow~\picarrowstart,
  and a vertex $\avert$ with immediate termination
  is emphasized in brown as \pictermvert\ including a boldface ring.
  
  $G_1$ and $G_2$ are finite process graphs that are not bisimilar to the process interpretation of any star expression.
  In this sense, $G_1$ and $G_2$ are \emph{not} expressible by a regular expression (under the process semantics). 
  This sets the process semantics of regular expressions apart from the standard language semantics,
  with respect to which every language that is accepted by a finite-state automaton is the interpretation of some regular expression.
  
  Furthermore it is easy to see that both of $G_1$ and $G_2$ do not satisfy \LEE.
    Namely, both of these process graphs do not contain loop subcharts, but each of them represents an infinite behavior.
    Therefore the loop elimination procedure stops immediately on either of them, unsuccessfully.
\end{exa}

\begin{exa}\label{exa:charts:not:expressible:a}
  While the specifications $\fap{\aspec}{G_1}$ and $\fap{\aspec}{G_2}$  in Example~\ref{exa:charts:not:expressible}
  are not solvable by star expressions in the process semantics,
  this situation changes drastically if all actions in $\fap{\aspec}{G_1}$ and $\fap{\aspec}{G_2}$
    are replaced by a single action $\aact$.
  Then we obtain the following process graphs $G_1^{(a)}$ and $G_2^{(a)}$ with appertaining specifications   
    $\fap{\aspec}{G_1^{(a)}}$ and $\fap{\aspec}{G_2^{(a)}}$:
  \begin{center}\label{fig:milner-bosscher-expressible-a}
    \begin{tikzpicture}
  
%
\matrix[anchor=center,row sep=0.924cm,column sep=0.75cm,
        every node/.style={draw,very thick,circle,minimum width=2.5pt,fill,inner sep=0pt,outer sep=2pt}] at (0,0) {
                   &                  &  \node(C-1-2){};
  \\
  \node(C-1-1){};  &                  &  \node[draw=none,fill=none](helper){};               
  \\
                   &                  &  \node(C-1-3){};  
  \\
};
\draw[<-,very thick,>=latex,color=chocolate](C-1-1) -- ++ (180:0.5cm);  

\path(C-1-1) ++ (-0.325cm,0.5cm) node{\Large $\bpap{G}{1}{(a)}$};
\path(C-1-1) ++ (-0.2cm,-0.4cm) node{$X_1$};
\draw[->,bend right,distance=0.65cm] (C-1-1) to node[above]{$\aact$} (C-1-2); 
\draw[->,bend right,distance=0.65cm] (C-1-1) to node[left]{$\aact$}  (C-1-3);

\path(C-1-2) ++ (0cm,0.4cm) node{$X_2$};
\draw[->,bend right,distance=0.65cm]  (C-1-2) to node[above]{$\aact$} (C-1-1); 
\draw[->,bend left,distance=0.65cm]  (C-1-2) to node[right,xshift=-1pt]{$\aact$} (C-1-3);

\path(C-1-3) ++ (0cm,-0.4cm) node{$X_3$};
\draw[->,bend right,distance=0.65cm] (C-1-3) to node[left]{$\aact$}  ($(C-1-1)+(+0.15cm,-0.05cm)$);
\draw[->,bend left,distance=0.65cm]  (C-1-3) to node[right,xshift=-1.5pt]{$\aact$} (C-1-2);

\path (helper) ++ (1cm,0cm) node[right]{$
  \fap{\aspec}{G_1^{(a)}} = 
  \left\{\,
  \begin{aligned}
    X_1 & {} = a \prod X_2  +  a \prod X_3
    \\ 
    X_2 & {} = a \prod X_1  +  a \prod X_3
    \\
    X_3 & {} = a \prod X_1  +  a \prod X_2
  \end{aligned}
  \,\right.
  $};

%
\matrix[anchor=center,row sep=0cm,column sep=1.5cm,
        every node/.style={draw,very thick,circle,minimum width=2.5pt,fill,inner sep=0pt,outer sep=2pt}] at (9.5,0) {
  \node[color=chocolate](C-2-0){};  &   \node[color=chocolate](C-2-1){};
  \\
};
\draw[<-,very thick,>=latex,chocolate,shorten <=2pt](C-2-0) -- ++ (180:0.58cm);
\path(C-2-0) ++ (0cm,-0.4cm) node{$Y_1$};
\path(C-2-0) ++ (-0.325cm,0.5cm) node{\Large $\bpap{G}{2}{(a)}$};

\path (C-2-1) ++ (0cm,-0.45cm) node{$Y_2$};

\draw[thick,chocolate] (C-2-1) circle (0.12cm);
\draw[thick,chocolate] (C-2-0) circle (0.12cm);
\draw[->,bend left,distance=0.65cm,shorten <=2pt,shorten >=2pt] (C-2-0) to node[above]{$a$} (C-2-1); 
\draw[->,bend left,distance=0.65cm,shorten <=2pt,shorten >=2pt] (C-2-1) to node[below]{$a$} (C-2-0); 


\path (C-2-1) ++ (0.65cm,0cm) node[right]{$
  \fap{\aspec}{G_2^{(a)}} = 
  \left\{\,
  \begin{aligned}
    Y_1 & {} = 1 + a \prod Y_2
    \\ 
    Y_2 & {} = 1 + a \prod Y_1
  \end{aligned}
  \,\right.
  $};

\end{tikzpicture}
  \end{center}
  These specifications are solvable
    by setting $X_1 \defdby X_2 \defdby X_3 \defdby a^* \prod 0$ in $\fap{\aspec}{G_1^{(a)}}$,
    and by setting $Y_1 \defdby Y_2 \defdby a^*$ in $\fap{\aspec}{G_2^{(a)}}$,
  because the following identities are provable in Milner's system $\milnersys$ (which in any case is sound for the process semantics):
  \begin{align*}
    a^* \prod 0 
      & \begin{aligned}[t]
            & \milnersyseq
          (1 + a \prod a^*) \prod 0 
              \milnersyseq
           1 \prod 0 + (a \prod a^*) \prod 0
              \milnersyseq
           0 + a \prod (a^* \prod 0)  \punc{,}    
          \\  
            & \milnersyseq
          a \prod (a^* \prod 0) 
              \milnersyseq
          a \prod (a^* \prod 0) + a \prod (a^* \prod 0) \punc{.}    
        \end{aligned}
    &
    a^* 
      & {} \milnersyseq
        1 + a \prod a^*
  \end{align*}  
  From these \provablein{\milnersys} identities the \provablein{\milnersys} correctness conditions for these settings
    follow directly, for example for $X_1$ in $\fap{\aspec}{G_1^{(a)}}$,
    and for $Y_1$ in $\fap{\aspec}{G_2^{(a)}}$:
  \begin{align*}
    X_1 & {} 
      \synteq
        a^* \prod 0
      \milnersyseq
        a \prod (a^* \prod 0) + a \prod (a^* \prod 0) 
      \synteq
        a \prod X_2 + a \prod X_3  \punc{,} 
    &
    Y_1 & {}
      \synteq
        a^*
      \milnersyseq
        1 + a \prod a^*
      \synteq
        1 + a \prod Y_2 \punc{.}
  \end{align*}
  These \provablein{\milnersys} identities show, together with the analogous ones for $X_2$, $X_3$, and $Y_2$
    that the settings
      $X_1 \defdby X_2 \defdby X_3 \defdby a^* \prod 0$ 
        and 
      $Y_1 \defdby Y_2 \defdby a^*$ 
        define solutions of $\fap{\aspec}{G_1^{(a)}}$, and $\fap{\aspec}{G_2^{(a)}}$, respectively,
  because \provablein{\milnersys} identities are also identities with respect to the process semantics (as Milner's system is sound for the process semantics).
    
  The crucial reason why we have obtained solvable specifications $\fap{\aspec}{G_1^{(a)}}$ and $\fap{\aspec}{G_2^{(a)}}$
    from unsolvable specifications $\fap{\aspec}{G_1}$ and $\fap{\aspec}{G_2}$, respectively,
  is that the underlying process graphs $\fap{\aspec}{G_1^{(a)}}$ and $\fap{\aspec}{G_2^{(a)}}$ 
    are not bisimulation collapses, in contrast with the process graphs $G_1$ and $G_2$ from which they arose by renaming all actions to a single one. 
  Indeed the bisimulation collapses $G_{10}$ of $G_1$, and $G_{20}$ of $G_2$ 
    are of particularly easy form that are obviously solvable by the star expressions $a^* \prod 0$, and $a^*$, respectively.
  These solutions can be transferred backwards over the functional bisimulations
    $G_1^{(a)} \funbisim G_{10}^{(a)}$, and $G_2^{(a)} \funbisim G_{20}^{(a)}$
  (indicated by magenta links in the pictures below)
  according to Lem.~\ref{lem:preservation:sols}, \ref{it:conv:funbisims:lem:preservation:sols}.\vspace{-1.5ex}
  \begin{center}\label{fig:milner-bosscher-expressible-a-collapse}
    \begin{tikzpicture}
  
%
\matrix[anchor=center,row sep=0.924cm,column sep=0.75cm,
        every node/.style={draw,very thick,circle,minimum width=2.5pt,fill,inner sep=0pt,outer sep=2pt}] at (0,0) {
                   &                  &  \node(C-1-2){};
  \\
  \node(C-1-1){};  &                  &  \node[draw=none,fill=none](helper){};  &  &  & \node(C-10){};           
  \\
                   &                  &  \node(C-1-3){};  
  \\
};
\draw[<-,very thick,>=latex,color=chocolate](C-1-1) -- ++ (180:0.5cm);  
\draw[<-,very thick,>=latex,color=chocolate](C-10) -- ++ (90:0.5cm);  

\path(C-1-1) ++ (-0.325cm,0.5cm) node{\Large $\bpap{G}{1}{(a)}$};
\path(C-1-1) ++ (-0.2cm,-0.4cm) node{$X_1$};
\draw[->,bend right,distance=0.65cm] (C-1-1) to node[above]{$\aact$} (C-1-2); 
\draw[->,bend right,distance=0.65cm] (C-1-1) to node[left]{$\aact$}  (C-1-3);

\path(C-1-2) ++ (0cm,0.4cm) node{$X_2$};
\draw[->,bend right,distance=0.65cm]  (C-1-2) to node[above]{$\aact$} (C-1-1); 
\draw[->,bend left,distance=0.65cm]  (C-1-2) to node[right,xshift=-1pt]{$\aact$} (C-1-3);

\path(C-1-3) ++ (0cm,-0.4cm) node{$X_3$};
\draw[->,bend right,distance=0.65cm] (C-1-3) to node[left]{$\aact$}  ($(C-1-1)+(+0.15cm,-0.05cm)$);
\draw[->,bend left,distance=0.65cm]  (C-1-3) to node[right,xshift=-1.5pt]{$\aact$} (C-1-2);

\path(C-10) ++ (0cm,-0.4cm) node{$X$};
\path(C-10) ++ (0.55cm,0.5cm) node{\Large $\bpap{G}{10}{(a)}$};

\draw[->,out=-45,in=225,distance=1.75cm,shorten <=2pt,shorten >=2pt] (C-10) to node[below]{$\aact$} (C-10);

\path (C-10) ++ (2cm,-0.3cm) node{\parbox{\widthof{provable~}}
                                      {\hspace*{\fill}$a^* \prod 0$ is\hspace*{\fill}\mbox{}%
                                         \\[-0.25ex]
                                       \hspace*{\fill}provable\hspace*{\fill}\mbox{}%
                                         \\[-0.25ex]
                                       \hspace*{\fill}\mbox{}solution\hspace*{\fill}\mbox{}}};

\path (C-10) ++ (0cm,-2cm) node{$
  \fap{\aspec}{G_{10}^{(a)}} = 
  \left\{\,
    X = a \prod X
  \,\right\} 
  $};

\draw[magenta,densely dashed,thick,shorten <=3pt,shorten >=3pt,out=80,in=120,distance=2.5cm] (C-1-1) to (C-10);
\draw[magenta,densely dashed,thick,shorten <=3pt,shorten >=3pt,out=-10,in=170,distance=1cm] (C-1-2) to (C-10);
\draw[magenta,densely dashed,thick,shorten <=3pt,shorten >=3pt,out=10,in=190,distance=1cm] (C-1-3) to (C-10);

%
\matrix[anchor=center,row sep=0cm,column sep=1.5cm,
        every node/.style={draw,very thick,circle,minimum width=2.5pt,fill,inner sep=0pt,outer sep=2pt}] at (8.75,0) {
  \node[color=chocolate](C-2-0){};  &   \node[color=chocolate](C-2-1){};  &[0.5cm] \node[chocolate](C-20){}; 
  \\
};
\draw[<-,very thick,>=latex,chocolate,shorten <=2pt](C-2-0) -- ++ (180:0.58cm);
\draw[<-,very thick,>=latex,chocolate,shorten <=2pt](C-20) -- ++ (90:0.58cm);
\path(C-2-0) ++ (0cm,-0.4cm) node{$Y_1$};
\path(C-2-0) ++ (-0.325cm,0.5cm) node{\Large $\bpap{G}{2}{(a)}$};

\path (C-2-1) ++ (0cm,-0.45cm) node{$Y_2$};

\path(C-20) ++ (0cm,-0.4cm) node{$Y$};
\path(C-20) ++ (0.55cm,0.5cm) node{\Large $\bpap{G}{20}{(a)}$};

\draw[thick,chocolate] (C-2-1) circle (0.12cm);
\draw[thick,chocolate] (C-2-0) circle (0.12cm);
\draw[thick,chocolate] (C-20) circle (0.12cm);

\draw[->,bend left,distance=0.65cm,shorten <=2pt,shorten >=2pt] (C-2-0) to node[above]{$\aact$} (C-2-1); 
\draw[->,bend left,distance=0.65cm,shorten <=2pt,shorten >=2pt] (C-2-1) to node[below]{$\aact$} (C-2-0); 

\draw[->,out=-45,in=225,distance=1.75cm,shorten <=2pt,shorten >=2pt] (C-20) to node[below]{$\aact$} (C-20);

\draw[magenta,densely dashed,thick,out=65,in=150,distance=1.5cm,shorten <=3pt,shorten >=3pt] (C-2-0) to (C-20);
\draw[magenta,densely dashed,thick,out=20,in=160,distance=0.5cm,shorten <=3pt,shorten >=3pt] (C-2-1) to (C-20);

\path (C-20) ++ (2cm,-0.3cm) node{\parbox{\widthof{provable solution}}
                                      {\hspace*{\fill}$a^*$ is\hspace*{\fill}\mbox{}%
                                       \\[-0.25ex]
                                       \hspace*{\fill}provable\hspace*{\fill}\mbox{}%
                                       \\[-0.25ex]
                                       \hspace*{\fill}solution\hspace*{\fill}\mbox{}}};

\path (C-20) ++ (0cm,-2cm) node{$
  \fap{\aspec}{G_{20}^{(a)}} = 
  \left\{\,
    Y = 1 + a \prod Y
  \,\right\} 
  $};

\end{tikzpicture}
  \end{center}
  in order to obtain the \provablein{\milnersys} solutions described above.
  
  This example witnesses the result from \cite{grab:fokk:2020:lics,grab:fokk:2020:lics:arxiv} 
    that a process graph $G$ without \onetransitions\ is expressible by a \onefree\ star expression in the process semantics
      if the bisimulation collapse of $G$ has the property \LEE. 
  Here the bisimulation collapses $G_{10}^{(a)}$ and $G_{20}^{(a)}$ satisfy \LEE,
    although the process graphs from which they arise by collapse, $G_{1}^{(a)}$ and $G_{2}^{(a)}$, do not satisfy \LEE.
\end{exa}

\end{document}